\title{On the Descriptive~Complexity of Color Coding} 
\author{Max Bannach \and Till Tantau}
\date{%
  Institute for Theoretical Computer Science,\\
  Universit\"at zu L\"ubeck\\
  L\"ubeck, Germany \\
  \texttt{\{bannach,tantau\}@tcs.uni-luebeck.de}
}
\newenvironment{notation}[3][Strong-qr:]%
{%
  \begin{notation*}[{\normalfont#2}]
    \hfill \textbf{{#1} #3}
    \begin{list}%
    {}%
    {%
      \def\labelstyle{\bfseries}
      \setlength{\topsep}{0pt}%
      \settowidth{\labelwidth}{\labelstyle Semantics}%
      \setlength{\leftmargin}{\labelwidth}%
      \addtolength{\leftmargin}{\labelsep}%
      \setlength{\itemsep}{0pt}%
      \setlength{\parsep}{0pt}%
    }%
  }%
  {%
  \end{list}
  \end{notation*}%
}
\newcommand\Class[1]{%
  \mathchoice%
  {\text{\normalfont\small$\mathrm{#1}$}}%
  {\text{\normalfont\small$\mathrm{#1}$}}%
  {\text{\normalfont$\mathrm{#1}$}}%
  {\text{\normalfont$\mathrm{#1}$}}%
}
\newcommand\Para{\mathrm{para\text-}}
\newcommand{\struc}{\Lang{struc}}
\newcommand{\Lang}[1]{\text{\normalfont\textsc{#1}}}
\newcommand{\PLang}[2][]{\mathrm{p}_{#1}\Lang{-#2}}
\newcommand{\existseqk}[1]{\exists^{|.|=k}\kern-1pt#1\,}
\newcommand{\existslek}[1]{\exists^{|.|\le k}\kern-1pt#1\,}
\newcommand{\op}[1]{\operatorname{\textsc{#1}}}
\newtheorem{theorem}{Theorem}[section]
\newtheorem{lemma}[theorem]{Lemma}
\theoremstyle{plain}
\newtheorem{example}[theorem]{Example}
\newtheorem{definition}[theorem]{Definition}
\newtheorem{fact}[theorem]{Fact}
\newtheorem*{claim*}{Claim}
\newtheorem*{corollary*}{Corollary}
\theoremstyle{definition}
\newtheorem*{notation*}{Notation}
\tikzset{graph/.style = {
    node/.style = {circle, minimum size=5mm, inner sep=1pt, semithick, draw, font=\small},
    thin,
    > = {Stealth[round,sep]}
  }
}
\begin{document}

\maketitle

\begin{abstract}
  Color coding is an algorithmic technique used in parameterized
  complexity theory to detect ``small'' structures inside graphs. The
  idea is to derandomize algorithms that first randomly
  color a graph and then search for an easily-detectable, small color
  pattern. We transfer color coding to the world of descriptive
  complexity theory by characterizing -- purely in terms of the
  syntactic structure of describing formulas -- when the powerful
  second-order quantifiers representing a random coloring can be
  replaced by equivalent, simple first-order formulas. Building on
  this result, we identify syntactic properties of first-order
  quantifiers that can be eliminated from formulas describing
  parameterized problems. The result applies to many packing and
  embedding problems, but also to the long path problem. Together with a
  new result on the parameterized complexity of formula families
  involving only a fixed number of variables, we get that many
  problems lie in \textsc{fpt} just because of the way they are
  commonly described using logical formulas.     
\end{abstract}


\allowdisplaybreaks

\section{Introduction}

Descriptive complexity provides a powerful link between logic
and complexity theory: We use a logical formula to
\emph{describe} a problem and can then infer the computational
complexity of the problem just from the \emph{syntactic structure} of
the formula. As a striking example,  Fagin's Theorem~\cite{Fagin1974}
tells us that 3-colorability lies in $\Class{NP}$ just because its
describing formula (``there exist three colors such that all adjacent
vertex pairs have different colors'')  
is an existential second-order formula.
In the context of fixed-parameter tractability theory, methods from descriptive
complexity are also used a lot -- but commonly to show that problems are 
\emph{difficult.} For instance, the A- and W-hierarchies are
defined in logical terms~\cite{FlumG2006}, but their hard problems are
presumably ``beyond'' the class $\Class{FPT}$ of
fixed-parameter tractable problems.

The methods of descriptive complexity are only rarely used to show
that problems are \emph{in} $\Class{FPT}$. 
More precisely, the syntactic structure of the natural logical
descriptions of standard parameterized problems found in textbooks are
not known to imply that the problems lie in $\Class{FPT}$ -- even though this
is known to be the case for many of them.
To appreciate the underlying difficulties, consider the following three
parameterized problems: $\PLang{matching}$,
$\PLang{triangle-packing}$, and $\PLang{clique}$. In each case, we are
given an undirected graph as input and a number $k$ and we are then
asked whether the graph contains $k$ vertex-disjoint edges (a size-$k$
matching), $k$~vertex-disjoint triangles, or a clique of size~$k$,
respectively. The problems 
are known to have widely different complexities (maximal matchings can
actually be found in polynomial time, triangle packing lies at least
in $\Class{FPT}$, while finding cliques is $\Class W[1]$-complete) but
\emph{very} similar logical descriptions: 
\begin{align}
  \alpha_k &= \exists x_1 \cdots \exists x_{2k} \bigl( \textstyle
              \bigwedge_{i\neq j} x_i \neq x_j \land \bigwedge_{i=1}^k
             E x_{2i-1} x_{2i} \bigr), \label{eq:alpha}\\
  \beta_k &= \exists x_1 \cdots \exists x_{3k} \bigl( \textstyle
              \bigwedge_{i\neq j} x_i \neq x_j \land \bigwedge_{i=1}^k
             (E x_{3i-2} x_{3i-1} \land E x_{3i-2} x_{3i} \land E x_{3i-1} x_{3i}) \bigr), \label{eq:beta}\\
  \gamma_k &= \exists x_1 \cdots \exists x_{k} \bigl( \textstyle
             \bigwedge_{i\neq j} x_i \neq x_j \land \bigwedge_{i \neq
             j} E x_i x_j \bigr). \label{eq:gamma}
\end{align}
The family $(\alpha_k)_{k\in\mathbb N}$ of formulas is clearly a natural
``slicewise'' description of the matching problem: A graph $\mathcal
G$ has a size-$k$ matching if, and only if, $\mathcal G \models
\alpha_k$. The families $(\beta_k)_{k\in\mathbb N}$ and $(\gamma_k)_{k\in\mathbb N}$ are natural
parameterized descriptions of the triangle packing and the clique
problems, respectively. 
Well-known results on the descriptive complexity of parameterized
problems allow us to infer~\cite{FlumG2006} from the above descriptions that all
three problems lie in $\Class W[1]$, but offer no hint why the first
two problems actually lie in the class $\Class{FPT}$ -- syntactically
the clique problem arguably ``looks like the easiest 
one'' when in fact it is semantically the most difficult one. The results of
this paper will remedy this: We will show that the syntactic
structures of the formulas $\alpha_k$ and $\beta_k$ imply membership
of $\PLang{matching}$ and $\PLang{triangle-packing}$ in $\Class{FPT}$.

The road to deriving the computational complexity of parameterized
problems just from the syntactic properties of slicewise first-order
descriptions involves three major steps: First, a 
characterization of when the color coding technique is applicable in
terms of syntactic properties of second-order quantifiers. Second, an
exploration of how these results on second-order formulas apply to
first-order formulas, leading to the notion of \emph{strong} and
\emph{weak} quantifiers and to an elimination theorem for weak
quantifiers. Third, we add a new characterization to the body of known
characterizations of how classes like $\Class{FPT}$ can be characterized
in a slicewise fashion by logical formulas.

\paragraph*{Our Contributions I: A Syntactic Characterization of Color Coding.}

The hard triangle packing problem from above becomes almost trivial when
we just wish to check whether a \emph{vertex-colored} graph contains a
red triangle, a green triangle, a blue triangle, a yellow triangle,
and so on for $k$ different colors. The ingenious idea behind the color coding 
technique of Alon, Yuster, and Zwick~\cite{AlonYZ95} is to reduce the
original problem to the much simpler colored version by simply 
\emph{randomly coloring the graph.} Of course, even if there are
$k$~disjoint triangles, we will most likely \emph{not} color them
monochromatically and differently, \emph{but} the probability of 
``getting lucky'' is nonzero and depends only on the
parameter~$k$. Even better, Alon et~al.\ point out that one can 
\emph{derandomize the coloring easily by using universal hash
  functions to color each vertex with its hash value.}

Applying this idea in the setting of descriptive
complexity was recently pioneered by Chen et~al.~\cite{ChenFH2017}. Transferred to the triangle
packing problem, their argument  would roughly be: ``Testing  for
each color~$i$ whether there is a monochromatic triangle of color~$i$ can be
done in first-order logic using something like  
$\bigwedge_{i=1}^k \exists x \exists y \exists z 
  (E x y \land E y z \land E x z \land C_i x \land C_i y \land C_i z)$.
Next,
instead of testing  whether $x$ has color~$i$ using the formula $C_i
x$, we can test whether $x$ gets hashed to $i$ by a hash
function. Finally, since computing appropriate universal hash 
functions only involves addition and multiplication, we can express
the derandomized algorithm using an arithmetic first-order formula of low
quantifier rank.'' Phrased differently, Chen et~al.\ would argue that
$\bigwedge_{i=1}^k \exists x \exists y \exists z 
(E x y \land E y z \land E x z \land C_i x \land C_i y \land C_i z)$
together with the requirement that the $C_i$ are pairwise disjoint
is (ignoring some details) equivalent to $
\delta_k = \exists p \exists q \textstyle
\bigwedge_{i=1}^k \exists x \exists y 
\exists z (E x y \land E y z \land E x z \land \op{hash}_{k}(x,p,q)=i \land
\op{hash}_{k}(y,p,q)=i \land \op{hash}_{k}(z,p,q)=i)$,
where $\op{hash}_{k}(x,p,q)=i$ is a formula that is true when ``$x$ is
hashed to~$i$ by a member of a universal family of hash functions
indexed by $q$ and~$p$.'' 

The family $(\delta_k)_{k\in\mathbb N}$ may seem rather technical and, indeed, its
importance becomes visible only in conjunction with another result by
Chen et al.~\cite{ChenFH2017}: They show that a parameterized problem
lies in $\Para\Class{AC}^0$, one of the smallest ``sensible''
subclasses of $\Class{FPT}$, if it can be described by a family
$(\phi_k)_{k \in \mathbb N}$ of $\Class{FO}[+,\times]$ formulas of
\emph{bounded quantifier rank} such that the finite models of $\phi_k$
are exactly the elements of the $k$th slice of the problem. Since the
triangle packing problem can be described in this way via the family
$(\delta_k)_{k\in\mathbb N}$ of formulas, all of which have a
quantifier rank $5$ plus the constant number of quantifiers used to
express the arithmetics in the formulas $\op{hash}_k(x,p,q)=i$, we get
$\PLang{triangle-packing} \in \Class{FPT}$. 

Clearly, this beautiful idea cannot work in all situations: If it
also worked for the formula mentioned earlier expressing
3-colorability, 3-colorability would be first-order expressible, which
is known to be impossible. 
Our first main contribution is a \emph{syntactic characterization of
  when the color coding technique is applicable,} that is, of why
color coding works for triangle packing but not for 3-colorability:
For triangle packing, the colors $C_i$ are applied to variables only 
inside \emph{existential scopes} (``$\exists x \exists y \exists z$'')
while for 3-colorability the colors $R$, $G$, and $B$ are also applied
to variables inside universal scopes (``for all adjacent
vertices''). In general, see Theorem~\ref{thm-cc-intro} for the
details, we show that a second-order quantification over an arbitrary number of
disjoint colors $C_i$ can be replaced by a fixed number of first-order 
quantifiers whenever none of the $C_i$ is used in a universal scope.

\paragraph*{Our Contributions II: New First-Order Quantifier Elimination Rules.}

The ``purpose'' of the colors~$C_i$ in the formulas $\bigwedge_{i=1}^k
\exists x \exists y \exists z (E x y \land E y z \land E x z \land C_i x
\land C_i y \land C_i z)$ is not that the three vertices of a
triangle get a particular color, but just that they get a color
\emph{different} from the color of all other triangles. Indeed, our
``real'' objective in these formulas is to ensure that the vertices of
a triangle are \emph{distinct} from the vertices in the other
triangles -- and giving vertices different colors is ``just a means''
of ensuring this. 

In our second main contribution we explore this idea further: If the
main (indeed, the only) use of colors in the context of color
coding is to ensure that certain vertices are different, let us do
away with colors and instead focus on the notion of
\emph{distinctness.} To better explain this idea, consider the
following family, also describing triangle packing, where the only change is that we now require 
(a bit superfluously) that even the vertices inside a triangle get
different colors: $\bigwedge_{j=1}^k \exists x \exists y \exists z (E
x y \land E y z \land E x z \land C_{3j-2} x \land C_{3j-1} y \land C_{3j}
z)$. 
Observe that each $C_i$ is now applied to exactly one variable ($x$,
$y$, or $z$ in one of the many literals) and the only ``effect'' that
all these applications have is to ensure that the variables are
different. In particular, the formula is equivalent to
\begin{align}
  \exists x_1 \cdots \exists x_{3k}\textstyle \bigwedge_{i\neq j} x_i \neq x_j
  \land \bigwedge_{j=1}^k \exists x \exists y \exists z (&E x y \land
                                                           E y z
                                                           \land E x z \land{}\notag\\[-2mm]
  &x_{3j-2} = x \land x_{3j-1} =y \land x_{3j} =z) \label{eq:tri}
\end{align}
and these formulas are clearly equivalent to the almost identical
formulas from~\eqref{eq:beta}.

In a sense, in \eqref{eq:tri} the many existential quantifiers
$\exists x_i$ and the many $x_i \neq x_j$ literals come ``for free'' from
the color coding technique, while $\exists x$, $\exists y$, and
$\exists z$ have nothing to do with color coding.  
Our key observation is a syntactic property that tells us
whether a quantifier comes ``for free'' in this way (we will call it
\emph{weak}) or not (we will call it \emph{strong}):
Definition~\ref{def-weak} states (essentially) that weak quantifiers have the form
$\exists x (\phi)$ such that $x$ is not free in any universal scope of
$\phi$ and $x$ is used in at most one literal that is not of the form
$x \neq y$. To make weak quantifiers easier to 
spot, we mark their bound variables with a dot (note that this is a
``syntactic hint'' without semantic meaning). Formulas~\eqref{eq:tri}
now read $\textstyle\exists \dot x_1 \cdots\exists \dot x_{3k}
\penalty100 \bigwedge_{i\neq j} \dot x_i \neq \dot x_j \land\penalty0
\bigwedge_{j=1}^k \exists x \exists y \exists z (E x y  \land E x z  \land
E y z \land \penalty0  \dot x_{3j-2} = x \land \penalty0 \dot x_{3j-1}
=y \land \penalty0  \dot x_{3j} =z)$. Observe that $x$, $y$, and $z$ are
not weak since each is used in three literals that are not
inequalities. 

We show in Theorem~\ref{theorem-strong-character} that each $\phi$
is equivalent to a $\phi'$ whose quantifier rank depends only on the
\emph{strong quantifier rank} 
of~$\phi$ (meaning that we ignore the weak quantifiers) and whose
number of variables depends only on the number of strong
variables in~$\phi'$. For instance, the formulas from \eqref{eq:tri}
all have strong quantifier rank~$3$ and, thus, the triangle packing
problem can be described by a family of constant (normal) quantifier 
rank. Applying Chen et al.'s characterization yields membership
in~$\Para\Class{AC}^0$.

As a more complex example, let us sketch a ``purely syntactic'' proof of the
result \cite{BannachT2016,ChenM2015} that the embedding problem for
graphs~$H$ of tree depth at most~$d$ lies in $\Para\Class{AC}^0$ for
each~$d$. Once more, we construct a family $(\phi_H)$ of formulas of
constant strong quantifier rank that describes the problem. For a
graph $H$ and a rooted tree~$T$ of depth~$d$ such that $H$ is contained in
$T$'s transitive closure (this is the definition of ``$H$ has tree
depth~$d$''), let $c_1$ be the root of~$T$ and let
$\operatorname{children}(c)$ be the children of~$c$ in~$T$. Then the
following formula of strong quantifier rank~$d$ describes that $H$ can
be embedded into a structure:
\begin{align*}
  \exists \dot x_1 \cdots \exists \dot x_{|H|} \bigl(
  \textstyle\bigwedge_{i\neq j} \dot x_i \neq \dot x_j \land{}
  &\exists n_1 (n_1 = \dot x_{c_1} \land 
    \textstyle\bigwedge_{c_2 \in \operatorname{children}(c_1)}\exists n_2 (n_2 = \dot x_{c_2} \land {}\\
  \textstyle\bigwedge_{c_3 \in \operatorname{children}(c_2)}
  &\exists n_3 (n_3 = \dot x_{c_3} \land
    \textstyle\bigwedge_{c_4 \in \operatorname{children}(c_3)}\exists n_4 (n_4 = \dot x_{c_4} \land  \dots \\
  \textstyle\bigwedge_{c_d \in \operatorname{children}(c_{d-1})}
  &\exists n_d (n_d = \dot x_{c_d} \land \textstyle\bigwedge_{i,j \in
    \{1,\dots,d\}:(c_i,c_j) \in E(H)} En_in_j)\dots)))\bigr).
\end{align*}

\paragraph*{Our Contributions III: Slicewise Descriptions and Variable Set Sizes.}

Our third contribution is a new result in the same vein as the already
repeatedly mentioned result of Chen et al.~\cite{ChenFH2017}:
Theorem~\ref{thm-ac0up} states that a parameterized problem can be
described slicewise by a family 
$(\phi_k)_{k\in\mathbb  N}$ of arithmetic first-order formulas that
all use only a \emph{bounded number of variables} if, and only if, the
problem lies 
in $\Para\Class{AC}^{0\uparrow}$ -- a class that has been encountered
repeatedly in the literature~\cite{BannachST2015,BannachT2016,DasER2018,PilipczukST18}, but for which no
characterization was known. It contains all parameterized problems that
can be decided by $\Class{AC}$-circuits whose depth depends only on the parameter
and whose size is of the form $f(k)\cdot n^{c}$.

As an example, consider the problem of deciding whether a
graph contains a path of length~$k$  
(no vertex may be visited twice). It can be described (for odd~$k$) by: $
  \exists s \exists t \exists x ( Esx \land \exists \dot x_1 (\dot x_1 = x \land
                      \exists y ( Exy \land \exists \dot x_2 (\dot x_2 = y \land
                      \exists x ( Eyx \land \exists \dot x_3 (\dot x_3 = x \land
                      \exists y ( Exy \land \exists \dot x_4
                      (\penalty10 \dot x_4 = y \land
                      \cdots \land
                      \exists x (Eyx \land
                        x=t \land \exists \dot x_k (\dot x_k = x \land  { \bigwedge_{i\neq j} \dot x_i
                        \neq \dot x_j})\dots))))$.
Note that, now, the strong quantifier rank depends on~$k$ and, thus,
is not constant. However, there are now only four strong variables,
namely $s$, $t$, $x$, and $y$. By
Theorem~\ref{theorem-strong-character} we see that the 
above formulas are equivalent to a family of formulas with a bounded
number of variables and by Theorem~\ref{thm-ac0up} we see that $\PLang{long-path}
\in \Para\Class{AC}^{0\uparrow} \subseteq \Class{FPT}$.
These ideas also generalize easily and we give a purely syntactic
proof of the seminal result from the original color coding
paper~\cite{AlonYZ95} that the embedding problem for graphs of bounded
tree \emph{width} lies in~$\Class{FPT}$. The core observation -- which
unifies the results for tree width and depth -- is 
that for each graph with a given tree decomposition, the embedding
problem can be described by a formula whose strong nesting structure
mirrors the tree structure and whose strong variables mirror the bag
contents.

\paragraph*{Related Work.}

Flum and Grohe~\cite{FlumG2003} were the first to give
characterizations of $\Class{FPT}$ and of many subclasses in terms of
the syntactic properties of formulas describing
their members. Unfortunately, these syntactic properties do not hold for the
descriptions of parameterized problems found in the literature. For
instance, they show that $\Class{FPT}$ contains exactly the problems 
that can be described by families of $\Class{FO}[\op{lfp}]$-formulas
of bounded quantifier rank -- but actually describing problems like
$\PLang{vertex-cover}$ in this way is more or less hopeless and yields
little insights into the structure or complexity of the problem. We
believe that it is no coincidence that no applications  of these beautiful characterizations to concrete
problems could be found in the
literature -- at least prior to very recent work by Chen and
Flum~\cite{ChenF2018}, who study slicewise descriptions of problems on
structures of bounded tree depth, and the already cited
article of Chen et al.~\cite{ChenFH2017}, who \emph{do}
present a family of formulas that describe the vertex cover
problem. This family internally uses the color
coding technique and is thus closely related to our
results. The crucial difference is, however, that we identify
syntactic properties of logical formulas that imply that the color
coding technique can be applied. It then suffices to find a family
describing a given problem that meets the syntactic properties to
establish the complexity of the problem: there is no need to actually
construct the color-coding-based formulas -- indeed, there is not even
a need to understand how color coding works in order to decide whether
a quantifier is weak or strong.

\paragraph*{Organization of this Paper.}

In Section~\ref{section-describing} we first review some of the
existing work on the descriptive complexity of parameterized
problems. We add to this work in the form of the mentioned
characterization of the class $\Para\Class{AC}^{0\uparrow}$ in terms
of a bounded number of variables. Our main technical results are then
proved in Section~\ref{section-cc}, where 
we establish and prove the syntactic properties that formulas must
have in order for the color coding method to be applicable. In
Section~\ref{section-applications} we then apply the findings and show
how membership of different natural problems in $\Para\Class{AC}^0$
and $\Para\Class{AC}^{0\uparrow}$ (and, thus, in $\Class{FPT}$) can be derived
entirely from the syntactic structure of the formulas describing them.

\section{Describing Parameterized Problems}\label{section-describing}

A happy marriage of parameterized complexity and descriptive
complexity was first presented in~\cite{FlumG2003}. We first review
the most important definitions from~\cite{FlumG2003} and then prove a
new characterization, namely of the class
$\Para\Class{AC}^{0\uparrow}$ that contains all problems 
decidable by AC-circuits of parameter-dependent depth and
``$\Class{FPT}$-like'' size. Since the results and notions will be
useful later, but do not lie at the paper's heart, we
keep this section brief.

\paragraph*{Logical Terminology.}
We only consider first-order logic and use standard notations, with
the perhaps only deviations being that we write relational atoms briefly as $Exy$ 
instead of $E(x,y)$ and that the literal $x \neq y$ is an abbreviation
for $\neg\, x=y$ (recall that a \emph{literal} is an atom or a negated atom). 
Signatures, typically denoted~$\tau$, are always finite and may only contain relation 
symbols and constant symbols -- with one exception: The special unary
function symbol $\op{succ}$ may also be present in a
signature. Let us write $\op{succ}^k$ for the $k$-fold application of
$\op{succ}$, so $\op{succ}^3(x)$ is short for
$\op{succ}(\op{succ}(\penalty0\op{succ}(x)))$. It allows us
to specify any fixed non-negative integer without having to use
additional variables. An alternative is to dynamically add constant
symbols for numbers to signatures as done in~\cite{ChenFH2017}, but
we believe that following~\cite{FlumG2003} and adding the successor
function gives a leaner formal framework.
Let $\operatorname{arity}(\tau)$ be the maximum arity of relation
symbols in~$\tau$.

We denote by $\struc[\tau]$ the class of all
$\tau$-structures and by $\left|\mathcal A\right|$ the
universe of~$\mathcal A$.
As is often the case in descriptive complexity theory, we only consider
ordered structures in which the ternary predicates $\op{add}$ and
$\op{mult}$ are available and have their natural meaning. Formally, we say
$\tau$ is \emph{arithmetic} if it contains 
all of the predicates $<$, $\op{add}$, $\op{mult}$, the function
symbol $\op{succ}$, and the constant symbol~$0$ (it is included for
convenience only). In this case, $\struc[\tau]$ contains only those
$\mathcal A$ for which $<^{\mathcal A}$ is a linear ordering of
$|\mathcal A|$ and the other operations have their
natural meaning relative to $<^{\mathcal A}$ (with the successor of
the maximum element of the universe being itself and with $0$ being
the minimum with respect to~$<^{\mathcal A}$). We write $\phi
\in \Class{FO}[+,\times]$ when $\phi$ is a $\tau$-formula for an arithmetic~$\tau$.

A \emph{$\tau$-problem} is a set $Q \subseteq \struc[\tau]$
closed under isomorphisms. A $\tau$-formula $\phi$ \emph{describes} a
$\tau$-problem $Q$ if $Q = \{\mathcal A \in
\struc[\tau] \mid \mathcal A \models \phi\}$ and it \emph{describes
  $Q$ eventually} if $\phi$ describes a set $Q'$ that 
differs from $Q$ only on structures of a certain maximum size.  

\begin{lemma}\label{lemma-eventually}
  For each $\phi \in \Class{FO}[+,\times]$ that describes a $\tau$-problem $Q$ eventually,
  there are quantifier-free formulas $\alpha$ and $\beta$ such that
  $(\alpha \land \phi) \lor \beta$ describes~$Q$. 
\end{lemma}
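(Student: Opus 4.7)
The plan is to exploit the fact that the arithmetic signature $\tau$ canonically names every element of any $\tau$-structure via the closed terms $\op{succ}^i(0)$, so that any single finite structure can be captured by a quantifier-free sentence; this lets us patch $\phi$ on the finitely many small structures where it may disagree with $Q$. Let $M$ be the size bound promised by the hypothesis, i.e., for every $\mathcal A \in \struc[\tau]$ with $|\mathcal A| > M$ we have $\mathcal A \in Q$ iff $\mathcal A \models \phi$.

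First I would take $\alpha$ to be the quantifier-free sentence $\op{succ}^{M-1}(0) \neq \op{succ}^{M}(0)$. By the convention that the successor of the maximum element of the universe is itself, this sentence is true in $\mathcal A$ iff $|\mathcal A| > M$. Hence $\alpha \land \phi$ is false on all structures of size at most $M$ and matches $Q$ on all larger structures. To handle the remaining finitely many small structures, I would construct $\beta$ as a finite disjunction $\chi_1 \lor \dots \lor \chi_m$, where $\mathcal A_1,\dots,\mathcal A_m$ are representatives of the isomorphism classes of $\tau$-structures of size at most $M$ that lie in $Q$. For each such $\mathcal A_\ell$ of size $n$, the sentence $\chi_\ell$ is the conjunction of (i)~a size constraint $\op{succ}^{n-1}(0) = \op{succ}^{n}(0) \land \op{succ}^{n-2}(0) \neq \op{succ}^{n-1}(0)$ (with the obvious adjustment for $n=1$), (ii)~one equation $c = \op{succ}^{i}(0)$ for each constant symbol $c \in \tau$, where $i$ is the index of $c^{\mathcal A_\ell}$, and (iii)~for every non-arithmetic relation symbol $R \in \tau$ and every tuple of closed terms $(\op{succ}^{i_1}(0),\dots,\op{succ}^{i_r}(0))$ of the appropriate arity, the literal $R\,\op{succ}^{i_1}(0)\cdots\op{succ}^{i_r}(0)$ or its negation according to whether that tuple lies in $R^{\mathcal A_\ell}$. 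Every component is an atom or a negated atom over closed terms, so $\chi_\ell$, and therefore $\beta$, is quantifier-free.

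Correctness of $(\alpha \land \phi) \lor \beta$ is then immediate: on a structure $\mathcal A$ with $|\mathcal A| > M$ the sentence $\beta$ is false because each $\chi_\ell$ pins the universe size to at most $M$, while $\alpha$ is true, so the formula reduces to $\phi$ and agrees with $Q$ by hypothesis; on a structure of size at most $M$, $\alpha$ is false, so the formula reduces to $\beta$, which by construction holds exactly on the small members of $Q$. The only step with real content is producing the $\chi_\ell$ without quantifiers, and this is precisely why the paper insists that arithmetic signatures include $\op{succ}$ and $0$: with only $<$ at hand, one would need existential witnesses to refer to individual elements, whereas $\op{succ}^i(0)$ gives every element of every small structure a canonical closed-term name. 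Once this observation is available, the rest of the argument is routine bookkeeping.
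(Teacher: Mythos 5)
Your proof is correct and follows the paper's approach exactly: $\alpha$ is a quantifier-free size threshold built from $\op{succ}$, and $\beta$ hard-wires the finitely many small members of $Q$ by naming their elements with the closed terms $\op{succ}^i(0)$. If anything, your disjunctive formulation of $\beta$, in which each disjunct pins the universe size, is tighter than the paper's conjunction of guarded implications $\bigwedge_{s=1}^m\bigl((\text{size}{=}s)\to\dots\bigr)$: the latter is vacuously true on universes larger than $m$, so as literally written it would make $(\alpha\land\phi)\lor\beta$ identically true on all large structures regardless of $\phi$; your $\beta$ is false on large structures, which is what the top-level disjunction actually needs.
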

\begin{proof}
  The statement of the lemma would be quite simple if we did not
  require $\alpha$ and~$\beta$ to be quantifier-free: Without this
  requirement, all we need to do is to use $\alpha$ and $\beta$ to
  fix $\phi$ on the finitely many (up to isomorphisms) structures
  on which $\phi$ errs by ``hard-wiring'' which of these structures
  are elements of $Q$ and which are not. However, the natural way to
  do this ``hard-wiring'' of size-$m$ structures is to use
  $m$~quantifiers to bind all elements of the universe. This is
  exactly what we do \emph{not} wish to do. Rather, we use
  the successor function to refer to the elements of the
  universe without using any quantifiers. 

  In detail, let $m$ be a number such that for all $\mathcal A \in
  \struc[\tau]$ with $\|\mathcal A\| \ge m$ (that is, the size
  $\|\mathcal A\|$ of the universe $|\mathcal A|$ is at least~$m$) we
  have $\mathcal A 
  \models \phi$ if, and only if, $\mathcal A \in Q$. We set $\alpha$
  to $\op{universe}^{\ge m}$, a shorthand for $\op{succ}^{m-1} (0)
  \neq \op{succ}^m (0)$, which is true only for universes of size at
  least~$m$.  
  We define $\beta$ so that it is true exactly for all $\tau$-structures
  $\mathcal A \in Q$ of size at most~$m$ (for simplicity we assume
  that $E^2$ is the only relation symbol in~$\tau$):  
  \begin{align*}
    \textstyle\beta = \bigwedge_{s=1}^m 
    \smash{\Bigl(}&(\op{universe}^{\ge s} \land \neg\op{universe}^{\ge s+1})  \\
    &\to\textstyle                \bigvee_{\mathcal A \in Q, |\mathcal A|=\{0,\dots,s-1\}}\Bigl( 
                 \bigwedge_{u,v \in |\mathcal A|:(u,v) \in E^{\mathcal S}} E(\op{succ}^{u}(0), \op{succ}^{v}(0))
                 \land {}\\
                 &\textstyle \phantom{\to\textstyle   \bigvee_{\mathcal A \in Q, |\mathcal
                   A|=\{0,\dots,s-1\}}\smash{\Bigl(}}
                   \bigwedge_{u,v \in |\mathcal A|:(u,v) \notin E^{\mathcal S}} \neg E(\op{succ}^{u}(0), \op{succ}^{v}(0))
                 \smash{\Bigr)\Bigr)}.\qedhere
  \end{align*}
\end{proof}

We write $\operatorname{qr}(\phi)$ for the quantifier rank of a
formula and $\operatorname{bound}(\phi)$ for the set of its bound
variables. For instance, for $\phi = \bigl(\exists x \exists y
(Exz)\bigr) \lor \forall y (Px)$ we have $\operatorname{qr}(\phi) = 2$,
since the maximum nesting is caused by the two nested existential quantifiers, and  
$\operatorname{bound}(\phi) = \{x,y\}$.

Let us say that $\phi$ is \emph{in negation normal form} if negations
are applied only to atomic formulas.

\paragraph*{Describing Parameterized Problems.}

When switching from classical complexity theory to descriptive
complexity theory, the basic change is that ``words'' get replaced by
``finite structures.'' The same idea works for parameterized
complexity theory and, following Flum and Grohe~\cite{FlumG2003}, let
us define \emph{parameterized problems} as subsets $Q \subseteq
\struc[\tau] \times \mathbb N$ where $Q$ is closed under
isomorphisms. In a pair $(\mathcal A,k) \in \struc[\tau] \times \mathbb
N$ the number~$k$ is, of course, the \emph{parameter value} of the pair.
Flum and Grohe now propose to describe such problems
slicewise using formulas. Since this will be the only way in which we
describe problems, we will drop the ``slicewise'' in the phrasings and
just say that a computable family $(\phi_k)_{k \in \mathbb N}$ of 
formulas \emph{describes} a problem $Q \subseteq
\struc[\tau] \times \mathbb N$  
if for all $(\mathcal A,k) \in \struc[\tau] \times \mathbb N$ we have
$(\mathcal A,k) \in Q$ if, and only if, $\mathcal A \models
\phi_k$. One can also define a purely logical notion of
reductions between two problems $Q$ and~$Q'$, but we will need this
notion only inside the proof of Theorem~\ref{thm-hitting} and
postpone the definition till then.
%

For a class $\Phi$ of computable families $(\phi_k)_{k \in \mathbb N}$,
let us write $\Class X \Phi$ for the class of all parameterized problems
that are described by the members of~$\Phi$ (we chose ``$\Class X$'' to
represent a ``slicewise'' description, which seems to be in good
keeping with the usual use of $\Class X$ in other classes such as $\Class{XP}$
or~$\Class{XL}$). For instance, the mentioned characterization of
$\Class{FPT}$ in logical terms by Flum and Grohe can be written as
$\Class{FPT} = \Class X\{(\phi_k)_{k\in\mathbb N} \mid \phi_k \in
\Class{FO}[\op{lfp}], \textstyle \max_k \operatorname{qr} (\phi_k) < \infty\}$.

We remark that instead of describing parameterized problems using
families, a more standard and at the same time more flexible way is to
use reductions to model checking problems. Clearly, if a family
$(\phi_k)_{k \in \mathbb N}$ of $\mathcal 
L$-formulas describes $Q \subseteq \struc[\tau] \times \mathbb
N$, then there is a very simple parameterized reduction from~$Q$ to the model
checking problem $\PLang[\phi]{mc}(\mathcal L)$, where the input is a
pair $(\mathcal A, \operatorname{num}(\phi))$ and the question is
whether both $\mathcal A \models \phi$ and $\phi \in \mathcal L$ hold.
(The function $\operatorname{num}$ encodes mathematical objects like
$\phi$ or later tuples like $(\phi,\delta)$ as unique natural numbers.) The
reduction simply maps a pair $(\mathcal A,k)$ to $(\mathcal 
A,\operatorname{num}(\phi_k))$. Even more interestingly, without going
into any of the technical details, it is also not hard to see that as long as a
reduction is sufficiently simple, the reverse implication holds, that
is, we can replace a reduction to the model checking problem by a
family of formulas that describe the problem. We can, thus, use
whatever formalism seems more appropriate for the task at hand and  --
as we hope that this paper shows -- it is sometimes quite natural to write
down a family that describes a problem.

\paragraph*{Parameterized Circuits.} For our descriptive setting,
we need to slightly adapt the definition of the circuit classes
$\Para\Class{AC}^0$ and $\Para\Class{AC}^{0\uparrow}$ from
\cite{BannachST2015,BannachT2016}: Let us say that a problem $Q
\subseteq \struc[\tau]\times\mathbb N$ is in
$\Para\Class{AC}^0$, if there is a family $(C_{n,k})_{n,k \in \mathbb
  N}$ of AC-circuits (Boolean circuits with unbounded fan-in) such
that for all $(\mathcal A,k) \in
\struc[\tau]\times\mathbb N$ we have, first, $(\mathcal
A,k) \in Q$ if, and only if, $C_{|x|,k}(x)=1$ where $x$ is
a binary encoding of $\mathcal A$; second, the size of $C_{n,k}$ is at
most $f(k)\cdot n^{c}$ for some computable function~$f$; third, the
depth of $C_{n,k}$ is bounded by a constant; and, fourth, the circuit
family satisfies a \textsc{dlogtime}-uniformity condition. The class
$\Para\Class{AC}^{0\uparrow}$ 
is defined the same way, but the depth may be $g(k)$ for some
computable~$g$ instead of only~$O(1)$. The following fact and theorem
show how these two 
circuit classes are closely related to descriptions of parameterized
problems using formulas: 

\begin{fact}[\cite{ChenFH2017}]\label{fact-ac0}
  $ \Para\Class{AC}^0 = \Class X\bigl\{(\phi_k)_{k\in\mathbb N} \bigm| \phi_k \in
  \Class{FO}[+,\times], \textstyle \max_k
  \operatorname{qr} (\phi_k) < \infty\bigr\}$.
\end{fact}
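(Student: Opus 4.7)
The plan is to prove the two inclusions separately, leaning on the classical Barrington--Immerman--Straubing characterization of DLOGTIME-uniform $\Class{AC}^0$ by $\Class{FO}[+,\times]$ and adapting it to the parameterized setting slicewise.

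For $\supseteq$, fix a computable family $(\phi_k)_{k \in \mathbb N}$ of $\Class{FO}[+,\times]$-formulas with $r = \max_k \operatorname{qr}(\phi_k) < \infty$ that describes a problem $Q$. I would translate each $\phi_k$, structure size $n$ at a time, into a Boolean circuit $C_{n,k}$ in the standard way: relational literals over $\tau$ become input wires (or their negations); arithmetic literals involving $+,\times,<,\op{succ}$ get replaced by the constant-depth DLOGTIME-uniform sub-circuits known to compute them; Boolean connectives become gates of the respective type; and each quantifier turns into an OR/AND gate of fan-in $n$ ranging over the universe. Since $\phi_k$ is computable in $k$, its size is bounded by some computable $f(k)$, so the size of $C_{n,k}$ is at most $f(k)\cdot n^{O(r)}$, the depth is $O(r) = O(1)$, and DLOGTIME-uniformity follows from computability of the family together with uniformity of the arithmetic sub-circuits.

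For $\subseteq$, take $Q \in \Para\Class{AC}^0$ witnessed by a uniform family $(C_{n,k})_{n,k \in \mathbb N}$ of size $f(k)\cdot n^c$ and constant depth $d$ independent of $k$. For each \emph{fixed} $k$, the slice $(C_{n,k})_{n \in \mathbb N}$ is an ordinary DLOGTIME-uniform $\Class{AC}^0$-circuit family. Applying Barrington--Immerman--Straubing slicewise yields, for each $k$, a formula $\phi_k \in \Class{FO}[+,\times]$ whose finite models are precisely the $k$-th slice of $Q$. Crucially, the quantifier rank produced by that translation depends only on the depth $d$ of the circuit and not on its size, so $\operatorname{qr}(\phi_k) \le h(d)$ for a fixed function $h$, giving $\max_k \operatorname{qr}(\phi_k) < \infty$. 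Computability of the resulting family follows because the BIS translation is itself effective and can be driven by the DLOGTIME-uniformity machine for $(C_{n,k})$, which in turn is computable in $k$.

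The main obstacle is not the two translations in isolation, each of which is essentially standard, but the seamless transfer of the uniformity condition across the equivalence: one must check that the parameterized version of BIS remains effective when the circuit size carries the additional factor $f(k)$, and dually that the family of circuits built from $(\phi_k)$ inherits DLOGTIME-uniformity from the mere computability of $k \mapsto \phi_k$. Both points go through as in the unparameterized case because the parameter $k$ affects only the size and the shape of the individual slice, while the depth of the circuits and the quantifier rank of the formulas remain uniformly bounded, and the arithmetic primitives used are the same on every slice.
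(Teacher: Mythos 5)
Your sketch is correct and mirrors the paper's argument: both directions are the standard slicewise circuit$\leftrightarrow$formula translations behind the Barrington--Immerman--Straubing characterization of uniform $\Class{AC}^0$, where the parameter~$k$ only affects the multiplicative size factor $f(k)$ of the circuits (handled in the paper by taking $f(k)$ copies of addressing subformulas) and the length of the formula, while circuit depth and quantifier rank stay bounded. The paper actually obtains this Fact as a remark at the end of the proof of Theorem~\ref{thm-ac0up}, by specializing that construction to constant depth resp.\ constant quantifier rank; the one minor imprecision in your sketch is that the quantifier-rank bound $h(d)$ in the circuits-to-formulas direction should also depend on the fixed exponent~$c$ of the size polynomial, but since $c$ and $d$ are both constants independent of~$k$ this does not affect the conclusion.
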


\begin{theorem}\label{thm-ac0up}
  $\Para\Class{AC}^{0\uparrow} = \Class X\bigl\{(\phi_k)_{k\in\mathbb N} \bigm| \phi_k \in
  \Class{FO}[+,\times], \textstyle \max_k \left|\operatorname{bound} (\phi_k)\right| < \infty\bigr\}$.
\end{theorem}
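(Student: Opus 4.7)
The intuition behind Theorem~\ref{thm-ac0up} is that, under the standard translation between $\Class{FO}[+,\times]$ and $\Class{AC}$-circuits, the quantifier rank of a formula controls the \emph{depth} of the circuit (as in Fact~\ref{fact-ac0}), while the number of bound variables controls the \emph{size exponent}: a dynamic-programming simulation of $\phi_{k}$ tabulates each subformula over its $n^{v}$ possible assignments to the $v=|\operatorname{bound}(\phi_{k})|$ bound variables. My plan is to prove both inclusions by direct translation, mirroring Fact~\ref{fact-ac0} but trading the depth bound for a size-exponent bound.

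For the inclusion ``$\subseteq$'', I would use the standard bottom-up translation: given $\phi_{k}$ with $v$ bound variables, allocate for each subformula $\psi$ of $\phi_{k}$ a block of at most $n^{v}$ truth-value wires indexed by assignments to the bound variables of $\phi_{k}$; realize connectives as gates of the same type; atomic formulas as small lookups into the \textsc{dlogtime}-uniform encoding of the input $\tau$-structure; and quantifiers $\exists x/\forall x$ as fan-in-$n$ $\vee$- or $\wedge$-gates between consecutive blocks. The resulting circuit has size $|\phi_{k}|\cdot n^{v}\le f(k)\cdot n^{v}$, depth $O(\operatorname{qr}(\phi_{k}))$, and inherits \textsc{dlogtime}-uniformity from the computability of $k\mapsto\phi_{k}$; since the depth is allowed to be $g(k)$ in $\Para\Class{AC}^{0\uparrow}$, the unbounded quantifier rank of $\phi_{k}$ is harmless.

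For the reverse inclusion, fix a \textsc{dlogtime}-uniform family $(C_{n,k})$ of size $f(k)\cdot n^{c}$ and depth $g(k)$ for $Q$. Uniformity yields fixed $\Class{FO}[+,\times]$-formulas $\operatorname{type}_{k}(\cdot)$, $\operatorname{input}_{k}(\cdot,\cdot)$ and $\operatorname{read}_{k}(\cdot)$ describing the type of a gate, the predecessor relation, and the evaluation of an input gate against $\mathcal A$; once $n\ge f(k)$, every gate index in $\{0,\dots,f(k)\cdot n^{c}-1\}$ is encodable as a $(c+1)$-tuple of universe elements. The natural recursive description of gate evaluation says ``gate $A$ is an input gate whose bit is set, \emph{or} an AND-gate all of whose predecessors $B$ evaluate to true, \emph{or} an OR-gate with some predecessor $B$ that evaluates to true''; naively unrolling this to depth $g(k)$ would introduce a fresh tuple of $c+1$ variables at every level, giving $\Theta(c\cdot g(k))$ variables. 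The fix, and the main obstacle, is to keep two fixed banks $A=(a_{0},\dots,a_{c})$ and $B=(b_{0},\dots,b_{c})$ and alternate their roles at successive depths, so that a formula $\phi_{d}$ with free tuple $A$ recursively calls $\phi_{d-1}$ with free tuple $B$ and vice versa; this is sound because once $\operatorname{input}_{k}(A,B)$ has been tested, the parent tuple $A$ is no longer referenced inside the recursion. The resulting $\phi_{k}$ uses only the $2(c+1)$ variables in $A\cup B$, independent of~$k$, while its quantifier rank grows as $O(c\cdot g(k))$---which Theorem~\ref{thm-ac0up} permits. The finitely many structures with $n<f(k)$ are absorbed using Lemma~\ref{lemma-eventually}, whose quantifier-free patches do not change the variable count.
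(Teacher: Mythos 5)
Your proof is correct and takes essentially the same route as the paper's: translate formulas to circuits of size $|\phi_k|\cdot n^{v}$ and depth $O(\operatorname{qr}(\phi_k))$ in one direction, and evaluate the circuit level by level in the other while alternating two fixed banks of variables so that quantifier rank grows with $g(k)$ but the variable count stays $O(c)$. Two cosmetic notes: your inclusion labels are swapped (translating formulas to circuits establishes the $\supseteq$ direction, not $\subseteq$), and the paper addresses gates with $c$ variables plus a \emph{syntactic} index $i\in\{1,\dots,f(k)\}$ (unrolling $f(k)$ copies of each subformula) rather than with a $(c{+}1)$-tuple, which sidesteps the $n\ge f(k)$ side condition and the appeal to Lemma~\ref{lemma-eventually}.
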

\begin{proof}
  The basic idea behind the proof is quite ``old'': we need to
  establish links between 
  circuit depth and size and the number of variables used in a 
  formula -- and such links are well-known, see for
  instance~\cite{Vollmer1999}: The \emph{quantifier
    rank} of a first-order formula naturally corresponds to the
  \emph{depth} of a circuit that solves the model checking problem for
  the formula. The \emph{number of variables} corresponds to the
  \emph{exponent of the polynomial} that bounds the size of the
  circuit (the paper~\cite{Immerman1991} is actually entitled
  $\Class{DSPACE}[n^k] = \Class{VAR}[k+1]$). One thing that is usually
  not of interest (because only one 
  formula is usually considered) is the fact that the \emph{length} of
  the formula is linked \emph{multiplicatively} to the size of the
  circuit.

  In detail, suppose we are given a problem $Q \subseteq
  \struc[\tau] \times \mathbb N$ with $Q \in
  \Para\Class{AC}^{0\uparrow}$ via a circuit family $(C_{n,k})_{n,k \in
    \mathbb N}$ of depth $g(k)$ and size $f(k) n^c$. For a fixed~$k$,
  we now need to construct a formula $\phi_k$ that correctly decides
  the $k$-th slice. In other words, we need a
  $\Class{FO}[+,\times]$-formula $\phi_k$ whose finite models are
  exactly those on which the family $(C_{n,k})_{n \in \mathbb N}$
  (note that $k$ no longer indexes the family) evaluates to~$1$ (when
  the models are encoded as bitstrings). It is well-known how such a
  formula can be constructed, see for instance \cite{Vollmer1999}, we
  just need a closer look at how the quantifier rank and number of
  variables relate to the circuit depth and size.

  The basic idea behind the formula $\phi_k$ is the following: The
  circuit has $f(k) n^c$ gates and we can ``address'' these gates
  using $c$ variables (which gives us $n^c$ possibilities) plus a
  number $i \in \{1,\dots, f(k)\}$ (which gives us $f(k) \cdot n^c$
  possibilities). Since for fixed~$k$ the number 
  $f(k)$ is also fixed, it is permissible that the formula $\phi_k$
  contains $f(k)$ copies of some subformula, where each subformula
  handles another value of~$i$. The basic idea is then to start with 
  formulas~$\psi^0_i$ for $i\in \{1,\dots,f(k)\}$, each of which has
  $c$ free variables, so that $\psi^0_i(x_1,\dots,x_c)$ is true
  exactly if the tuple $(x_1,\dots,x_c,i)$ represents an input gate
  set to~$1$. At this point, the uniformity condition basically tells
  us that such formulas can be constructed and that they all have a
  fixed quantifier rank. Next, we construct formulas
  $\psi^1_i(x_1,\dots,x_c)$ that are true if $(x_1,\dots,x_c,i)$
  addresses a gate for which the input values are all already computed
  by the $\psi^0_j$ and that evaluates to~$1$. Next, formulas
  $\psi^2_i$ are constructed, but, now, we can reuse the variables
  used in the $\psi^0_j$. In this way, we finally build formulas
  $\smash{\psi^{g(k)}_i}$ and apply it to the ``address'' of the output gate. All
  told, we get a formula whose quantifier rank is $c \cdot g(k) + O(1)$ and
  in which at most $2c + O(1)$ variables are used (note that the size
  of the formula depends on $f(k)$). Clearly, this means
  that the family $(\phi_k)_{k\in\mathbb N}$ created in this way does,
  indeed, only use a bounded number of variables (namely $O(c)$ many)
  and decides~$Q$.

  For the other direction, suppose  $(\phi_k)_{k\in\mathbb N}$
  describes $Q$ and that all $\phi_k$ contain at most~$v$ variables
  (since they contain no free variables, this is same as the number of
  bound variables). Clearly, we may assume that the $\phi_k$ are in
  negation normal form. We may also assume that they are \emph{flat,}
  by which we mean that they contain no subformulas of 
  the form $(\alpha \lor \beta) \land \gamma$ or $\alpha \land (\beta
  \lor \gamma)$: using the distributive laws of
  propositional logic, any first-order formula can be turned into an
  equivalent flat formula with the same number of variables and the
  same quantifier rank. Lastly, we may assume that the
  $\op{succ}$ function symbol is only used in atoms of the form $x =
  \op{succ}^s(0)$ for some variable~$x$ and some number~$s$: We can
  replace for instance $E\, \op{succ}^6(x) \op{succ}^3(y)$ by the
  equivalent formula $\exists x'\exists x'' \exists y' \exists
  y''(x'=\op{succ}^6(0) \land \op{add} x x' x'' \land y' =
  \op{succ}^3(0) \land \op{add} y y' y'' \land E x'' y'')$ without
  raising the number of variables and the quantifier rank by more than
  $4$ (or, in general, by more than the constant $2 \cdot
  \operatorname{arity}(\tau)$). 

  As before, it is now known that for each $\phi_k$
  there is a family $(C_{n,k})_{n \in \mathbb N}$ that evaluates
  to~$1$ exactly on the (encoded) models of~$\phi_k$. These circuits
  are constructed as follows: While $\phi_k$ has no free variables, a
  subformula $\psi$ of $\phi_k$ can have up to $v$ free variables. For
  each such subformula, the circuits use $n^v$ gates to keep track of
  all assignments to these~$v$ variables that make the subformula
  true. Clearly, this is relatively easy to achieve for literals in a
  constant number of layers, including literals of the form $x =
  \op{succ}^s(0)$ since $s$ is a fixed number depending only
  on~$k$. Next, if a formula is of the form  
  $\bigwedge_i \alpha_i$ and for some assignment we have one gate for
  each $\alpha_i$ that tells us whether it is true, we can feed all
  these wires into one $\land$-gate. We can take care of a formula of
  the form $\bigvee_i \alpha_i$ in the same way -- and note that in a
  flat formula there will be at most one alternation from
  $\bigwedge$ to $\bigvee$ before we encounter a quantifier. Now, for
  subformulas of the form $\exists x\, \phi$, the correct values for
  the $n^{v-1}$ gates can be obtained by a big $\lor$-gate attached to
  the outputs from the gates for $\phi$. Similarly, $\forall x\, \phi$
  can be handled using a big $\land$-gate.

  Based on these observations, it is now possible to build a circuit
  of size $|\phi_k| n^v$ and depth 
  $O(\operatorname{qr}(\phi_k))$. In particular, the resulting overall
  circuit family has a depth that depends only on the parameter (since
  the quantifier rank can be at most $|\phi_k|$, which depends only
  on~$k$) and has a size of at most $f(k)\cdot n^{c}$ for $f(k) =
  |\phi_k|$. It can also be shown that the necessary uniformity
  conditions are satisfied.

  We remark that the above proof also implies Fact~\ref{fact-ac0},
  namely for $g(k) = O(1)$ for the first direction and for
  $\operatorname{qr}(\phi_k) = O(1)$ for the second direction.
\end{proof}

\section{Syntactic Properties Allowing Color Coding}\label{section-cc}

The color coding technique~\cite{AlonYZ95} is a powerful method from
parameterized complexity theory for ``discovering small objects'' in
larger structures. Recall the example from the introduction: While
finding $k$ disjoint triangles in a graph is difficult in general, it
is easy when the graph is colored with $k$ colors and the objective is
to find for each color one triangle having this color. The idea behind
color coding is to reduce the (hard) uncolored version to the (easy)
colored version by \emph{randomly} coloring the graph and then
``hoping'' that the coloring assigns a different color to each
triangle. Since the triangles are ``small objects,'' the probability
that they do, indeed, get different colors depends only on~$k$. Even
more importantly, Alon et al. noticed that we can derandomize
the coloring procedure simply by coloring each vertex by its hash
value with respect to a simple family of universal hash functions that
only use addition and multiplication~\cite{AlonYZ95}. This idea is beautiful and
works surprisingly well in practice~\cite{HuffnerWZ08}, but using the method inside
proofs can be tricky: On the one hand, we need to ``keep the set sizes
under control'' (they must stay roughly logarithmic in size) and we
``need to actually identify the small set based just on its random
coloring.'' Especially for more complex proofs this can lead to rather
subtle arguments.

In the present section, we identify \emph{syntactic} properties of 
formulas that guarantee that the color coding technique can be
applied. The property is that the colors (the predicates~$C_i$ in
the formulas) are not in the scope of a universal quantifier (this
restriction is necessary, as the example of the formula describing
3-colorability shows). 

As mentioned already in the introduction, the main ``job'' of the
colors in proofs based on color coding is to ensure that vertices of a
graph are different from other vertices. This leads us to the idea of
focusing entirely on the notion of distinctness in the second half of
this section. This time, there will be syntactic properties of
existentially bounded first-order variables that will allow us to
apply color coding to them.

\subsection{Formulas With Color Predicates}

In graph theory, a \emph{coloring} of a graph can either refer to an
arbitrary assignment that maps each vertex to a color or to such an
assignment in which vertices connected by an edge must get different
colors (sometimes called \emph{proper} colorings). For our purposes,
colorings need not be proper and are thus partitions of the vertex set into \emph{color classes.} From
the logical point of view, each color class can be represented by a
unary predicate. A \emph{$k$-coloring of
  a $\tau$-structure $\mathcal A$} is a structure $\mathcal B$ over
the signature $\tau_{k\text{\normalfont-}\mathrm{colors}} = \tau \cup
\{C^1_1,\dots,C^1_k\}$, where the $C_i$ are fresh unary relation
symbols, such that $\mathcal A$ is the $\tau$-restriction of $\mathcal
B$ and such that the sets $C^{\mathcal   B}_1$ to $C^{\mathcal B}_k$
form a partition of the universe~$|\mathcal A|$ of~$\mathcal A$.  

Let us now formulate and prove the first
syntactic version of color coding. An example of a possible formula
$\phi$ in the theorem is $\bigwedge_{i=1}^k \exists x \exists y \exists z
(E x y \land Eyz \land Exz \land C_i x \land C_i y \land C_i z)$, 
for which the theorem tells us that there is a formula $\phi'$ of
constant quantifier rank that is true exactly when there are pairwise
disjoint sets~$C_i$ that make $\phi$ true. 

\begin{theorem}\label{thm-cc-intro}
  Let $\tau$ be an arithmetic signature and let $k$ be a number. For
  each first-order
  $\tau_{k\text{\normalfont-}\mathrm{colors}}$-sentence $\phi$ in
  negation normal form in which no~$C_i$ is inside a universal scope,
  there is a $\tau$-sentence $\phi'$ such that: 
  \enumerate
  \item
    For all $\mathcal A \in \struc[\tau]$ we have $\mathcal A \models
    \phi'$ if, and only if, there is a $k$-coloring $\mathcal
    B$ of $\mathcal A$ with $\mathcal B \models \phi$.
  \item
    $\operatorname{qr}(\phi') = \operatorname{qr}(\phi) + O(1)$.
  \item 
   $\left|\operatorname{bound}(\phi')\right| =
   \left|\operatorname{bound}(\phi)\right| + O(1)$. 
   \endenumerate
\end{theorem}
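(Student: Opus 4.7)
My plan is to follow the color-coding schema sketched in the introduction and eliminate the implicit second-order quantifier ``there exists a $k$-coloring'' in favour of a first-order existential quantification over the parameters of a family of hash functions. Concretely, I will fix an $\Class{FO}[+,\times]$-definable function $h_{\bar p} \colon |\mathcal A| \to \{1,\dots,k\}$ depending on a tuple $\bar p = (p_0,\dots,p_{s-1})$ of universe elements, obtain $\phi'$ from $\phi$ by replacing every literal $C_i x$ by the arithmetic atom $h_{\bar p}(x) = i$ and every literal $\neg C_i x$ by $h_{\bar p}(x) \neq i$, and then prefix the result with $\exists p_0 \cdots \exists p_{s-1}$. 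Because $h_{\bar p}$ is a function, the preimages $h_{\bar p}^{-1}(i)$ automatically partition $|\mathcal A|$, so every choice of $\bar p$ really defines a $k$-coloring.

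The forward direction ``$\mathcal A \models \phi'$ implies that some $k$-coloring of $\mathcal A$ satisfies $\phi$'' is then immediate: define $C_i^{\mathcal B}$ as the preimage $h_{\bar p}^{-1}(i)$ for the witnesses $\bar p$. The content of the theorem lies in the converse. Assuming $\mathcal B$ is a $k$-coloring with $\mathcal B \models \phi$, I plan to fix a winning strategy for the verifier in the model-checking game on $\mathcal B$. Here the hypothesis on universal scopes enters decisively: since $\phi$ is in negation normal form and no $C_i$ occurs in a universal scope, the path from the root of the syntax tree of $\phi$ to each color-literal occurrence consists only of $\land$, $\lor$, and $\exists$ nodes. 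Hence the element assigned to the bound variable at every color literal actually reached by the strategy is chosen by the verifier and not ranged over by the falsifier. The set $S \subseteq |\mathcal A|$ of all such witnesses therefore has cardinality at most the number of color-literal occurrences in $\phi$, a quantity $s$ depending only on $\phi$.

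The remaining technical requirement is a hash family with the property: for every $S \subseteq |\mathcal A|$ with $|S| \le s$ and every $f \colon S \to \{1,\dots,k\}$, some choice of $\bar p$ satisfies $h_{\bar p}(a) = f(a)$ for all $a \in S$. This is delivered by standard $s$-wise independent polynomial hashing: pick a prime $P > |\mathcal A|$ (available and definable in $\Class{FO}[+,\times]$ with constant quantifier rank via Bertrand's postulate) and set $h_{\bar p}(x) = \bigl(\textstyle\sum_{j<s} p_j \cdot x^j \bmod P\bigr) \bmod k$. Applying this to the labeling prescribed by $\mathcal B$ on $S$ yields a $\bar p$ under which the hash-induced coloring agrees with $\mathcal B$ on every witness used by the strategy, so the same strategy remains winning for the replaced formula and witnesses $\mathcal A \models \phi'$.

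The quantifier rank and variable accounting is then routine: the outer prefix adds $s$ existential quantifiers and $s$ variables, and each atom $h_{\bar p}(x) = i$ unfolds into a fixed arithmetic subformula of constant quantifier rank using only a constant number of further auxiliary variables (the constant depending on $\operatorname{arity}(\tau)$). Since $s$ depends only on $\phi$, both $\operatorname{qr}(\phi')$ and $|\operatorname{bound}(\phi')|$ exceed their $\phi$-counterparts by $O(1)$, as required by items~(2) and~(3). The main obstacle will be formalising the witness-extraction argument in the converse: one must argue rigorously that the no-universal-scope condition bounds $|S|$ by an invariant of $\phi$ alone, even in the presence of $\forall$-quantifiers nested inside non-color subformulas. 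The scope restriction ensures that any $\forall$-node on the root-to-literal path would contradict the hypothesis, so $\forall$-branches can only live in sibling subformulas and do not multiply the color constraints, while the NNF format ensures that negated color literals behave symmetrically throughout the argument.
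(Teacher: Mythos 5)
Your witness-extraction argument is sound and is in fact a nice game-theoretic reformulation of what the paper proves by structural induction as the ``small witness property'' (together with monotonicity). The gap is in the hashing step, and it is fatal to items~(2) and~(3).

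You invoke $s$-wise independent polynomial hashing $h_{\bar p}(x) = \bigl(\sum_{j<s} p_j x^j \bmod P\bigr) \bmod k$ with $s$ quantified parameters $p_0,\dots,p_{s-1}$, where $s$ is the number of color-literal occurrences in $\phi$. The paper explicitly clarifies, directly after the theorem statement, that the $O(1)$ in items~(2) and~(3) is a \emph{global constant} independent of $\tau$ and $k$, and the downstream applications depend on this: in Theorem~\ref{theorem-strong-character} the transformation of Theorem~\ref{thm-cc-intro} is applied to blocks whose sizes grow with the parameter, and only a parameter-independent overhead per block keeps the final quantifier rank of the order of the strong quantifier rank. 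Your $s$, by contrast, grows with the parameter in every intended application (for triangle packing, $s=3k$), so your $\phi'$ has $\operatorname{qr}(\phi') = \operatorname{qr}(\phi) + \Theta(s)$, not $+O(1)$. Independently, the single atom $h_{\bar p}(x)=i$ asks you to evaluate a degree-$(s-1)$ polynomial modulo $P$; unfolding this into an $\Class{FO}[+,\times]$-formula cannot be done in constant quantifier rank with constantly many variables by any direct means, so even a single occurrence of your hash breaks item~(2).

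The paper sidesteps both problems via the two-stage structure of the Alon–Yuster–Zwick trick, which is the key idea your proposal lacks. Stage one is a \emph{two-parameter} universal hash $h_{p,q}(m) = (qm \bmod p) \bmod |X|^2$, injective on a set $X$ of size bounded in terms of $\phi$; it uses just two quantified variables $p,q$ and a fixed constant-rank arithmetic formula $\rho$. Stage two assigns actual colors by composing with a function $g\colon\{0,\dots,k^2-1\}\to\{1,\dots,k\}$, but instead of quantifying over the information needed to specify $g$, the paper \emph{hardwires} it: it forms the finite disjunction $\bigvee_{g} \exists p\,\exists q\,(\phi^g(p,q))$ over \emph{all} such $g$. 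Because the domain and range of $g$ depend only on $k$, this disjunction is finite; it blows up the length of the formula enormously, but contributes \emph{zero} additional quantifiers and zero additional variables, which is exactly what items~(2) and~(3) require. Replacing your $s$-parameter interpolation hash by this two-parameter-injective-hash-plus-$\bigvee_g$-composition is the change needed to make your otherwise reasonable argument go through.
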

(Let us clarify that $O(1)$ represents a global constant that is independent
of $\tau$ and~$k$.)

\begin{proof}
  Let $\tau$, $k$, and $\phi$ be given as stated in the
  theorem. If necessary, we modify $\phi$ to ensure that there is no
  literal of the form $\neg C_i x_j$, by replacing each such literal
  by the equivalent $\bigvee_{l\neq i} C_l x_j$. After this
  transformation, the $C_i$ in $\phi$ are neither in the scope of
  universal quantifiers nor of negations -- and this is also true for
  all subformulas~$\alpha$ of~$\phi$. We will now show by structural
  induction that all these subformulas (and, hence, also $\phi$) have
  two \emph{semantic} properties, which we call the \emph{monotonicity
    property} and the \emph{small witness property} (with respect to
  the $C_i$). Afterwards, we will show that these two properties allow
  us to apply the color coding technique.

  \subparagraph*{Establishing the Monotonicity and Small Witness Properties.}

  Some notations will be useful: Given a $\tau$-structure $\mathcal A$
  with universe $A$ and given sets $A_i \subseteq A$
  for $i \in \{1,\dots,k\}$, let us write $\mathcal 
  A \models \phi (A_1,\dots, A_k)$ to indicate that
  $\mathcal B$ is a model of $\phi$ where $\mathcal B$ is the
  $\tau_{k\text{\normalfont-}\mathrm{colors}}$-structure with universe
  $A$ in which all symbols from $\tau$ are interpreted as in~$\mathcal
  A$ and in which the symbol~$C_i$ is interpreted as~$A_i$, that is,
  $C_i^{\mathcal B} = A_i$. Subformulas $\gamma$ of~$\phi$ may
  have free variables and suppose that $x_1$ to $x_m$ are the free variables
  in~$\gamma$ and let $a_i \in A$ for $i\in \{1,\dots,m\}$. We write
  $\mathcal A \models \gamma (A_1,\dots, 
  A_k, a_1,\dots,a_m)$ to indicate that $\gamma$ holds in the
  just-described structure~$\mathcal B$ when each $x_i$ is interpreted
  as~$a_i$.   

  \begin{definition}
    Let $\gamma$ be a $\tau_{k\text{\normalfont-}\mathrm{colors}}$-formula with free variables $x_1$ to
    $x_m$. We say that $\gamma$ has the \emph{monotonicity and the
      small witness properties with respect to the $C_i$} if for all
    $\tau$-structures $\mathcal A$ with universe~$A$ and all values
    $a_1,\dots,a_m \in A$ the following holds:
    \begin{enumerate}
    \item \emph{Monotonicity property:} Let $A_1,\dots,A_k \subseteq A$
      and $B_1,\dots,B_k \subseteq A$ be sets with $A_i \subseteq B_i$
      for all $i \in \{1,\dots,k\}$. Then  
      $\mathcal A \models \gamma(A_1,\dots,A_k,\penalty 0 a_1,\dots,a_m)$ implies
      $\mathcal A \models \gamma(B_1,\dots,B_k,\penalty0 a_1,\dots,a_m)$.  
    \item \emph{Small witness property:} If there are any pairwise
      disjoint sets $B_1,\dots,B_k \subseteq A$  such that $\mathcal A \models
      \gamma(B_1,\dots,B_k,a_1,\dots,a_m)$, then 
      there are sets $A_i \subseteq B_i$ whose
      sizes $|A_i|$  depend only on $\gamma$ for $i \in \{1,\dots,k\}$,
      such that $\mathcal A \models \gamma(A_1,\dots,A_k,a_1,\dots,a_m)$.
    \end{enumerate}
  \end{definition}

  We now show that $\phi$ has these two properties (for $m=0$). For
  monotonicity, just note that the~$C_i$ are not in the scope of any
  negation and, thus, if some $A_i$ make $\phi$ true,
  so will all supersets~$B_i$ of the~$A_i$.

  To see that the small witness property holds, we argue by
  structural induction: If $\phi$ is any formula that does not involve
  any $C_i$, then $\phi$ is true or false independently of the $B_i$
  and, in particular, if it is true at all, it is also true for $A_i =
  \emptyset$ for $i \in \{1,\dots,k\}$. If $\phi$ is the atomic
  formula $C_i x_j$, then setting $A_i = \{a_j\}$ and $A_{i'} =
  \emptyset$ for $i' \neq i$ makes the formula true.

  If $\phi = \alpha \land \beta$, then $\alpha$ and $\beta$ have the
  small witness property by the induction hypothesis. Let
  $B_1,\dots,B_k\subseteq A$ make $\phi$ hold in $\mathcal A$. Then
  they also make both $\alpha$ and $\beta$ hold in 
  $\mathcal A$. Let $A^\alpha_1,\dots,A^\alpha_k\subseteq A$ with
  $A^\alpha_i \subseteq B_i$ be the witnesses for $\alpha$ and let
  $A^\beta_1,\dots,A^\beta_k\subseteq A$ be the witnesses 
  for~$\beta$. Then by the monotonicity property, $A^\alpha_1 \cup
  A^\beta_1,\dots,A^\alpha_k \cup A^\beta_k$ makes both $\alpha$ and
  $\beta$ true, that is
  \begin{align*}
    \mathcal A \models \alpha(A^\alpha_1
    \cup A^\beta_1,\dots,A^\alpha_k \cup A^\beta_k, a_1,\dots,a_m)
  \end{align*}
  and the same holds for $\beta$. Note that $A^\alpha_i \cup
  A^\beta_i \subseteq B_i$ still holds and that they have sizes
  depending only on $\alpha$ and~$\beta$ and thereby on~$\phi$.

  For $\phi = \alpha \lor \beta$ we can argue in exactly the same way
  as for the logical and.

  The last case for the structural induction is $\phi = \exists x_m
  (\alpha)$. Consider pairwise disjoint $B_1,\dots,B_k \subseteq A$
  that make $\phi$ true. Then there is a value $a_m \in A$ such that
  $\mathcal A \models
  \alpha(B_1,\dots,B_k,a_1,\dots,a_m)$. Now, since
  $\alpha$ has the small witness property by the induction hypothesis,
  we get $A_i \subseteq B_i$ of size depending on $\alpha$ for which
  we also have $\mathcal A \models
  \alpha(A_1,\dots,A_k,a_1,\dots,a_m)$. But then,
  by the definition of existential quantifiers, these $A_i$ also witness $\mathcal A \models
  \exists x_m \phi(A_1,\dots,A_k,\penalty0 a_1,\dots,a_{m-1})$. (Observe
  that this is the point where the argument would \emph{not} work for
  a universal quantifier: Here, for each possible value of $a_m$ we
  might have a different set of~$A_i$'s as witnesses and their union
  would then no longer have small size.)

  \subparagraph*{Applying Color Coding.}

  Our next step in the proof is to use color coding to produce the
  partition. First, let us recall the basic lemma on universal hash
  functions formulated below in a way equivalent to \cite[page
  347]{FlumG2006}:  

  \begin{lemma}\label{lemma-color-coding}
    There is an $n_0 \in \mathbb N$ such that for all $n \ge n_0$ and
    all subsets $X \subseteq \{0,\dots,n-1\}$ there  exist a prime $p
    < |X|^2 \log_2 n$ and a number $q < p$ such that the function $h_{p,q}(m)
    = (q\cdot m \mathbin{\mathrm{mod}} p) \mathbin{\mathrm{mod}} |X|^2$ is
    injective on~$X$. 
  \end{lemma}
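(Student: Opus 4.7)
The plan is to prove the lemma by a two-step argument: first, to find a prime $p$ below the stated bound for which the residue map $m \mapsto m \bmod p$ is already injective on $X$, and then, having fixed such a $p$, to find a multiplier $q$ that makes the subsequent compression $(\cdot) \bmod |X|^2$ also preserve injectivity on the image.

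For the first step, call a prime $p$ \emph{bad} if some nonzero difference $m_1 - m_2$ with $m_1, m_2 \in X$ is divisible by $p$ (these are exactly the primes ruined by an unavoidable collision in the first reduction, since $m_1 \equiv m_2 \pmod p$ forces $qm_1 \bmod p = qm_2 \bmod p$ for every $q$). I would bound the number of bad primes by considering the product $P = \prod_{\{m_1, m_2\} \subseteq X,\, m_1 \ne m_2} |m_1 - m_2|$: each factor is at most $n$ and there are $\binom{|X|}{2}$ factors, so $\log_2 P \le \binom{|X|}{2} \log_2 n$ bounds the number of distinct prime divisors of $P$. A Chebyshev-style prime-counting estimate then guarantees, for $n$ beyond some threshold $n_0$, that the number of primes below $|X|^2 \log_2 n$ strictly exceeds the count of bad primes, yielding a good prime $p$ in the required range.

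For the second step, fix such a good prime $p$ and count, for each unordered pair $\{m_1, m_2\} \subseteq X$ with $m_1 \ne m_2$, the number of multipliers $q \in \{1, \ldots, p-1\}$ for which $h_{p,q}(m_1) = h_{p,q}(m_2)$. Writing $a = qm_1 \bmod p$ and $b = qm_2 \bmod p$, the map $q \mapsto (a - b) \bmod p$ is a bijection onto $\{1, \ldots, p-1\}$ because $\gcd(m_1 - m_2, p) = 1$, and a collision requires the integer difference $a - b \in \{-(p-1), \ldots, p-1\} \setminus \{0\}$ to be a (nonzero) multiple of $|X|^2$. Thus at most $2\lfloor (p-1)/|X|^2 \rfloor$ values of $q$ collide on a given pair, and summing over the $\binom{|X|}{2}$ pairs gives a total of at most $p-1$ bad pair-events; by pigeonhole, some $q \in \{1, \ldots, p-1\}$ avoids every collision, producing the desired injective $h_{p,q}$.

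The main obstacle I foresee is balancing the constants in the two steps simultaneously: the Chebyshev lower bound on $\pi(|X|^2 \log_2 n)$ and the per-pair collision upper bound are both only barely large enough, so one must absorb lower-order terms and invoke the hypothesis $n \ge n_0$ to make the bound $p < |X|^2 \log_2 n$ hold cleanly. Once those constants are chosen, both halves of the argument go through uniformly in $X$ and $n$.
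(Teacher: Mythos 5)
The paper does not prove this lemma itself; it is stated inside the proof of Theorem~\ref{thm-cc-intro} as a recollection of the universal hashing lemma from Flum and Grohe's textbook (the text reads ``formulated below in a way equivalent to [page 347]'' of that reference), so there is no internal proof to compare against. Your two-step decomposition---first choose $p$ so that reduction modulo~$p$ is injective on $X$, then choose a multiplier~$q$ to survive the further compression modulo~$|X|^2$---is the standard Fredman--Koml\'os--Szemer\'edi route, and your Step~2 is correct: the map $q \mapsto qm_1 \bmod p - qm_2 \bmod p$ is injective into the nonzero integers of absolute value $<p$, there are $2\lfloor (p-1)/|X|^2\rfloor$ nonzero multiples of $|X|^2$ in that range, and $\binom{|X|}{2}\cdot 2\lfloor (p-1)/|X|^2\rfloor \le \frac{(|X|-1)(p-1)}{|X|} < p-1$, so some $q$ survives.

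There is, however, a genuine gap in Step~1. You bound the number of bad primes by $\omega(P) \le \log_2 P \le \binom{|X|}{2}\log_2 n$, which is roughly $\tfrac12 |X|^2 \log_2 n$, i.e.\ a constant fraction of the upper limit $B = |X|^2\log_2 n$ itself. But $\pi(B) \sim B/\ln B = o(B)$; the inequality $\pi(B) > B/2$ holds only for $B$ below roughly $e^2$, so as soon as $n$ (hence $B$) is large, the Chebyshev lower bound on $\pi(B)$ falls \emph{below} your upper bound on the bad-prime count and the pigeonhole argument collapses. This is not a matter of ``balancing tight constants'' as you anticipate in your final paragraph---the comparison fails by an unbounded $\ln B$ factor. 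The missing idea is the one that makes FKS work: do not compare against all primes below $B$, but restrict attention to primes in an interval such as $(B/2, B]$. Any such prime that divides $P = \prod_{m_1\neq m_2}|m_1-m_2| \le n^{\binom{|X|}{2}}$ contributes at least $\log_2(B/2)$ to $\log_2 P$, so at most $\binom{|X|}{2}\log_2 n \big/ \log_2(B/2)$ of them can be bad; meanwhile $\pi(B)-\pi(B/2)$ is of order $B/(2\ln B)$, and the ratio $\tfrac{1}{2\ln 2} > \tfrac12$ of the numerators is what finally wins. (Equivalently, one can invoke the sharper fact $\omega(m) = O(\ln m/\ln\ln m)$ in place of $\omega(m)\le \log_2 m$, but the range-restriction argument is more elementary and is the standard presentation.) With that repair the proof goes through; without it, Step~1 as written does not establish the existence of a good prime below $|X|^2\log_2 n$.
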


  As has already been observed by Chen et al.~\cite{ChenFH2017}, if we
  set $k = |X|$ we can easily express the computation underlying
  $h_{p,q}\colon \{0,\dots,n-1\} \to \{0,\dots,k^2-1\}$ using a fixed
  $\Class{FO}[+,\times]$-formula $\rho(k,p,q,x,y)$. That is, if we
  encode the numbers $k,p,q,x,y \in \{0,\dots, n-1\}$ as corresponding
  elements of the universe with respect to the ordering of the universe, then
  $\rho(k,p,q,x,y)$ holds if, and only if, $h_{p,q}(x) = y$. Note that
  the $p$ and $q$ from the lemma could exceed $n$ for very large~$X$
  (they can reach up to $n^2 \log_2 n \le n^3$), but, first, this
  situation will not arise in the following and, second, this could be
  fixed by using three variables to encode $p$ and three variables to
  encode $q$. Trivially, $\rho(k,p,q,x,y)$ has some constant
  quantifier rank (the formula explicitly constructed by Chen et al.\
  has $\operatorname{qr}(\rho) = 9$, assuming $k^2 < n$).

  Next, we will need the basic idea or ``trick'' of Alon et
  al.'s~\cite{AlonYZ95} color coding technique: While for appropriate
  $p$ and $q$ the function $h_{p,q}$ will ``just'' be injective on
  $\{0,\dots,k^2-1\}$, we actually want a function that maps each
  element $x\in X$ to a specific element (``the color of $x$'') of
  $\{1,\dots,k\}$. Fortunately, this is easy to achieve by
  concatenating $h_{p,q}$ with an appropriate function $g \colon
  \{0,\dots,k^2-1\} \to \{1,\dots,k\}$. 
  
  In detail, to construct $\phi'$ from the claim of the theorem, we 
  construct a family of formulas $\phi^g(p,q)$ where $p$
  and $q$ are new free variables and the formulas are indexed by all
  possible functions $g \colon \{0,\dots,k^2-1\} \to \{1,\dots,k\}$:
  In $\phi$, replace every occurrence of $C_i x_j$ by the following
  formula $\pi_i^g(p,q,x_j)$:
  \begin{align*}
    \textstyle\bigvee_{y\in\{0,\dots,k^2-1\},g(y) = i}
    \exists \hat k \exists \hat y \bigl(\op{succ}^k(0) = \hat k \land
    \op{succ}^y(0) = \hat y \land \rho(\hat k,p,q,x_j,\hat y)\bigr)
  \end{align*}
  where $\hat k$
  and $\hat y$ are fresh variables that we bind to the numbers $k$
  and~$y$ (if the universe is large enough). Note that the  
  formula $C_i x_j$ has $x_j$ as a free variable, while
  $\pi_i^g(p,q,x_j)$ additionally has $p$ and $q$ as 
  free variables. As an example, for the formula $\phi = \exists x (C_2 x
  \lor \exists y C_5 y)$ we would have $\phi^g = \exists x(\pi_2^g
  (p,q,x) \lor \exists y \pi_5^g(p,q,y))$. Clearly, each $\phi^g$ has the
  property $\operatorname{qr}(\phi^g) = \operatorname{qr}(\phi) +
  O(1)$.

  The desired formula $\phi'$ is (almost) simply
  $\bigvee_{g:\{0,\dots,k^2-1\} \to \{1,\dots,k\}} \exists p
  \exists q (\phi^g(p,q))$. The ``almost'' is due to the fact that this
  formula works only for structures with a sufficiently large
  universe -- but by Lemma~\ref{lemma-eventually} it suffices to
  consider only this case. Let us prove that for every
  $\sigma$-structure $\mathcal A$ with universe $A = \{0,\dots,n-1\}$
  and $n \ge c$ for some to-be-specified constant~$c$, the following
  two statements are equivalent:
  \begin{enumerate}
  \item
    There is a $k$-coloring $\mathcal B$ of $\mathcal A$ with
    $\mathcal B \models \phi$.
  \item 
    $\mathcal A \models \bigvee_{g:\{0,\dots,k^2-1\}
      \to \{1,\dots,k\}} \exists p \exists q (\phi^g(p,q))$.
  \end{enumerate}
  Let us start with the implication of item~2 to~1. Suppose there is a
  function $g\colon\{0,\dots,\penalty0 k^2-1\} \to \{1,\dots,k\}$ and
  elements $p,q \in 
  \{0,\dots,n-1\}$ such that  $\mathcal A \models \phi^g(p,q)$. We
  define a partition $A_1  \mathbin{\dot\cup} \cdots \mathbin{\dot\cup} A_k = 
  A$ by $A_i = \{ x \in A \mid g(h_{p,q}(x)) = i\}$. In
  other words, $A_i$ contains all elements of $A$ that are first
  hashed to an element of $\{0,\dots,k^2-1\}$ that is then mapped
  to~$i$ by the function~$g$. Trivially, the $A_i$ form a partition of
  the universe~$A$.

  Assuming that the universe size is sufficiently large, namely for
  $k^2 \log_2 n < n$, inside $\phi^g$ all uses of $\rho(\hat
  k,p,q,x,\hat y)$ will have the property that $\mathcal A \models
  \rho(\hat k,p,q,x,\hat y)$ if, and only if, $h_{p,q}(x)
  = \hat y$. Clearly, there is a constant $c$ depending
  only on $k$ such that for all $n>c$ we have $k^2 \log_2 n < n$.

  With the property established, we now see that $\pi_i^g(p,g,x_j)$
  holds inside the formula $\phi^g$ if, and only if, the
  interpretation of $x_j$ is an element of $A_i$. This means that if
  we interpret each $C_i$ by $A_i$, then we get $\mathcal A \models
  \phi(A_1,\dots,A_k)$ and the $A_i$ form a partition of the
  universe. In other words, we get item~1.

  Now assume that item~1 holds, that is, there is a partition
  $B_1 \mathbin{\dot\cup} \cdots \mathbin{\dot\cup} B_k = A$ with
  $\mathcal A \models \phi(B_1,\dots,B_k)$. We must show that there
  are a $g \colon \{0,\dots,k^2-1\} \to \{1,\dots,k\}$ and $p,q \in A$
  such that $\mathcal A \models \phi^g(p,q)$.

  At this point, we use the small witness property that we established
  earlier for the partition. By this property there are pairwise
  disjoint sets $A_i \subseteq A$ such that, first, $|A_i|$ depends only
  on $\phi$ and, second, $\mathcal A \models \phi(A_1,\dots,A_k)$. Let $X =
  \bigcup_{i=1}^k A_i$. Then $|X|$ depends only on $\phi$ and let $s_\phi$
  be a $\phi$-dependent upper bound on this size. By the universal
  hashing lemma, there are now $p$ and $q$ such that $h_{p,q} \colon
  \{0,\dots,n-1\} \to \{0,\dots,s_\phi^2 
  -1\}$ is injective on~$X$. But, then, we can set $g \colon
  \{0,\dots,s_\phi^2 -1\} \to \{1,\dots,k\}$ to $g(v) = i$ if there is
  an $x \in A_i$ with $h_{p,q}(x) = v$ and setting $g(v)$ arbitrarily
  otherwise. Note that this is, indeed, a valid definition of~$g$
  since $h_{p,q}$ is injective on~$X$.

  With these definition, we now define the following sets $D_1$ to
  $D_k$: Let $D_i = \{x \in A\mid g(h_{p,q}(\hat x)) =
  i\}$ where $\hat x$ is the index of $x$ in $A$ with respect to the
  ordering (that is, $\hat x = |\{ y \in A \mid y <^{\mathcal A} x\}|$
  and for the special case that $A = \{0,\dots,n-1\}$ and that $<^{\mathcal 
    A}$ is the natural ordering, $\hat x = x$). Observe
  that $D_i \supseteq A_i$ holds for all $D_i$ and that 
  the $D_i$ form a partition of the universe~$A$. By the monotonicity
  property,  $\mathcal A \models \phi(A_1,\dots,A_k)$ implies
  $\mathcal A \models \phi(D_1,\dots,D_k)$. However, by definition of
  the $D_i$ and of the formulas $\pi_i^g$, for a sufficiently large
  universe size~$n$ (namely $s_\phi^2 \log_2 n < n$), we now also
  have $\mathcal A \models \phi^g(p,q)$, which in turn implies
  $\mathcal A \models \bigvee_g \exists p \exists q \phi^g$.
\end{proof}

In the theorem we assumed that $\phi$ is a sentence to keep the
notation simple, both the theorem and later theorems still hold when
$\phi(x_1,\dots,x_n)$ has free variables $x_1$ to~$x_n$. Then there is
a corresponding $\phi'(x_1,\dots,x_n)$ such that first item becomes
that for all $\mathcal A \in \struc[\tau]$ and all $a_1,\dots,a_n \in
|\mathcal A|$ we have 
$\mathcal A \models \phi'(a_1,\dots,a_n)$ if, and only if, there is a
$k$-coloring $\mathcal B$ of~$\mathcal A$ with $\mathcal B \models
\phi(a_1,\dots,a_n)$. Note that the syntactic transformations in the
theorem do not add dependencies of universal quantifiers on the free
variables. 

\subsection{Formulas With Weak Quantifiers}

If one has a closer look at proofs based on color coding, one cannot
help but notice that the colors are almost exclusively used to ensure that
certain vertices in a structure are distinct from certain other
vertices: recall the introductory example $\bigwedge_{j=1}^k
\exists x \exists y \exists z (E x y \land Eyz \land Exz \land C_{3j-2}
x \land C_{3j-1} y \land C_{3j} z)$, which describes the triangle
packing problem when we require that the $C_i$ form a partition of the 
universe. Since the $C_i$ are only used to ensure that the many
different $x$, $y$, and $z$ are different, we already rewrote the
formula in \eqref{eq:tri} as $\exists x_1 \cdots \exists x_{3k}\textstyle
\bigwedge_{i\neq j} x_i \neq x_j \land \bigwedge_{j=1}^k \exists x
\exists y \exists z (E xy \land Eyz \land Exz \land x_{3j-2} = x
\land x_{3j-1} =y \land x_{3j} =z)$. While this rewriting gets rid of
the colors and moves us back into the familiar territory of simple 
first-order formulas, the quantifier rank and the number of variables
in the formula have now ``exploded'' (from the constant $3$ to the
parameter-dependent value $3k+3$) -- which is exactly what we need to
avoid in order to apply Fact~\ref{fact-ac0} or
Theorem~\ref{thm-ac0up}.

We now define a syntactic property that the $x_i$ have that allows us
to remove them from the formula and, thereby, to arrive at a family of
formulas of constant quantifier rank. For a (sub)formula $\alpha$ of
the form $\forall d(\phi)$ or 
$\exists d(\phi)$, we say that $d$ \emph{depends} on all free
variables in~$\phi$ (at the position of $\alpha$ in a larger
formula). For instance, in $Exy \land \forall b (Exb \land \exists z (E yz)) \land
\exists b (E xx)$, the variable $b$ depends on $x$ and $y$ at its
first binding ($\forall b$) and on~$x$ at the second binding~($\exists
b$).
\begin{definition}\label{def-weak}
  We call the leading quantifier in a formula $\exists x (\phi)$ in
  negation normal form \emph{strong} if 
  \begin{enumerate}
  \item 
    some universal binding inside~$\phi$ depends on~$x$ or
  \item
    there is a subformula $\alpha \land \beta$ of~$\phi$ such that
    both $\alpha$ and~$\beta$ contain $x$ in literals that are not of
    the form $x \neq y$ for some variable~$y$. 
  \end{enumerate}
  If neither of the above hold, we call the quantifier \emph{weak}. 
  The \emph{strong quantifier rank} $\operatorname{strong-qr}(\phi)$
  is the quantifier rank of~$\phi$, where weak quantifiers are
  ignored; $\operatorname{strong-bound}(\phi)$ contains all
  variables of $\phi$ that are bound by non-weak quantifiers.
\end{definition}
(Later on we extend the definition to the dual notion of weak
\emph{universal} quantifiers, but for the moment let us only call
existential quantifiers weak.)

We place a dot on the variables bound by weak quantifiers to make
them easier to spot. For example, in $\phi = \exists x \exists y \exists \dot
z (R xx\dot z\dot z \land x \neq y \land y \neq \dot z \land Px \land
\forall w\, E w yy)$ the quantifier $\exists \dot z$ is weak, but
neither are $\exists x$ (since $x$ is used in two literals joined by a
conjunction, namely in $R xx\dot z\dot z$ and $P x$) nor $\exists y$
(since $w$ depends on~$y$ in $\forall w \, Ewyy$). We have
$\operatorname{qr}(\phi) = 4$, but $\operatorname{strong-qr}(\phi)
=3$, and $\operatorname{bound}(\phi) = \{x,y,\dot z\}$, but
$\operatorname{strong-bound}(\phi) = \{x,y\}$. 

Admittedly, the definition of weakness is a bit technical, but note that
there is a rather simple sufficient condition for a variable~$x$ to be
weak: If it not used in universal binding and used in only one literal
that is not an inequality, then $x$ is weak. This condition almost
always suffices for identifying the weak variables, although there are
of course exceptions like $\exists \dot x(P \dot x \lor Q \dot x)$.

\begin{theorem}\label{theorem-strong-character}
  Let $\tau$ be an arithmetic signature. Then for every $\tau$-formula
  $\phi$ in negation normal form there is a $\tau$-formula $\phi'$ such that
  \enumerate
  \item
    $\phi'$ is equivalent to $\phi$ on finite structures,
  \item
    $\operatorname{qr}(\phi') = 3\cdot\operatorname{strong-qr}(\phi)+O(\operatorname{arity}(\tau))$,
    and
  \item
    $\left|\operatorname{bound}(\phi')\right| =
    \left|\operatorname{strong-bound}(\phi)\right| + O(\operatorname{arity}(\tau))$. 
  \endenumerate
\end{theorem}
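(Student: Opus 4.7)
The plan is to reduce the statement to Theorem~\ref{thm-cc-intro} by encoding each weak existential quantifier via a unary color predicate. A weak variable $\dot x_i$ occurs non-inequality-wise under only a single $\land$-conjunct (condition~2 of Definition~\ref{def-weak}) and is otherwise constrained only through inequalities, so its role can be captured by the task of picking a representative from a color class $C_i$, with distinctness between different $\dot x_i$'s enforced by requiring the $C_i$'s to be pairwise disjoint. The second-order existentials over the $C_i$'s will then be eliminated by Theorem~\ref{thm-cc-intro} at an $O(1)$ additive cost in quantifier rank and bound variables.

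Concretely, I would place $\phi$ in negation normal form with all bound variables renamed distinct, enumerate the weak variables $\dot x_1,\dots,\dot x_m$, and introduce fresh unary predicates $C_1,\dots,C_m$ together with one ``background'' predicate $C_{m+1}$. I would then build a formula $\phi^c$ over $\tau\cup\{C_1,\dots,C_{m+1}\}$ by first deleting every weak quantifier $\exists\dot x_i$; then, at each non-inequality occurrence of $\dot x_i$, replacing the containing literal $L$ by $\exists u\,(C_i u\land u\neq y_1\land\cdots\land u\neq y_r\land L[u/\dot x_i])$, where $y_1,\dots,y_r$ are the variables appearing opposite $\dot x_i$ in the inequalities $\dot x_i\neq y_j$ of the same innermost conjunct and $u$ is a fresh variable reusable across different weak variables; and finally dropping all inequality literals involving weak variables (those among weak variables being discharged by the disjointness of the $C_i$'s, the others having been absorbed into the local existentials). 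Weakness condition~1 guarantees that no $C_i$ in $\phi^c$ sits below a universal quantifier, and weakness condition~2 guarantees that the literal replacement is well defined. A structural induction verifies $\phi\iff\exists C_1\cdots\exists C_{m+1}\bigl(\mathrm{partition}(C_1,\dots,C_{m+1})\land\phi^c\bigr)$ on finite structures, after which Theorem~\ref{thm-cc-intro} converts the second-order block into a pure $\tau$-formula~$\phi'$.

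The main obstacle is the rank and variable bookkeeping. Strong quantifiers of $\phi$ survive untouched in $\phi^c$, contributing $\operatorname{strong-qr}(\phi)$ to the rank and $|\operatorname{strong-bound}(\phi)|$ to the variable count; the local existentials introduced when rewriting non-inequality occurrences sit strictly inside the surrounding strong scopes and, by recycling the fresh variable $u$ across different weak variables, add only a constant per level of strong nesting, which is what accounts for the factor~$3$ in the claimed rank bound. The $O(\operatorname{arity}(\tau))$ slack absorbs the auxiliary variables needed to encode $\op{succ}^k(0)$-style constants (via Lemma~\ref{lemma-eventually}) as well as the $O(1)$ overhead from Theorem~\ref{thm-cc-intro}. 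I expect the most delicate verification to be the semantic equivalence claim when a weak variable occurs non-inequality-wise in several disjuncts of a $\lor$ (a configuration permitted by condition~2 but not exhibited by the running examples): here one needs to check that the local rewrites commute correctly with $\lor$ and that the inequality constraints appropriate to each disjunct are still captured.
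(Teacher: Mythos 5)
Your overall instinct --- model each weak existential as a color and discharge the color block via Theorem~\ref{thm-cc-intro} --- is indeed the engine of the paper's argument. But the one-shot version you sketch, placing a single second-order block $\exists C_1\cdots\exists C_{m+1}$ at the very front, fails for two reasons, and most of the paper's intermediate transformations exist precisely to avoid them.

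First, weakness condition~1 of Definition~\ref{def-weak} only forbids universal quantifiers \emph{inside} the scope of $\exists\dot x_i$ from depending on $\dot x_i$; it places no restriction on ambient universal quantifiers whose scope contains $\exists\dot x_i$. So a weak variable can legitimately live under a $\forall$ (as $\dot x_3,\dot x_4$ do under $\forall c$ in the running example~\eqref{eq-init}), and after your rewrite the literal $C_i u$ will then sit under that universal scope. This violates the hypothesis of Theorem~\ref{thm-cc-intro}, and not merely technically: under $\forall c$ one may need \emph{different} witnesses for the $\dot x_i$ for each value of~$c$, so no single global partition $C_1,\dots,C_{m+1}$ can work. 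The paper resolves this by moving weak quantifiers into blocks directly behind universal quantifiers and applying Theorem~\ref{thm-cc-intro} to one block at a time, innermost first; each application yields a color-free subformula that can then sit safely under any enclosing~$\forall$.

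Second, forcing all $C_i$ to be pairwise disjoint over-constrains the formula. Nothing in Definition~\ref{def-weak} requires every pair of weak variables to be related by a $\neq$-literal; where $\dot x_i\neq\dot x_j$ is absent, the two may \emph{need} to coincide. Already $\exists\dot x_1\exists\dot x_2\bigl(\exists y(y=\dot x_1\land Py)\land\exists y(y=\dot x_2\land Py)\bigr)$ is just $\exists y\,Py$, yet your $\phi^c$ demands two distinct elements in~$P$. This is exactly what the paper's ``completing weak inequalities'' step is for: each block is replaced by a disjunction over all partitions of that block's weak variables compatible with the present $\neq$-literals, and only then is the all-distinct case handed to color coding.

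Two smaller remarks. Your substitution $\exists u(C_iu\land\cdots\land L[u/\dot x_i])$ does not terminate cleanly on literals containing more than one weak variable (e.g.\ $E\dot x\dot y$); you would need up to $\operatorname{arity}(\tau)$ auxiliary variables, as in the paper's elimination of ``bad literals.'' And the factor~$3$ in the rank bound is not really explained by recycling~$u$: in the paper it arises because the $\exists p\exists q$ from each block-local application of Theorem~\ref{thm-cc-intro} adds two levels of nesting per universal quantifier, and the universal quantifier rank is at most the strong quantifier rank.
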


Before giving the detailed proof, we briefly sketch the overall idea:
Using simple syntactic transformations, we can ensure that all weak
quantifiers follow in  
blocks after universal quantifiers. We can also ensure that inequality
literals directly follow the blocks of weak quantifiers and are joined
by conjunctions. If the inequality literals following a block happen
to require that all weak variables from the block are different (that
is, if for all pairs $\dot x_i$ and $\dot x_j$ of different weak
variables there is an inequality $\dot x_i \neq \dot x_j$), then we
can remove the weak quantifiers $\exists \dot x_i$ and at the (single)
place where $\dot x_i$ is used, we use a color $C_i$ instead. For
instance, if $\dot x_i$ is used in the literal $\dot x_i = y$, we
replace the literal by $C_i y$. If $\dot x_i$ is used for instance in
$\neg E \dot x_i y$, we replace this by $\exists x(C_i x \land \neg E
xy)$. In this way, for each block we get an equivalent formula to
which we can apply Theorem~\ref{thm-cc-intro}. A more complicated
situation arises when the inequality literals in a block ``do not
require complete distinctness,'' but this case can also be handled by
considering all possible ways in which the inequalities can be
satisfied in parallel. In result, all weak
quantifiers get removed and for each block a constant number of new
quantifiers are introduced. Since each block follows a different
universal quantifier, the new total quantifier rank is at most the
strong quantifier rank times a constant factor; and the new number of
variables is only a constant above the number of original strong
variables.   

\begin{proof}
  Let $\phi$ be given.
  We first apply a number of simple syntactic
  transformations to move the weak quantifiers directly behind
  universal quantifiers and to move inequality literals directly
  behind blocks of weak quantifiers. Then we show how sets of
  inequalities can be ``completed'' if necessary. Finally, we
  inductively transform the formula in such a way that 
  Theorem~\ref{thm-cc-intro} can be applied repeatedly.

  As a running example, we use the (semantically not very sensible,
  but syntactically interesting) formula
  \begin{align}
    \phi=\exists a \bigl(\exists \dot x (E a \dot x
    \land \exists \dot y(\dot y \neq \dot x)) \land
    \forall c \exists \dot x \exists \dot y (E \dot x \dot y \lor
    \exists z(\dot
    x \neq \dot y\land Pz\land Qc))\bigr) \label{eq-init}
  \end{align}
  and for each transformation we show how it applies to this example.

  \subparagraph*{Preliminaries.}

  It will be useful to require that all weak variables are
  different. Thus, as long as necessary, when a variable is
  bound by a weak quantifier and once more by another quantifier, replace the
  variable used by the weak quantifier by a fresh variable. Note that this
  may increase the number   of distinct (weak) variables in the
  formula, but we will get rid of   all of them later on anyway. From
  now on, we may assume that the weak variables are all distinct from
  one another and also from all other variables.

  It will also be useful to assume that $\phi$ starts with a universal
  quantifier. If this is not the case, replace 
  $\phi$ by the equivalent formula $\forall v (\phi)$ where $v$ is a
  fresh variable. This increases the quantifier rank by at
  most~$1$.

  Finally, it will also be useful to assume that the formula has been
  ``flatten'' as in the proof Theorem~\ref{thm-ac0up}: We use the
  distributive laws of propositional logic to 
  repeatedly replace subformulas of the form $(\alpha \lor \beta)
  \land \gamma$ by $(\alpha \land \gamma) \lor (\beta \land \gamma)$
  and $\alpha \land (\beta \lor \gamma)$ by $(\alpha \land \beta) \lor
  (\alpha \land \gamma)$. Note that this transformation does not
  change which variables are weak.
  
  For our running example, applying the described preprocessing yields:
  \begin{align*}
    \phi\equiv\forall v \exists a \bigl(\exists \dot x_1(E a \dot x_1 \land \exists \dot
    x_2 (\dot x_2 \neq \dot x_1)) \land
    \forall c \exists \dot x_3 \exists \dot x_4 (E \dot x_3 \dot x_4
    \lor \exists z(\dot
    x_3 \neq \dot x_4 \land Pz\land Qc))\bigr) 
  \end{align*}
  
  \subparagraph*{Syntactic Transformations I: Blocks of Weak Quantifiers.}

  The first interesting transformation is the following: We wish to
  move weak quantifiers ``as far up the syntax tree as possible.'' To
  achieve this, we apply the following equivalences as long as
  possible by always replacing the left-hand side (and also
  commutatively equivalent formulas) by the right-hand side:
  \begin{align*}
    \exists \dot x(\alpha) \land \beta & \equiv \exists \dot x(\alpha \land \beta), \\
    \exists \dot x(\alpha) \lor \beta & \equiv \exists \dot x(\alpha \lor \beta), \\
    \exists y \exists \dot x(\alpha) & \equiv \exists \dot x \exists y(\alpha). 
  \end{align*}
  Note that $\beta$ does not contain $\dot x$ since we  made all weak
  variables distinct and, of course, by $\exists y$ we mean a strong
  quantifier.

  Once the transformations have been applied exhaustively, all weak
  quantifiers will be directly preceded in $\phi$ by either a
  universal quantifier or another weak quantifier. This means that all
  weak quantifiers are now arranged in blocks inside~$\phi$, each
  block being preceded by a universal quantifier.
  \begin{align*}
    \phi\equiv\forall v \exists \dot x_1 \exists \dot x_2 \exists a
    \bigl(E a \dot x_1 \land \dot x_2 \neq \dot x_1 \land
    \forall c \exists \dot x_3 \exists \dot x_4 (E \dot x_3 \dot x_4
    \lor \exists z(\dot
    x_3 \neq \dot x_4 \land Pz\land Qc))\bigr) 
  \end{align*}

  \subparagraph*{Syntactic Transformations II: Weak and Strong Literals.}

  In order to apply color coding later on, it will be useful to have
  only three kinds of literals in $\phi$:
  \begin{enumerate}
  \item \emph{Strong literals} are literals that do not contain any
    weak variables.
  \item \emph{\!Weak equalities} are literals of the form $\dot x = y$
    involving exactly one strong variable that is existentially bound
    inside the weak variable's scope: $\exists \dot x (\dots \exists
    y(\dots \dot x = y \dots)\dots)$. 
  \item \emph{\!Weak inequalities} are literals of the form $\dot x \neq
    \dot y$ for two weak variables.
  \end{enumerate}
  Let us call all other kinds of literals \emph{bad.} This includes
  literals like $E \dot x \dot x$ or $E z \dot y$ that contain a
  relation symbol and some weak variables, but also inequalities $\dot
  x \neq y$ involving a weak and a strong variable, an equalities $\dot
  x = \dot y$ involving two weak variables, or an equality literal
  like the one in $\forall y \exists \dot x (\dot x= y)$. Finally,
  literals involving the successor function and weak variables are also
  bad. 

  In order to get rid of the bad literals, we will replace them by
  equivalent formulas that do not contain any bad literals. The idea
  is that we bind the variable or term that be wish to get rid of
  using a new existential quantifier. In order to avoid introducing
  too many new variables, for all of the following transformations we
  use the set of fresh variables $v_1$, $v_2$, and so on, where we may
  need more than one of these variables per literal, but will need no
  more than $O(\operatorname{arity}(\tau))$ (recall that
  $\operatorname{arity}(\tau)$ is the maximum arity of relation symbols in~$\tau$).
  
  Let us first get rid of the successor functions. If a bad literal
  $\lambda$ contains $\op{succ}^k(\mathring x)$ (where $\mathring x$
  indicates that $x$ may be strong or weak), we replace $\lambda$ by
  $\exists v_i \exists v_{i+1}(v_i = \op{succ}^k(0) \land
  \op{add}\mathring x v_i v_{i+1} \land
  \lambda[\op{succ}^k(\mathring x) \hookrightarrow v_{i+1}])$. Here,
  $\lambda [t_1 \hookrightarrow t_2]$ is our notation for the
  substitution of $t_1$ by~$t_2$ in~$\lambda$. The number~$i$ is chosen minimally so that $\lambda$
  contains neither $v_i$ nor $v_{i+1}$. Clearly, if we repeatedly
  apply this transformation to all literals containing the successor
  function, we get an equivalent formula in which no bad literal
  contains the successor function. Note that we use at most
  $2\operatorname{arity}(\tau)$ of the variables~$v_i$.

  Next, we get rid of the remaining bad literals, which are literals
  $\lambda$ that contain a weak variable $\dot x$, but are neither
  weak equalities not weak inequalities. This time, we replace
  $\lambda$ by $\exists v_i(v_i = \dot x \land \lambda[\dot x \hookrightarrow
  v_i])$ where, once more, $i$~is chosen minimally to avoid a name
  clash. Since this transformation reduces the number of weak
  variables in $\lambda$ and does not introduce a bad literal, sooner
  or later we will have gotten rid of all bad literals. Once more, for
  each literal we use at most $\operatorname{arity}(\tau)$ new
  variables from the~$v_i$. 

  Overall, we get that $\phi$ is equivalent to a formula without any
  bad literals in which we use at most $3\operatorname{arity}(\tau)$
  additional variables and whose quantifier rank is larger than that
  of~$\phi$ by at most $3\operatorname{arity}(\tau)$. Note
  that the transformation ensures that weak variables stay
  weak. Applied to our example formula, we get: 
  \begin{align*}
    \forall v \exists \dot x_1\exists \dot x_2 \exists a
    \bigl(
    &\exists v_1 (v_1 = \dot x_1 \land E a v_1) \land \dot x_2 \neq
      \dot x_1 \land {}\\
    &\forall c \exists \dot x_3 \exists \dot x_4 (\exists v_1(v_1 =
    \dot x_3 \land \exists v_2(v_2 = \dot x_4 \land E v_1 v_2)) \lor
      \exists z(\dot x_3 \neq \dot x_4 \land Pz\land Qc))\bigr) 
  \end{align*}
  
  \subparagraph*{Syntactic Transformations III: Accumulating Weak Inequalities.}

  We now wish to move all weak inequalities to the ``vicinity'' of the
  corresponding block of weak quantifiers. More precisely, just as we
  did earlier, we apply the following equivalences (interpreted once
  more as rules that are applied from left to right):
  \begin{align}
    (\dot x \neq \dot y \lor \alpha) \land \beta
    & \equiv
      (\dot x \neq \dot y \land \beta) \lor (\alpha \land \beta), \label{eq-or1}
    \\
    \exists x(\alpha \lor \beta)
    & \equiv
      \exists x(\alpha) \lor \exists x(\beta), \label{eq-or2}
    \\                                                   
    \exists z(\dot x \neq \dot y \land \alpha) & \equiv \dot x \neq \dot y \land \exists z(\alpha). \label{eq-inequ}
  \end{align}
  Note that these rules do not change which variables are weak.  
  When these rules can no longer be applied, the weak inequality
  are ``next'' to their quantifier block, that is, each subformula
  starting with weak quantifiers has the form 
  \begin{align*}
    \exists \dot x_{i_1} \cdots
    \exists \dot x_{i_k} \textstyle \bigvee_i \bigl(\bigl(\bigwedge_j
    \lambda_i^j\bigr) \land \alpha_i \bigr)
  \end{align*}
  where the $\alpha_i$ contain no weak
  inequalities while all $\lambda_i^j$ are weak inequalities.

  For our example formula, we get:
  \begin{align*}
    \phi\equiv\forall v \exists \dot x_1\exists \dot x_2 \bigl(  
    &\dot x_2 \neq
      \dot x_1 \land 
    \exists a \exists v_1 (v_1 = \dot x_1 \land E a v_1) \land {}\\
    &\forall c \bigl(\exists \dot x_3 \exists \dot x_4 (\exists v_1(v_1 =
    \dot x_3 \land \exists v_2(v_2 = \dot x_4 \land E v_1 v_2))) \lor{}\\
    &\phantom{\forall c \bigl(\exists \dot x_3 \exists \dot x_4(}(\dot x_3 \neq \dot x_4 \land \exists z(Pz\land Qc))\bigr)\bigr).
  \end{align*}

  Finally, we now swap each block of weak quantifiers with the
  following disjunction, that is, we apply the following equivalence
  from left to right:
  \begin{align*}
    \exists \dot x_{i_1} \cdots \exists \dot x_{i_k} \textstyle \bigvee_i \psi_i
    & \equiv \textstyle \bigvee_i \exists \dot x_{i_1} \cdots \exists \dot x_{i_k}\, \psi_i.
  \end{align*}
  If necessary, we rename weak variables to ensure once more that they
  are unique. 
  For our example, the different transformations yield:
  \begin{align*}
    \phi\equiv\forall v \exists \dot x_1\exists \dot x_2 \bigl(  
    &\dot x_2 \neq
      \dot x_1 \land 
    \exists a \exists v_1 (v_1 = \dot x_1 \land E a v_1) \land {}\\
    &\forall c \bigl(\exists \dot x_3 \exists \dot x_4 (\exists v_1(v_1 =
    \dot x_3 \land \exists v_2(v_2 = \dot x_4 \land E v_1 v_2))) \lor{}\\
    &\phantom{\forall c\bigl(} \exists \dot x_5 \exists \dot x_6(\dot x_5 \neq \dot x_6 \land \exists z(Pz\land Qc))\bigr)\bigr).
  \end{align*}
  Let us spell out the different $\psi_i$, $\lambda_i^j$, and
  $\alpha_i$ contained in the above formula: First, there is one block
  of weak variables ($\exists \dot x_1 \exists \dot x_2$) following
  $\forall v$ at the beginning. There is only a single $\psi_1$ for
  this block, which equals $(\bigwedge_{j=1}^1 \lambda_1^j) \land
  \alpha_1$ for $\lambda_1^1 = (\dot x_2 \neq \dot x_1)$ and $\alpha_1
  = \exists a \exists v_1 (v_1 = \dot x_1 \land E a v_1) \land \forall
  c (\dots)$. Second, there are two blocks of weak variables ($\exists
  \dot x_3 \exists \dot x_4$ and $\exists \dot x_5 \exists \dot x_6$)
  following~$\forall c$, which are followed by (new) formulas $\psi_1$
  and~$\psi_2$. The first is of the form $\psi_1 = (\bigwedge_{j=1}^0
  \lambda_1^j) \land \alpha_1$ and the second of the form $\psi_2 = (\bigwedge_{j=1}^1
  \lambda_2^j) \land \alpha_2$. There are no  $\lambda_1^j$ and we
  have $\alpha_1 = \exists v_1(v_1 = \dot x_3 \land \exists v_2(v_2 =
  \dot x_4 \land E v_1 v_2))$. We have $\lambda_2^1 = (\dot x_5 \neq
  \dot x_6)$ and we have $\alpha_2 = \exists z(Pz\land Qc)$.
  
  We make the following observation at this point: Inside
  each $\psi_i$, each of the variables $\dot x_{i_1}$ to $\dot x_{i_k}$ is used 
  \emph{at most once outside of weak inequalities.} The reason for
  this is that rules \eqref{eq-or1} and~\eqref{eq-or2} ensure that
  there are no disjunctions inside the $\psi_i$ that involve a weak
  variable~$\dot x$. Thus, the requirement ``in any subformula of
  $\psi_i$ of the form $\alpha \land \beta$ only $\alpha$ or $\beta$
  -- but not both -- may use $\dot x$ in a literal that is not a weak
  inequality'' from the definition of weak variables just boils down
  to ``$\dot x$ may only be used once in $\psi_i$ in a literal that is
  not a weak inequality.''

  \subparagraph*{Syntactic Transformations IV: Completing Weak Inequalities.}

  The last step before we can apply the color coding method is to
  ``complete'' the conjunctions of weak inequalities. After all the
  previous transformations have been applied, each block of weak
  quantifiers has now the form $\exists \dot x_1 \cdots \exists
  \dot x_k \bigl( \bigwedge_i \lambda_i \land
  \alpha\bigr)$ where the $\lambda_i$ are all weak inequalities
  (between some or all pairs of $\dot x_1$ to $\dot x_k$) and
  $\alpha$ contains no weak inequalities involving the~$\dot x_i$ (but
  may, of course, contain weak equalities involving the $\dot
  x_i$). Actually, the weak variables need not be $\dot x_1$ to $\dot
  x_k$, but let us assume this to keep to notation simple.

  The formula $\bigwedge_i \lambda_i$ expresses that some of the
  variables $\dot x_i$ must be different. If the formula encompasses
  all possible weak inequalities between distinct $\dot x_i$ and $\dot
  x_j$, then the formula would require that all $\dot x_i$ must be
  distinct -- exactly the situation in which color coding can be
  applied. However, some weak inequalities may be ``missing'' such as
  in the formula $\dot x_1 \neq \dot x_2 \land \dot x_2 \neq \dot x_3
  \land \dot x_1 \neq \dot x_3 \land \dot x_3 \neq \dot x_4$: This
  formula requires that $\dot x_1$ to $\dot x_3$ must be distinct and
  that $\dot x_4$ must be different from $\dot x_3$ -- but it would be
  allowed that $\dot x_4$ equals $\dot x_1$ or $\dot x_2$. Indeed, it
  might be the case that the only way to make $\alpha$ true is to make
  $\dot x_1$ equal to~$\dot x_4$. This leads to a problem in the
  context of color coding: We want to color $\dot x_1$, $\dot x_2$,
  and $\dot x_3$ differently, using, say, red, green, and blue. In
  order to ensure $\dot x_3 \neq \dot x_4$, we must give $\dot x_4$ a
  color different from blue. However, it would be wrong to color it
  red or green or using a new color like yellow since each would rule
  out $\dot x_4$ being equal or different from either $\dot x_1$ or
  $\dot x_2$ -- and each possibility must be considered to ensure that
  we miss no assignment that makes $\alpha$ true.

  The trick at this point is to reduce the problem of missing weak
  inequalities to the situation where all weak inequalities are
  present by using a large disjunction over all possible ways to unify
  weak variables without violating the weak inequalities.

  In detail, let us call a partition $P_1 \mathbin{\dot\cup} \cdots
  \mathbin{\dot\cup} P_l$ of the set $\{\dot x_1,\dots, \dot x_k\}$
  \emph{allowed by the~$\lambda_i$} if the 
  following holds: For each $P_j$ and any two different $\dot x_p,\dot
  x_q \in P_j$ none of the $\lambda_i$ is the inequality $\dot x_p
  \neq \dot x_q$. In other words, the $\lambda_i$ do not \emph{forbid}
  that the elements of any $P_j$ are identical. Clearly, the partition
  with $P_j = \{\dot x_j\}$ is always allowed by any $\lambda_i$, but
  in the earlier example, the partition $P_1 = \{\dot 
  x_1, \dot x_4\}, P_2 = \{\dot x_2\}, P_3 = \{\dot x_3\}$ would be
  allowed, while $P_1 = \{\dot x_1\}, P_2 = \{\dot x_2\}, P_3 = \{\dot
  x_3, \dot x_4\}$ would not be.

  We introduce the following notation: For a partition $P_1
  \mathbin{\dot\cup} \cdots \mathbin{\dot\cup} P_l = \{\dot x_1,\dots,
  \dot x_k\}$ we will write $\operatorname{distinct}(P_1,\dots,P_l)$ for
  $\bigwedge_{1 \le i < j \le l, \dot x_p \in P_i, \dot x_q \in P_j}
  \dot x_p \neq \dot x_q$. We claim the following:
  \begin{claim*}
    For any weak inequalities $\lambda_i$ we have 
    \begin{align*}
      \textstyle\bigwedge_i \lambda_i \equiv
      \bigvee_{\text{$P_1 \mathbin{\dot\cup} \cdots \mathbin{\dot\cup}P_l$ is allowed by the $\lambda_i$}}
      \operatorname{distinct}(P_1,\dots,P_l). 
    \end{align*}
  \end{claim*}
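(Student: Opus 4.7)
The plan is to prove the two directions of the equivalence separately, viewing the claim as essentially a bookkeeping statement: any satisfying assignment induces a canonical partition by its value classes, and conversely any allowed partition witnessing $\operatorname{distinct}$ is enough to force all listed inequalities. So I would argue pointwise: fix an arbitrary assignment $\sigma$ mapping the weak variables $\dot x_1,\dots,\dot x_k$ to universe elements, and show that $\sigma \models \bigwedge_i \lambda_i$ holds if, and only if, $\sigma$ satisfies $\operatorname{distinct}(P_1,\dots,P_l)$ for at least one partition $P_1 \mathbin{\dot\cup} \cdots \mathbin{\dot\cup} P_l$ of $\{\dot x_1,\dots,\dot x_k\}$ that is allowed by the $\lambda_i$.

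For the direction from left to right, I would use the \emph{value partition} of $\sigma$: declare $\dot x_p \sim \dot x_q$ iff $\sigma(\dot x_p) = \sigma(\dot x_q)$, and let $P_1 \mathbin{\dot\cup} \cdots \mathbin{\dot\cup} P_l$ be the resulting equivalence classes. This partition is allowed by the $\lambda_i$: if some $\lambda_i$ were the inequality $\dot x_p \neq \dot x_q$ with $\dot x_p, \dot x_q$ in the same class, then $\sigma(\dot x_p) = \sigma(\dot x_q)$ would falsify $\lambda_i$, contradicting $\sigma \models \bigwedge_i \lambda_i$. Moreover, $\sigma \models \operatorname{distinct}(P_1,\dots,P_l)$ holds by construction, since any two variables in different classes are mapped by $\sigma$ to different elements.

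For the converse direction, suppose $\sigma$ satisfies $\operatorname{distinct}(P_1,\dots,P_l)$ for some partition allowed by the $\lambda_i$. Pick any $\lambda_i$, say $\lambda_i = (\dot x_p \neq \dot x_q)$. Because the partition is allowed, the very presence of this inequality among the $\lambda_i$ forbids $\dot x_p$ and $\dot x_q$ from sharing a block; hence they lie in distinct $P_a$ and $P_b$, and the corresponding conjunct of $\operatorname{distinct}(P_1,\dots,P_l)$ gives $\sigma(\dot x_p) \neq \sigma(\dot x_q)$, i.e.\ $\sigma \models \lambda_i$. Since $i$ was arbitrary, $\sigma \models \bigwedge_i \lambda_i$, completing the equivalence.

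There is no real obstacle here: both implications are one-line arguments once the correct partition is chosen. The only point that requires a brief word of care is the asymmetric nature of the definition of ``allowed'' — it constrains only pairs inside a block and imposes no requirement on cross-block pairs — which is exactly what makes the value partition of any satisfying assignment automatically allowed, and dually what makes every allowed partition with $\operatorname{distinct}$ automatically satisfy every listed $\lambda_i$.
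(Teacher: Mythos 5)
Your proof is correct and takes essentially the same route as the paper: both directions rest on the value partition induced by a satisfying assignment, with the converse reading off each listed inequality from the corresponding cross-block conjunct of $\operatorname{distinct}$. In fact you are slightly more careful than the paper, since you explicitly check that the value partition is allowed by the $\lambda_i$ — a step the paper leaves implicit even though it is needed for the partition to appear in the disjunction on the right-hand side.
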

  
  \begin{proof}
    For the implication from left to right, assume that $\mathcal  A
    \models \bigwedge_i \lambda_i (a_1,\dots,a_k)$ for some (not
    necessarily distinct) $a_1,\dots,a_k \in |\mathcal A|$. The
    elements induce a natural partition $P_1 
    \mathbin{\dot\cup} \cdots \mathbin{\dot\cup} P_l = \{\dot
    x_1,\dots, \dot x_k\}$ where two variables $\dot x_p$ and $\dot
    x_q$ are in the same set $P_j$ if, and only if, $a_p = a_q$. Then,
    clearly, for all $i$ and $j$ with $1 \le i < j \le l$ and any
    $\dot x_p\in P_i$ and $\dot x_q \in P_j$ we have $a_i \neq
    a_j$. Thus, all inequalities in
    $\operatorname{distinct}(P_1,\dots,P_l)$ are satisfied and, hence,
    the right-hand side.

    For the other direction, suppose that $\mathcal A$ is a model of
    the right hand side for some $a_1$ to~$a_k$. Then there must be
    a partition $P_1 \mathbin{\dot\cup} \cdots \mathbin{\dot\cup} P_l$
    that is allowed by the $\lambda_i$ such that $\mathcal A$ is also
    a model of $\operatorname{distinct}(P_1,\dots,P_l)$. Furthermore,
    each $\lambda_i$ is actually present in this last formula: If
    $\dot x_p \neq \dot x_q$ is one of the $\lambda_i$, then by the
    very definition of ``$P_1 \mathbin{\dot\cup} \cdots
    \mathbin{\dot\cup} P_l$ is allowed for the $\lambda_i$'' we must
    have that $\dot x_p$ and $\dot x_q$ lie in different $P_i$ and
    $P_j$ -- which, in turn, implies that $\dot x_p \neq \dot x_q$ is
    present in  $\operatorname{distinct}(P_1,\dots,P_l)$.
  \end{proof}

  Applied to the example $\dot x_1 \neq \dot x_2 \land \dot x_2 \neq
  \dot x_3 \land \dot x_1 \neq \dot x_3 \land \dot x_3 \neq \dot x_4$
  from above, the claim states the following: Since there are three
  partitions that are allowed by these literals (namely the one in
  which each variable gets its own equivalence class, the one where
  $\dot x_1$ and $\dot x_4$ are put into one class, and  the one where
  $\dot x_2$ and $\dot x_4$ are put into one class), this formula is
  equivalent to: $
      \operatorname{distinct}(\{\dot x_1\}, \{\dot x_2\}, \{\dot x_3\}, \{\dot x_4\}) \lor
      \operatorname{distinct}(\{\dot x_1,\dot x_4\}, \{\dot x_2\}, \{\dot x_3\}) \lor
      \operatorname{distinct}(\{\dot x_1\}, \{\dot x_2,\dot x_4\}, \{\dot x_3\})$.

  The claim has the following trivial corollary:
  \begin{corollary*}
    For any weak inequalities $\lambda_i$ involving only variables
    from $\{\dot x_1, \dots, \dot x_k\}$ we have\\ $\textstyle \exists \dot x_1 \cdots \exists \dot x_k
      \bigl(\bigwedge_i \lambda_i \land \alpha) 
      \equiv\textstyle
\bigvee_{\text{$P_1 \mathbin{\dot\cup} \cdots \mathbin{\dot\cup}P_l$ is allowed by the $\lambda_i$}}
      \exists \dot x_1 \cdots \exists \dot x_k
                 (\operatorname{distinct}(P_1,\dots,P_l) \land \alpha)$. 
  \end{corollary*}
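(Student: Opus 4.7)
The plan is to deduce the corollary directly from the preceding Claim by using two elementary equivalences of first-order logic: conjunction distributes over disjunction, and an existential quantifier distributes over a disjunction. Since the Claim already gives us $\bigwedge_i \lambda_i \equiv \bigvee_{P_1 \mathbin{\dot\cup}\cdots\mathbin{\dot\cup}P_l \text{ allowed}} \operatorname{distinct}(P_1,\dots,P_l)$, all that remains is to propagate this equivalence through the outer conjunction with $\alpha$ and the prefix of existential quantifiers.

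In more detail, I would proceed in three steps. First, substitute the right-hand side of the Claim into the left-hand side of the corollary, replacing $\bigwedge_i \lambda_i$ by the disjunction $\bigvee_P \operatorname{distinct}(P_1,\dots,P_l)$ (where the index ranges over the partitions allowed by the $\lambda_i$). This substitution is legitimate since equivalent subformulas may be exchanged under any logical context. Second, apply propositional distributivity to move $\alpha$ inside the disjunction, obtaining $\bigvee_P (\operatorname{distinct}(P_1,\dots,P_l) \land \alpha)$. Third, use the fact that $\exists x(\beta \lor \gamma) \equiv \exists x(\beta) \lor \exists x(\gamma)$, iterated over $\dot x_1,\dots,\dot x_k$, to pull the whole block of existential quantifiers past the disjunction, yielding the right-hand side of the corollary.

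Because each of these rewriting steps is a standard equivalence of first-order logic and the Claim has already been proved, I do not anticipate any genuine obstacle; the only thing to keep track of is that the disjunction is indexed by partitions, not by individual elements, so the quantifier block must be pushed over the \emph{whole} outer disjunction rather than into any individual disjunct. In particular no assumption on~$\alpha$ is needed: the transformation is purely syntactic and works uniformly, which is precisely why the author labels the statement a trivial corollary.
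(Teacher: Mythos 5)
Your proof is correct and is exactly the argument the paper has in mind; the paper labels the statement a ``trivial corollary'' and offers no proof, and what you write out -- substituting the Claim's equivalence, distributing $\land$ over $\bigvee$, then iterating $\exists x(\beta\lor\gamma)\equiv\exists x(\beta)\lor\exists x(\gamma)$ over the quantifier block -- is the standard verification that the authors leave implicit.
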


  As in the previous transformations we now apply the equivalence from
  the corollary from left to right. If we create copies of $\alpha$
  during this process, we rename the weak variables in these copies to
  ensure, once more, that each weak variable is unique. In our example
  formula $\phi$, there is only one place where the transformation 
  changes anything: The middle weak quantifier block (the
  $\exists \dot x_3 \exists \dot x_4$ block). For the first and the
  last block, the literals $\dot x_1 \neq \dot x_2$ and $\dot x_5 \neq
  \dot x_6$, respectively, already rule out all partitions except for
  the trivial one. For the middle block, however, there are \emph{no}
  weak inequalities at all and, hence, there are now two allowed
  partitions: First, $P_1 = \{\dot x_3\}, P_2 = \{\dot x_4\}$, but
  also $P_1 = \{\dot x_3,\dot x_4\}$. This means that we get a copy of
  the middle block where $\dot x_3$ and $\dot x_4$ are required to be
  different -- and we renumber them to $\dot x_7$ and $\dot x_8$:
  \begin{align*}
    \phi\equiv\forall v \exists \dot x_1\exists \dot x_2 \bigl(  
    &\dot x_2 \neq
      \dot x_1 \land 
    \exists a \exists v_1 (v_1 = \dot x_1 \land E a v_1) \land {}\\
    &\forall c \bigl(\exists \dot x_3 \exists \dot x_4 (\phantom{\dot x_3 \neq
      \dot x_4 \land{}} \exists v_1(v_1 =
      \dot x_3 \land \exists v_2(v_2 = \dot x_4 \land E v_1 v_2))) \lor{}\\
    &\phantom{\forall c \bigl(}\exists \dot x_7 \exists \dot x_8 (
      {\dot x_7 \neq
      \dot x_8 \land{}} \exists v_1(v_1 =
      \dot x_7 \land \exists v_2(v_2 = \dot x_8 \land E v_1 v_2))) \lor{}\\
    &\phantom{\forall c\bigl(} \exists \dot x_5 \exists \dot x_6(\dot x_5 \neq \dot x_6 \land \exists z(Pz\land Qc))\bigr)\bigr).
  \end{align*}

  \subparagraph*{Applying Color Coding.}

  We are now ready to apply the color coding technique; more
  precisely, to repeatedly apply Theorem~\ref{thm-cc-intro} to the
  formula~$\phi$. Before we do so, let us summarize the structure of
  $\phi$:
  \begin{enumerate}
  \item All weak quantifiers come in blocks, and each such block
    either directly follows a universal quantifier or follows a
    disjunction after a universal quantifier. In particular, on any
    root-to-leaf path in the syntax tree of $\phi$ between any two
    blocks of weak quantifiers there is at least one universal
    quantifier.
  \item All blocks of weak quantifiers have the form
    \begin{align}
      \exists \dot x_{i_1}
      \cdots \exists \dot x_{i_k} \bigl(\operatorname{distinct}(P_1,\dots,P_l) \land \alpha\bigr)\label{eq-form}
    \end{align}
    for some partition $P_1 \mathbin{\dot\cup} \cdots
    \mathbin{\dot\cup} P_l = \{x_{i_1}, \dots, x_{i_k}\}$ and for some
    $\alpha$ in which the only literals that contain any $\dot
    x_{i_j}$ are of the form $\dot x_{i_j} = 
    y$ for a strong variable~$y$ that is bound by an existential quantifier
    inside~$\alpha$. Furthermore, none of these weak equality literals
    is in the scope of a universal quantifier inside~$\alpha$. (Of
    course, all variables in $\phi$ are in the scope of a universal
    quantifier since we added one at the start, but the point is that
    none of the $\dot x_i$ is in the scope of a universal quantifier
    that is inside~$\alpha$.)
  \end{enumerate}
  In $\phi$ there may be several blocks of weak quantifiers, but at
  least one of them (let us call it~$\beta$) must have the form
  \eqref{eq-form} where $\alpha$ contains no weak variables other than
  $\dot x_{i_1}$ to $\dot x_{i_k}$. (For instance, in our example formula,
  this is the case for the blocks 
  starting with $\exists \dot x_3 \exists \dot x_4$, for $\exists \dot
  x_7 \exists \dot x_8$, and for $\exists \dot x_5 \exists \dot x_6$, but not for
  $\exists \dot x_1 \exists \dot x_2$ since, here, the corresponding
  $\alpha$ contains all of the rest of the formula.)
  In our example, we could choose $\beta = \exists \dot x_7 \exists
  \dot x_8 (\dot x_7 \neq \dot x_8 \land \exists v_1(v_1 = 
  \dot x_7 \land \exists v_2(v_2 = \dot x_8 \land E v_1 v_2)))$ and
  would then have
  \begin{align*}
    \alpha = \exists v_1(v_1 =
    \dot x_7 \land \exists v_2(v_2 = \dot x_8 \land E v_1 v_2)).
  \end{align*}
  
  We build a new formula $\alpha'$ from $\alpha$ as follows: We replace
  each occurrence of a weak equality $\dot x_i = y$ in $\alpha$ for some
  weak variable $\dot x_i \in P_j$ and some strong variable~$y$ by the 
  formula~$C_j y$. In our example, where $P_1 = \{\dot x_7\}$ and $P_2
  = \{\dot x_8\}$ we would get
  \begin{align*}
    \alpha' = \exists v_1(C_1 v_1 \land
    \exists v_2(C_2 v_2 \land E v_1 v_2)).
  \end{align*}
  An important observation at
  this point is that $\alpha'$ \emph{contains 
    no weak variables any longer, while no additional variables have
    been added.} In particular, the quantifier rank of $\alpha'$
  equals the \emph{strong} quantifier rank of $\alpha$ and the number
  of variables in $\alpha'$ equals the number of \emph{strong}
  variables in~$\alpha$.

  Note that the literals $C_j y$ and also $\dot x_i =y$ are positive
  since the formulas are in negation normal form. Hence, they have the
  following monotonicity 
  property: If some structure together with some assignment to the
  free variables is a model of $\alpha$ or $\alpha'$, but a literal
  $\dot x_i = y$ or $C_j y$ is false, the structure will still be a
  model if we replace the literal by a tautology. 

  For simplicity, in the following, we assume that $\dot x_{i_1}$ to
  $\dot x_{i_k}$ are just $\dot x_1$ to $\dot x_k$. Also for
  simplicity we assume that $\beta$ contains no free variables when,
  in fact, it can. However, these variables cannot be any of the
  variables~$y$ for which we make changes and, thus, it keeps the
  notation simpler to ignore the additional free variables here. The
  following statement simply holds for all assignments to them:

  \begin{claim*}
    Let $P_1 \mathbin{\dot\cup} \cdots \mathbin{\dot\cup} P_l = \{x_1,
    \dots, x_k\}$. Then for each structure $\mathcal A$, the following
    are equivalent: 
    \begin{enumerate}
    \item $\mathcal A \models \exists \dot x_1 \cdots \exists \dot x_k
      \bigl(\operatorname{distinct}(P_1,\dots,P_l) \land \alpha\bigr)$. 
    \item There are elements $a_1, \dots, a_k \in |\mathcal A|$ with 
      $\mathcal A \models \alpha (a_1,\dots,a_k)$ and such that $a_p
      \neq a_q$ whenever $\dot x_p \in P_i$, $\dot x_q \in P_j$, and
      $i\neq j$. 
    \item There is an $l$-coloring $\mathcal B$ of $\mathcal A$ such
      that $\mathcal B \models \alpha'$.
    \end{enumerate}
  \end{claim*}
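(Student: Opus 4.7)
The plan is to prove the cycle $1 \Rightarrow 2 \Rightarrow 3 \Rightarrow 1$, where the equivalence $1 \Leftrightarrow 2$ is immediate from the semantics of existential quantification together with the explicit formula for $\operatorname{distinct}(P_1,\dots,P_l)$ given just before the claim, so the real content lies in the implications involving~3.

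For $2 \Rightarrow 3$, given witnesses $a_1,\ldots,a_k$, I would build $\mathcal B$ by placing every $a_p$ into $C_j^{\mathcal B}$ whenever $\dot x_p \in P_j$ (well defined because the distinctness hypothesis forbids $a_p = a_q$ across different parts, while distinct $\dot x_p,\dot x_q$ in the same part mapping to the same color is unproblematic) and then distributing the remaining elements of $|\mathcal A|$ arbitrarily to complete the partition. The crucial observation is that passing from $\alpha$ evaluated with $\dot x_p = a_p$ to $\alpha'$ only weakens the weak-equality literals: $\dot x_p = y$ holds exactly when $y = a_p$, whereas $C_j y$ holds for every $y \in C_j^{\mathcal B}$, a superset of $\{a_p\}$. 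Since $\alpha$ is in negation normal form, each such literal appears positively, so monotonicity of truth under replacing a positive literal by a logically weaker one delivers $\mathcal B \models \alpha'$.

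For $3 \Rightarrow 1$, suppose $\mathcal B \models \alpha'$ via a fixed satisfying assignment $\sigma$ that chooses a value for every existentially bound variable and a disjunct at every disjunction. For each weak variable $\dot x_i \in P_j$ I define $a_i$ as follows. If $\dot x_i$ appears in the unique weak equality $\dot x_i = y_i$ in $\alpha$ and that literal lies on a branch selected by $\sigma$, set $a_i := \sigma(y_i)$; this is well defined because $y_i$ is, by hypothesis, outside every universal scope inside $\alpha$, so $\sigma$ assigns it a single definite value, and because each $\dot x_i$ appears in at most one such literal. Otherwise (the literal lies in an unselected disjunct, or $\dot x_i$ is used only in weak inequalities) I pick any $a_i \in C_j^{\mathcal B}$, relying on Lemma~\ref{lemma-eventually} to handle tiny universes where some relevant $C_j^{\mathcal B}$ might be empty. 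Distinctness comes for free from the disjointness of the color classes: $\dot x_p \in P_i$ and $\dot x_q \in P_j$ with $i \neq j$ give $a_p \in C_i^{\mathcal B}$ and $a_q \in C_j^{\mathcal B}$. To verify $\mathcal A \models \alpha(a_1,\ldots,a_k)$ I mirror $\sigma$ inside $\alpha$: copy every disjunct choice, every existential value (in particular setting $y_i := \sigma(y_i) = a_i$), and every universal instantiation; every literal other than a weak equality evaluates identically in $\alpha$ and $\alpha'$, while every reached weak equality becomes $a_i = a_i$ by construction.

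The main obstacle is this last direction, specifically the interaction between disjunctions and the choice of $a_i$: a weak equality can sit in a disjunct that $\sigma$ abandons, leaving $\sigma(y_i)$ undefined there. The hypotheses that weak variables appear at most once outside weak inequalities and that weak equalities lie outside all universal scopes inside $\alpha$ are exactly what decouple the choices of different $a_i$'s, so that an arbitrary pick from $C_j^{\mathcal B}$ in unused branches causes no inconsistency with the mirrored evaluation of $\alpha$.
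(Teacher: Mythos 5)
Your proof is correct and follows the same overall strategy as the paper's: $1 \Leftrightarrow 2$ is immediate, $2 \Rightarrow 3$ colors the witness values and then invokes positivity of the literals, and the final direction again invokes positivity after extracting suitable values $a_i$ from a model of $\alpha'$. The technical execution of the final direction differs. The paper first performs an auxiliary syntactic normalization on $\alpha$ and $\alpha'$ (used only inside the claim's proof, since it ruins the variable count): each weak-equality variable is renamed to a fresh $y_i$ and all the $\exists y_i$ are pulled into a single leading block, so that $\alpha = \exists y_1 \cdots \exists y_k(\delta)$ and $\alpha' = \exists y_1 \cdots \exists y_k(\delta')$. A concrete witness tuple $(b_1,\dots,b_k)$ then replaces your informal satisfying assignment $\sigma$; the case split $b_i \in C_j^{\mathcal B}$ versus $b_i \notin C_j^{\mathcal B}$ plays the role of your reached/unreached dichotomy; and monotonicity is applied literal by literal to $\delta$ and $\delta'$. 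Your route skips the normalization at the cost of relying on a less formal object ($\sigma$ must fix both existential witnesses and disjunct choices, and one must convince oneself that ``mirroring $\sigma$ inside $\alpha$'' is well defined); the paper's preprocessing makes exactly those steps concrete. Both are sound, and both hinge on the same two facts you isolate: weak equalities sit outside universal scopes and each weak variable occurs at most once outside the inequalities.

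One wrinkle that you flag and the paper leaves silent: in the unreached case you pick an arbitrary $a_i \in C_j^{\mathcal B}$, which could be empty (the paper's ``let $a_i$ be an arbitrary element of $C_j^{\mathcal B}$'' has the same issue). When $C_j^{\mathcal B}$ is empty no weak equality of any $\dot x_i \in P_j$ can be reached, so $a_i$ is only constrained by $\operatorname{distinct}$ and it suffices to pick any element avoiding the values already assigned to other parts; this is possible whenever the universe has at least $k$ elements, and the small-universe remainder is exactly what Lemma~\ref{lemma-eventually} is there to absorb. Strictly speaking the appeal to that lemma belongs where the claim is applied in Theorem~\ref{theorem-strong-character} rather than inside the claim's own proof, which quantifies over all structures, but the instinct is right.
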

  
  \begin{proof}
    For the proof of the claim, it will be useful to apply some
    syntactic transformations to $\alpha$ and~$\alpha'$. Just like the
    many transformations we encountered earlier, these transformations
    yield equivalent formulas and, thus, it suffices to prove the
    claim for them (since the claim is about the models of $\alpha$
    and $\alpha'$). However, these transformation are needed only to
    prove the claim, they are \emph{not} part of the ``chain of
    transformations'' that is applied to the original formula (they
    increase the number of strong variables far too much).

    In $\alpha$ there will be some occurrences of literals of the form
    $\dot x_i =y$. For each such occurrence, there will be exactly one
    subformula in $\alpha$ of the form $\exists y(\gamma)$ where
    $\gamma$ contains $\dot x_i =y$. We now apply two syntactic
    transformations: First, we replace $y$ in $\exists y(\gamma)$ by a
    fresh new variable~$y_i$ (that is, we replace all free occurrences
    of~$y$ inside $\gamma$ by~$y_i$ and we replace the leading
    $\exists y$ by~$\exists y_i$). Second, we ``move all $\exists y_i$
    to the front'' 
    by simply deleting all occurrences of $\exists y_i$ from $\alpha$,
    resulting in a formula $\delta$, and then adding the block
    $\exists y_1 \cdots \exists y_k$ before $\delta$. As an
    example, if we apply these transformations to $\alpha = \exists 
    v_1(\dot x_7 = v_1 \land \exists v_2(\dot x_8 = v_2 \land E v_1
    v_2))$, the first transformation yields $\exists y_7(\dot x_7 =
    y_7 \land \exists y_8(\dot x_8 = y_8 \land E y_7 y_8))$ and the
    second one yield the new
    \begin{align*}
      \alpha = \exists y_1 \cdots \exists y_8
      (\dot x_7 = y_7 \land \dot x_8 = y_8 \land E y_7 y_8).
    \end{align*}
    
    In $\alpha'$, we apply exactly the same transformations, only now
    the literals we look for are not $\dot x_i = y$, but $C_j y$. We
    still apply the same renaming of~$y$ (namely to $y_i$ and not to
    $y_j$) as in $\alpha$ and apply the same movement of the
    quantifiers. This results in a new formula $\alpha'$
    of the form $\exists y_1 \cdots \exists y_k(\delta')$. For
    $\alpha' =  \exists  v_1(C_1 v_1 \land \exists v_2(C_2 v_2 \land E
    v_1 v_2))$ we get the new
    \begin{align*}
      \alpha' = \exists y_1 \cdots \exists y_8 (C_1
      y_7 \land C_2 y_8 \land E y_7 y_8)
    \end{align*}
    and $\delta'$ is now the inner part without the quantifiers. 

    Let us now prove the claim. The first two items are trivially
    equivalent by the 
    definition of $\operatorname{distinct}(P_1,\penalty100\dots,P_l)$.

    The second statement implies the third: To show this, for $j \in
    \{1,\dots,l\}$ we first set  
    $C_j^{\mathcal B} = \{a_i \mid \dot x_i \in P_j\}$ and then add
    $|\mathcal A| \setminus \{a_1,\dots,a_k\}$ to, say, $C_1^{\mathcal
      B}$ in order to create a correct partition. This setting clearly
    ensures that whenever $\dot x_i = y$ holds in $\alpha$, we also have
    $C_j y$ holding in~$\alpha'$. Since $\alpha'$ differs from
    $\alpha$ only on the literals of the form $\dot x_i = y$ (which
    got replaced by $C_j y$), since we just saw that when $\dot x_i =
    y$ holds in $\alpha$, the replacements $C_j y$ holds in $\alpha'$,
    and since $\alpha$ has the monotonicity property (by which it does
    matter when \emph{more} literals of the form $C_i y$ hold in
    $\alpha'$ than did in~$\alpha$), we get the third statement.

    The third statement implies the second: Let an $l$-coloring
    $\mathcal B$ of $\mathcal A$ be given with $\mathcal B \models
    \alpha'$. Since $\alpha' = \exists y_1 \cdots \exists y_k
    (\delta')$, there must now be elements $b_1,\dots,b_k \in
    |\mathcal A|$ such that $\mathcal B \models
    \delta'(b_1,\dots,b_k)$. We define new elements $a_i \in |\mathcal
    A|$ as follows: If $b_i \in C_i^{\mathcal B}$, let
    $a_i=b_i$. Otherwise, let $a_i$ be an arbitrary element of
    $C_i^{\mathcal B}$.  We show in the following that the $a_i$
    constructed in this way can be used in the  second statement, that
    is, we claim that $\mathcal A \models \alpha (a_1,\dots,a_k)$ and
    the $a_i$ have the distinctness property from the claim.

    First, recall that $\alpha$ is of the form $\exists y_1
    \cdots \exists y_k (\delta)$ (because of
    the syntactic transformations we applied for the purposes of the
    proof of this claim) and $\delta$ contains literals of the form
    $\dot x_i = y_i$, where the $\dot x_i$ are the free variables for
    which the values $a_i$ and now plugged in. We claim that $\mathcal
    A \models \delta(a_1,\dots,a_k,b_1,\dots,b_k)$, that is, we claim
    that if we plug in $a_1$ to $a_k$ for the free variables $\dot
    x_1$ to $\dot x_k$ in $\delta$ and we plug in $b_1$ to $b_k$
    for the (additional) free variables $y_1$ to $y_k$ in
    $\delta$, then $\delta$ holds in~$\mathcal A$. To see this, recall
    that $\mathcal B \models \delta'(b_1,\dots,b_k)$ holds and
    $\delta'$ is identical to $\delta$ except that $\dot x_i = y_i$
    got replaced by $C_j y_i$. In particular, by construction of
    the~$a_i$, whenever $C_j y_i$ holds in $\mathcal B$ with $y_i$
    being set to $b_i$ (that is, whenever $b_i \in C_j^{\mathcal B}$),
    we clearly also have that $\dot x_i = y_i$ holds in $\mathcal A$
    with $\dot x_i$ being set to $a_i$ and $y_i$ being set to~$b_i$
    (since we let $a_i = b_i$ whenever $b_i \in C_j^{\mathcal
      B}$). But, then, by the monotonicity property, we know that
    $\mathcal A \models \delta(a_1,\dots,a_k,b_1,\dots,b_k)$  will
    hold.

    Second, we argue that the distinctness property holds, that is,
    $a_p \neq a_q$ whenever $\dot x_p \in P_i$, $\dot x_q \in P_j$, and
    $i\neq j$. However, our construction ensured that we always have
    $a_r \in C_s^{\mathcal B}$ for the~$s$ with $\dot x_r \in P_s$. In
    particular, $\dot x_p \in P_i$ and $\dot x_q \in P_j$ for
    $i\neq j$ implies that $a_p$ and $a_q$ lie in two different color
    classes and are, hence, distinct.
  \end{proof}

  By the claim, $\mathcal A \models \beta$ is equivalent to there
  being an $l$-coloring $\mathcal B$ of $\mathcal A$ such
  that $\mathcal B \models \alpha'$. We now apply
  Theorem~\ref{thm-cc-intro} to~$\alpha'$ (as $\phi$), which yields a
  new formula $\alpha''$ (called $\phi'$ in the theorem) with the
  property $\mathcal A \models \alpha'' \iff \mathcal A \models
  \beta$. The interesting thing about $\alpha''$ is, of course, that
  it has the same quantifier rank and the same number of variables as
  $\alpha'$ plus some constant. Most importantly, we already pointed
  out earlier that $\alpha'$ does \emph{not} contain any weak
  variables and, hence, the quantifier rank of $\alpha''$ is the same
  as the \emph{strong} quantifier rank of $\beta$ and the number of
  variables in $\alpha''$ is the same as the number of \emph{strong}
  variables in $\beta$ -- plus some constant.

  Applying this transformation to our running example $\phi$ and
  choosing as $\beta$ once more the subformula starting with $\exists
  \dot x_7 \exists \dot x_8$, we would get the following formula
  (ignoring the technical issues how, exactly, the hashing is
  implemented, see the proof of Theorem~\ref{thm-cc-intro} for the
  details):    
  \begin{align*}
    \forall v \exists \dot x_1\exists \dot x_2 \bigl(  
    &\dot x_2 \neq
      \dot x_1 \land 
    \exists a \exists v_1 (v_1 = \dot x_1 \land E a v_1) \land {}\\
    &\forall c \bigl(\exists \dot x_3 \exists \dot x_4 (\phantom{\dot x_3 \neq
      \dot x_4 \land{}} \exists v_1(v_1 =
      \dot x_3 \land \exists v_2(v_2 = \dot x_4 \land E v_1 v_2))) \lor{}\\
    &\phantom{\forall c \bigl(}\textstyle \bigvee_g \exists p \exists q 
      \exists v_1(\op{hash}_{g}(v_1,p,q)=1 \land \exists v_2(\op{hash}_{g}(v_2,p,q)=2 \land E v_1 v_2)) \lor{}\\
    &\phantom{\forall c\bigl(} \exists \dot x_5 \exists \dot x_6(\dot x_5 \neq \dot x_6 \land \exists z(Pz\land Qc))\bigr)\bigr).
  \end{align*}
  
  We can now repeat the transformation to replace each block $\beta$
  in this way. Observe that in each transformation we can reuse the
  variables (in particular, $p$ and $q$) introduced by the color
  coding:
  \begin{align*}
    \forall v  \textstyle\bigvee_g \exists p \exists q \bigl(  
    &\exists a \exists v_1 (\op{hash}_g(v_1,p,q) = 1 \land E a v_1) \land {}\\
    &{\forall c \bigl(}\textstyle \bigvee_g \exists p \exists q 
      \exists v_1(\op{hash}_{g}(v_1,p,q)=1 \land \exists v_2(\op{hash}_{g}(v_2,p,q)=1 \land E v_1 v_2)) \lor{}\\
    &\phantom{\forall c \bigl(}\textstyle \bigvee_g \exists p \exists q 
      \exists v_1(\op{hash}_{g}(v_1,p,q)=1 \land \exists v_2(\op{hash}_{g}(v_2,p,q)=2 \land E v_1 v_2)) \lor{}\\
    &\textstyle\phantom{\forall c\bigl(}   \bigvee_g \exists p
      \exists q \exists z(Pz\land Qc)\bigr)\bigr). 
  \end{align*}

  In conclusion, we see that we can transform the original formula
  $\phi$ to a new formula $\phi'$ with the following properties:
  \begin{itemize}
  \item We added new variables and quantifiers to $\phi'$ compared
    to $\phi$ during the first transformation steps, but the number we
    added depended only on the signature~$\tau$ (it was three
    times the maximum arity of relations in~$\tau$).
  \item We then removed all weak variables from $\phi$ in~$\phi'$.
  \item We added some variables to $\phi'$ each time
    we applied Theorem~\ref{thm-cc-intro} to a block~$\beta$. The
    number of variables we added is constant since
    Theorem~\ref{thm-cc-intro} adds only a constant number of
    variables and since we can always reuse the same set of variables
    each time the theorem is applied.
  \item We also added some quantifiers to $\phi'$ each time we applied
    Theorem~\ref{thm-cc-intro}, which increases the quantifier rank of
    $\phi'$ compared to $\phi$ by more than a constant. However, the
    essential quantifiers we add are $\exists p \exists q$ and these
    are \emph{always added directly after a universal quantifier or
      directly after a disjunction after a universal quantifier.} Since the
    strong quantifier rank of~$\phi$ is at least the quantifier rank
    of $\phi$ where we only consider the universal quantifiers (the
    ``universal quantifier rank''), the two added nested quantifiers
    per universal quantifiers can add to the quantifier rank of
    $\phi'$ at most twice the universal quantifier rank.
  \end{itemize}

  Putting it all together, we see that $\phi'$ is equivalent to
  $\phi$, that $\phi'$ has a quantifier rank that is at most $3
  \operatorname{strong-qr}(\phi) + O(\operatorname{arity}(\tau))$, and the $\phi'$ contains at
  most $\operatorname{strong-bound}(\phi) + O(\operatorname{arity}(\tau))$ variables.  
\end{proof}

We already mentioned that the notion of weak existential quantifiers
begs a dual: By Theorem~\ref{theorem-strong-character}, for $\phi =
\exists \dot x_1 \cdots \exists \dot x_k (\psi)$ there is an
equivalent formula $\phi'$ with $\operatorname{qr}(\phi') =
O(\operatorname{strong-qr}(\phi))$. Since, trivially,
$\operatorname{qr}(\neg \phi') = \operatorname{qr}(\phi')$, the
formula $\neg \phi$ is also equivalent to a formula of quantifier rank
$O(\operatorname{strong-qr}(\phi))$. The normal form of $\neg
\phi$ starts with $\forall x_1 \cdots \forall x_k$ to which
Theorem~\ref{theorem-strong-character} does not apply ``at all'' -- but
the dual of the theorem applies, where we call the leading quantifier
in a (sub)formula $\forall x (\phi)$ \emph{weak} if no
\emph{existential} binding inside~$\phi$ depends on~$x$ and in all
subformulas of~$\phi$ of the form $\alpha \lor \beta$ at most one of
$\alpha$ and $\beta$ may contain a literal that contains~$x$ and is
not of the form $x = y$ (note that this is now an equality). More
interestingly, we can even show that both kinds of weak quantifiers
may be present: 

\begin{theorem}\label{theorem-strong-character-ext}
  Theorem~\ref{theorem-strong-character} still holds when $\phi$ may
  contain both existential and universal weak variables, none of which
  count towards the strong quantifier rank nor count as strong bound
  variables. 
\end{theorem}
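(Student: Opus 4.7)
The plan is to mirror the proof of Theorem~\ref{theorem-strong-character} by duplicating every transformation in dualized form for universal weak quantifiers. The cornerstone observation is that the defining conditions for a universal weak quantifier are the exact De Morgan duals of those for an existential weak quantifier: replacing a formula by the negation normal form of its negation swaps $\exists \leftrightarrow \forall$, swaps $\land \leftrightarrow \lor$, and negates every literal (in particular $x=y \leftrightarrow x\neq y$); a quantifier is universal weak in the original if, and only if, the matching quantifier is existential weak in the dual, while the strong quantifier rank is preserved. As a corollary, Theorem~\ref{thm-cc-intro} admits a mirror statement, obtained by applying the original to the negated formula and negating back, for formulas in negation normal form in which no color predicate $C_i$ occurs inside an \emph{existential} scope.

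I would run the proof of Theorem~\ref{theorem-strong-character} essentially unchanged while, in parallel, dualizing every step for universal weak blocks. The quantifier-pushing rules acquire mirror images such as $\forall \dot x(\alpha) \lor \beta \equiv \forall \dot x(\alpha \lor \beta)$ and $\forall y \forall \dot x(\alpha) \equiv \forall \dot x \forall y(\alpha)$, which collect universal weak quantifiers into blocks sitting directly below existential scopes. The ``weak and strong literals'' stage is duplicated: a bad literal $\lambda$ containing a universal weak variable $\dot x$ is broken up using a universal replacement $\forall v_i(v_i \neq \dot x \lor \lambda[\dot x \hookrightarrow v_i])$ in place of the existential one. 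The accumulation step gathers weak equalities $\dot x = \dot y$ (the dual of weak inequalities) adjacent to each universal weak block by the dual rewrites, and the completion step becomes a conjunction over partitions compatible with the accumulated weak equalities. Each canonical universal weak block---a prefix $\forall \dot x_1 \cdots \forall \dot x_k$ over a disjunction of weak equalities together with an inner $\alpha$---is then eliminated by dualizing, applying Theorem~\ref{thm-cc-intro} to the resulting existential weak block (whose inner formula has colors only in existential scopes), and negating back; this introduces $O(1)$ new quantifiers and $O(1)$ new variables per block, placed directly below the enclosing existential ancestor.

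The main obstacle is to verify that, when the two flavors of weak quantifier are interleaved, the dualized and the original transformations do not interfere. When pushing a weak (in)equality past a weak block of the opposite type, or moving one kind of weak quantifier past another, one must check that no variable is captured and that the weakness classification of all remaining weak variables is preserved. Keeping all weak variables mutually distinct throughout the procedure (as in the original proof) makes this routine but tedious: every rewrite is either a simple propositional equivalence or a quantifier movement across a formula that does not share the moved variable. The final accounting of quantifier rank and variable count is unchanged from the original---each weak block is absorbed into $O(1)$ new quantifiers inserted directly below the enclosing scope of opposite type---so the totals remain $\operatorname{qr}(\phi') = 3\operatorname{strong-qr}(\phi) + O(\operatorname{arity}(\tau))$ and $|\operatorname{bound}(\phi')| = |\operatorname{strong-bound}(\phi)| + O(\operatorname{arity}(\tau))$.
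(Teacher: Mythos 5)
Your high-level plan — exploit the De Morgan duality of the two weakness conditions and dualize the whole pipeline — is the right instinct, and it is also where the paper starts. But the execution has a concrete gap at exactly the point you flag as "routine but tedious." You claim that keeping weak variables pairwise distinct suffices to make the preservation of weakness routine, since "every rewrite is either a simple propositional equivalence or a quantifier movement across a formula that does not share the moved variable." This is false. Weakness in Definition~\ref{def-weak} is a property of the \emph{scope}, not just of which literals mention the variable. Consider $\forall c\,\bigl(\exists \dot a\,\alpha(\dot a,c) \land \forall \dot b\,\beta(\dot b,c)\bigr)$ with $\dot a$ existentially weak and $\dot b$ universally weak; the variables are distinct and neither occurs in the other's subformula. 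Pushing both up (as your dualized Transformations~I would do) yields $\forall c\,\exists\dot a\,\forall\dot b\,(\alpha(\dot a,c)\land\beta(\dot b,c))$ or $\forall c\,\forall\dot b\,\exists\dot a\,(\alpha(\dot a,c)\land\beta(\dot b,c))$. In the first form $\forall\dot b$ now scopes over $\alpha(\dot a,c)$, so $\dot b$ \emph{depends} on $\dot a$ and condition~1 of Definition~\ref{def-weak} makes $\exists\dot a$ strong; in the second form $\exists\dot a$ scopes over $\beta(\dot b,c)$ and the dual condition makes $\forall\dot b$ strong. You cannot place existential weak blocks directly below universal scopes \emph{and} universal weak blocks directly below existential scopes simultaneously by reordering alone.

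The paper resolves exactly this by not dualizing the whole pipeline in parallel. Instead, it first pushes \emph{all} quantifiers down (rules \eqref{eq-tr1}--\eqref{eq-tr-last}), which — together with the observation that weak existential and weak universal quantifiers commute precisely because neither can sit in the other's scope — allows it to split the formula along conjunctions/disjunctions so that whole subformulas contain only one flavor of weak quantifier. The key structural claim is that after this preprocessing there is always a maximal single-flavor subformula $\alpha_i$ or $\beta_i$ sitting directly under a \emph{strong} quantifier, to which Theorem~\ref{theorem-strong-character} or its dual can be applied \emph{as a black box}; repeating eliminates all weak quantifiers while adding only $O(1)$ variables and $O(1)$ nesting per strong quantifier. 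This separation step (distributivity plus downward pushing) is what your proposal omits, and without it the weakness classification that your later stages rely on is not preserved. If you add that separation argument, the rest of your dualized pipeline is unnecessary — you would be reproving Theorem~\ref{theorem-strong-character} rather than invoking it.
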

\begin{proof}
  Given a formula $\phi$ that contains both existential and universal
  weak quantifiers, we apply a syntactic preprocessing that
  ``separates these quantifiers and moves them before their dual
  strong quantifiers.'' The key observation that 
  makes these transformations possible in the mixed case is that weak
  existential and weak universal quantifiers commute: For instance,
  $\exists \dot x   (\alpha \land \forall \dot y (\beta)) \equiv
  \forall \dot y (\beta \land \exists \dot x (\alpha))$ since $\dot x$
  and $\dot y$ cannot depend on one another by the core property of
  weak quantifiers ($\alpha$ cannot contain~$\dot y$ and $\beta$
  cannot contain~$\dot x$). Once we have sufficiently separated the
  quantifiers, we can repeatedly apply
  Theorem~\ref{theorem-strong-character} or its dual to each block
  individually. 

  As a running example, let us use the following formula $\phi$:
  \begin{align*}
    \exists \dot x\exists a (E \dot x a \land \forall b(Eba \lor Eab) \land \forall
    \dot y(E a \dot y \land \exists \dot z(E a\dot z)) \land \exists
    \dot w(E \dot w a)),
  \end{align*}
  which mixes existential and universal weak variables rather freely.

  Similar to the proof of Theorem~\ref{theorem-strong-character}, for 
  technical reasons we first add the superfluous quantifiers $\exists
  v \forall v$ for a fresh strong variable~$v$ at the beginning
  of the formula. 

  Our main objective is to get rid of alternations of weak
  universal and weak existential quantifiers without a strong
  quantifier in between. In the example, this is the case, for
  instance, for $\exists \dot x(\dots \forall \dot y(\dots \exists
  \dot z\dots))$. We get rid of these situations by pushing all
  quantifiers (weak or strong) \emph{down} as far as possible (later
  on, when we apply Theorem~\ref{theorem-strong-character}, we will
  push them up once more). Let us write $\mathring x$ to indicate that
  $x$ may both be a weak or a strong variable.

  If $\beta$ does not contain~$\mathring x$ as a free variable, we
  can apply the following equivalences from left to right (and, of
  course, commutatively equivalent ones):
  \begin{align}
    \exists \mathring x(\alpha \land \beta) &\equiv \exists \mathring x(\alpha) \land \beta, \label{eq-tr1}\\
    \exists \mathring x(\alpha \lor \beta) &\equiv \exists \mathring x(\alpha) \lor \beta, \label{eq-tr2}\\
    \forall \mathring x(\alpha \land \beta) &\equiv \forall \mathring x(\alpha) \land \beta, \label{eq-tr3}\\
    \forall \mathring x(\alpha \lor \beta) &\equiv \forall \mathring x(\alpha) \lor \beta. \label{eq-tr4}
  \end{align}
  Note that the definition of weak variables forbids that a
  universally bound variable depends on an existential weak variable
  (and vice versa). This means that in the first two lines, if
  $\mathring x$ is actually the weak variable $\dot x$ and if $\beta$
  starts with $\forall \dot y$, we can automatically apply both
  equivalences. Similarly, if in the last two lines $\beta$ starts with
  $\exists \dot y$, we can also apply both equivalences. 

  Furthermore, we also apply the following general equivalences as
  long as possible:
  \begin{align}
    \exists \mathring x(\alpha \land (\beta \lor \gamma))
    &\equiv
    \exists \mathring x((\alpha \land \beta) \lor (\alpha \land
      \gamma)), \label{eq-tr5} \\
    \exists \mathring x(\alpha \lor \beta) &\equiv \exists \mathring x(\alpha) \lor \exists \mathring x(\beta), \\
    \forall \mathring x(\alpha \lor (\beta \land \gamma))
    &\equiv
    \forall \mathring x((\alpha \lor \beta) \land (\alpha \lor \gamma)), \\
    \forall \mathring x(\alpha \land \beta) &\equiv \forall \mathring x(\alpha) \land \forall \mathring x(\beta).\label{eq-tr-last}
  \end{align}  
  Applied to our example, we would get:
  \begin{align*}
    \exists v \forall v\,
    \exists \dot x\exists a (E \dot x a \land \forall b(Eba \lor Eab) \land \forall
    \dot y(E a \dot y) \land \exists \dot z(E a\dot z) \land \exists
    \dot w(E \dot w a)).
  \end{align*}

  As a final transformation, we ``sort'' the operands of disjunctions
  and conjunctions: We replace a subformula $\alpha \land \beta$ in
  $\phi$ by $\beta \land \alpha$ and we replace $\alpha \lor \beta$ by
  $\beta \lor \alpha$, whenever $\beta$ contains no weak universal
  variables, but $\alpha$ does, and also whenever $\alpha$ contains no
  weak existential variables, by $\beta$ does. For our example, this
  means that we get the following:
  \begin{align*}
    \exists v \forall v\,
    \exists \dot x\exists a (
    \exists \dot z(E a\dot z)
    \land \exists \dot w(E \dot w a)
    \land E \dot x a
    \land \forall b(Eba \lor Eab)
    \land \forall \dot y(E a \dot y)).
  \end{align*}

  The purpose of the transformations was to achieve the situation
  described in the next claim:
  \begin{claim*}
    Assume that the above transformations have been applied
    exhaustively to~$\phi$ and assume $\phi$ contains both existential
    and universal weak variables. Consider the maximal
    subformulas $\alpha_i$ of~$\phi$ that contain no weak universal variables
    and the maximal subformulas $\beta_i$ of~$\phi$ that contain no
    weak existential variables. Then for some $i$ and some~$\gamma$
    one of the following formulas is a subformula of~$\phi$: $\forall
    x(\alpha_i \lor \gamma)$ or $\exists x(\gamma \land \beta_i)$.
  \end{claim*}

  In our example, there is only a single maximal~$\alpha_1$, namely $
  \exists \dot z(E a\dot z)
  \land \exists \dot w(E \dot w a)
  \land E \dot x a
  \land \forall b(Eba \lor Eab)$, and a single maximal
  $\beta_1$, namely $ \forall b(Eba \lor Eab) \land \forall
  \dot y(E a \dot y)$. The claim holds since $\exists a(\gamma \land
  \beta_1)$ is a subformula for $\gamma =\exists
  \dot z(E a\dot z) \land \exists \dot w(E \dot w a) \land E \dot x a$.
  
  \begin{proof}
    Consider any $\alpha$ among the $\alpha_i$. Since $\alpha$ is
    maximal but not all of~$\phi$, there must be a $\beta$ among the
    $\beta_i$ such that either $\alpha \lor \beta$ or $\alpha \land
    \beta$ is also a subformula of~$\phi$. Let us call it $\delta$ and
    consider the minimal subformula $\eta$ of $\phi$ that contains
    $\delta$ and starts with a quantifier.

    This quantifier cannot be a weak quantifier: Suppose
    it is $\exists \dot x$ (the case $\forall \dot x$ is perfectly
    symmetric). Since we can no longer apply one of the equivalences 
    \eqref{eq-tr1} to~\eqref{eq-tr-last}, the formula $\eta$ must have
    the form $\exists \dot x \bigwedge_i \psi_i$ (where the $\psi_i$
    are not of the form $\rho\land\sigma$) such that all
    $\psi_i$ contain~$\dot x$ (otherwise \eqref{eq-tr1} would be
    applicable) and such that none of the $\psi_i$ is of the form
    $\rho \lor \sigma$ (otherwise \eqref{eq-tr5} would be
    applicable). This implies that all $\psi_i$ start with a
    quantifier. Since $\eta$ was minimal to contain~$\delta$, we
    conclude that one $\psi_i$ must be $\alpha$ and another one must
    be~$\beta$. But, then, $\beta$ contains a weak existential
    variable, namely $\dot x$, which we ruled out.

    Since $\eta$ does not start with a weak quantifier, it must start
    with a strong quantifier. If it is $\exists x$, by the same
    argument as before we get that $\eta$ must have the form $\exists
    x \bigwedge_i \psi_i$ with some $\psi_i$ equal to $\alpha$ and
    some other $\psi_j$ equal to~$\beta$. But, then, we have found the
    desired subformula of $\phi$ if we set $\gamma$ to
    $\bigwedge_{i\neq j} \psi_i$. If the strong quantifier is $\forall
    x$, a perfectly symmetric argument shows that $\eta$ must have the
    form $\forall x \bigvee_i \psi_i$ with some $\psi_j = \alpha$,
    which implies the claim for $\gamma = \bigvee_{i\neq j} \psi_i$.
  \end{proof}

  The importance of the claim for our argument is the following: As
  long as $\phi$ still contains both existential and universal weak
  variables, we still find a subformula $\alpha$ or $\beta$ that
  contains only existential or universal weak variables such that if
  we go up from this subformula in the syntax tree of~$\phi$, the next
  quantifier we meet is a strong quantifier. This means that we can 
  now apply Theorem~\ref{theorem-strong-character} or its dual to this
  subformula, getting an equivalent new formula $\alpha'$ or $\beta'$
  whose quantifier rank equals the strong quantifier rank of $\alpha$
  or~$\beta$, respectively, times a constant factor. Furthermore,
  similar to the argument at the end of the proof of
  Theorem~\ref{theorem-strong-character} where we processed
  one~$\beta$ after another, each time a replacement takes place,
  there is a strong quantifier that contributes to the strong
  quantifier rank of~$\phi$. 
\end{proof}

\section{Syntactic Proofs and Natural Problems}\label{section-applications}

The special allure of descriptive complexity theory lies in the
possibility of proving that a problem has a certain complexity just by
describing the problem in the right way. The ``right way'' is, of
course, a logical description that has a certain syntax (such as
having a bounded strong quantifier rank). In the following we present
such descriptions for several natural problems and thereby bound their
complexity ``in a purely syntactic way.'' First, however, we present
``syntactic tools'' for describing problems more easily. These tools
are built on top of the notion of strong and weak quantifiers.

\subsection{Syntactic Tools: New Operators}

It is common in mathematical logic to distinguish between the core
syntax and additional ``shorthands'' built on top of the core
syntax. For instance, while $\neg$ and $\lor$ are typically considered
to be part of the core syntax of propositional logic, the notation $a \to b$
is often seen as a shorthand for $\neg a \lor b$. In a
similar way, we now consider the notions of weak variables and
quantifiers introduced in the previous section as our ``core syntax''
and  build a number of useful shorthands on top of them. Of course,
just as $a \to b$ has an intended semantic meaning that the expansion
$\neg a \lor b$ of the shorthand  must reflect, the shorthands we
introduce also have an intended semantic meaning, which we specify.

As a first example, consider the common notation $\exists^{\ge k} x
(\phi(x))$, whose intended semantics is ``there are at least $k$
different elements in the universe that make $\phi(x)$ true.'' While
this notation is often considered as a shorthand for $\smash{\exists x_1
\cdots \exists x_k \bigwedge_{i \neq j} x_i \neq x_j \land
\bigwedge_{i=1}^k \phi(x_i)}$ we will consider it a shorthand for
the equivalent, but slightly more complicated formula $\exists \dot x_1
\cdots \exists \dot x_k \smash{\bigwedge_{i \neq j}} \dot x_i \neq \dot x_j
\land \smash{\bigwedge_{i=1}^k} \exists x(x = \dot x_i \land \phi(x))$. The
difference is, of course, that the strong quantifier rank is now much
lower and, hence, by Theorem~\ref{theorem-strong-character} we can
replace any occurrence of $\exists^{\ge k} x (\phi(x))$ by a formula
of quantifier rank $\operatorname{qr}(\phi) + O(1)$. In all of the
following notations, $k$ and $s$ are arbitrary values. The indicated strong
quantifier rank for the notation is that of its
expansion. The \emph{semantics} describe which structures 
$\mathcal A$ are models of the formula. 

\begin{notation}
  {$\exists^{\ge k} x (\phi(x))$}
  {$1+\operatorname{strong-qr}(\phi)$}
\item[Semantics] There are $k$ distinct $a_1,\dots,a_k \in |\mathcal
  A|$ with $\mathcal A \models \phi(a_i)$ for all~$i$.
\item[Expansion] $\exists \dot x_1
\cdots \exists \dot x_k \bigwedge_{i \neq j} \dot x_i \neq \dot x_j
\land \bigwedge_{i=1}^k \exists x(x = \dot x_i \land \phi(x))$
\end{notation}

\begin{notation}
  {$\exists^{\le k} x (\phi(x))$}
  {$1+\operatorname{strong-qr}(\phi)$}
\item[Semantics] There are at most $k$ distinct $a_1,\dots,a_k \in |\mathcal
  A|$ with $\mathcal A \models \phi(a_i)$ for all~$i$.
\item[Expansion] $ \forall \dot x_1 \cdots \forall \dot x_{k+1} 
  \bigvee_{i \neq j} \dot x_i = \dot x_j \lor \bigvee_{i=1}^{k+1} \forall
  x(x\neq\dot x_i \lor \neg \phi(x))$ ($\equiv \neg \exists^{\ge k+1}x (\phi(x))$) 
\end{notation}

\begin{notation}
  {$\exists^{=k} x (\phi(x))$}
  {$1+\operatorname{strong-qr}(\phi)$}
\item[Semantics] There are exactly $k$ distinct $a_1,\dots,a_k \in |\mathcal
  A|$ with $\mathcal A \models \phi(a_i)$ for all~$i$.
\item[Expansion] $\exists^{\ge k} x (\phi(x)) \land \exists^{\le k} x (\phi(x))$
\end{notation}
%
The next notation is useful for ``binding'' a set of vertices to weak
or strong variables. The binding contains the allowed ``single use'' of 
the weak variables in the sense of Definition~\ref{def-weak}, but they
can still be used in inequality literals. Let $\mathring x$ 
indicate that $x$ may be weak or strong.
\begin{notation}
  {$\{\mathring x_1, \dots, \mathring x_k\} = \{x \mid \phi(x)\}$}
  {$1+\operatorname{strong-qr}(\phi)$}
\item[Semantics] Let $a_1,\dots,a_k \in |\mathcal A|$ be the
  assignments to the $\mathring x_i$ (note that they need \emph{not} be
  distinct). Then $\{a_1,\dots,a_k\} = \bigl\{ a \in \left|\mathcal
    A\right| \bigm| \mathcal A \models \phi(a)\bigr\}$ must hold.
\item[Expansion] $\bigwedge_{i=1}^k \exists x\bigl(x=\mathring x_i \land
  \phi(x)\bigr) \land {}$ \hfill \emph{// ensure $\{\mathring x_1, \dots, \mathring
    x_k\} \subseteq \{x \mid \phi(x)\}$}\\
  $\bigvee_{s=1}^k\bigl(\exists^{=s} x (\phi(x))
  \land{}$
  \hfill
  \emph{// bind $s$ to $|\{x \mid \phi(x)\}|$}
  \\
  \hbox{}\quad$\bigvee_{I \subseteq \{1,...,k\}, |I| = s} \bigwedge_{i,j \in I,
  i\neq j} \mathring x_i \neq \mathring x_j\bigr)$.\hfill  \emph{//
  ensure $|\{\mathring x_1, \dots, \mathring 
    x_k\}| \ge s$}
\end{notation}
The final notation can be thought of as a ``generalization of
$\exists^{=k}$'' where we not only 
ask whether there are exactly $k$ distinct $a_i$ with a property
$\phi$, but whether these $a_i$ then also have an \emph{arbitrary}
special additional property. Formally, let $Q \subseteq \struc[\tau]$
be an arbitrary $\tau$-problem. We write $\mathcal A[I]$ for the
substructure of $\mathcal A$ induced on a subset $I \subseteq |\mathcal A|$.

\begin{notation}
  {$\op{induced}^{\mathrm{size}=k}\{x \mid \phi(x)\} \in Q$}
  {$1+\operatorname{strong-qr}(\phi)+ \operatorname{arity}(\tau)$}
\item[Semantics] The set $I = \{ a \in |\mathcal A| \mid \mathcal A
  \models \phi(a)\}$ has size exactly~$k$ and $\mathcal A[I] \in Q$.
\item[Expansion] Assuming for simplicity that $\tau$ contains only $E^2$
  as non-arithmetic predicate:
  \begin{align*}
    \textstyle
    \exists^{=k} x(\phi(x)) \land \bigvee_{\mathcal A \in Q,
    |\mathcal A| = \{1,\dots,k\}} &\textstyle\bigwedge_{(i,j) \in E^{\mathcal A}}
    \exists x \exists y(\pi_i(x)\land \pi_j(y) \land E xy)
    \land{}
    \\
    &\textstyle \bigwedge_{(i,j) \notin E^{\mathcal A}}
    \exists x \exists y(\pi_i(x)\land \pi_j(y) \land \neg E xy),
  \end{align*}
  where $\pi_i(x)$ is a shorthand for $\phi(x) \land
  \exists^{=i-1} z(z <x \land \phi(z))$, which binds $x$ to the $i$th
  element of the universe with property~$\phi$.
\end{notation}

\begin{notation}
  {$\op{induced}^{\mathrm{size}\le k}\{x \mid \phi(x)\} \in Q$}
  {$1+\operatorname{strong-qr}(\phi)+ \operatorname{arity}(\tau)$}
\item[Semantics] The set $I = \{ a \in |\mathcal A| \mid \mathcal A
  \models \phi(a)\}$ has size at most~$k$ and $\mathcal A[I] \in Q$.
\item[Expansion] $\bigvee_{s=0}^k \op{induced}^{\mathrm{size}=s}\{x \mid \phi(x)\} \in Q$  
\end{notation}

\subsection{Bounded Strong-Rank Description of Vertex Cover}

A \emph{vertex cover} of a graph $G = (V,E)$ is a subset $X \subseteq
V$ with $e \cap X \neq \emptyset$ for all $e \in E$. The problem
$\PLang[k]{vertex-set}$ asks whether a graph has a cover~$X$
with $|X| \le k$. 

\begin{theorem}[\cite{BannachST2015,ChenFH2017}]\label{thm-vc}
  $\PLang{vertex-cover} \in \Para\Class{AC}^0$.
\end{theorem}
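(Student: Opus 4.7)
The plan is to exhibit a family $(\phi_k)_{k\in\mathbb N}$ of $\Class{FO}[+,\times]$-sentences of \emph{constant} strong quantifier rank that describes $\PLang{vertex-cover}$, and then invoke Theorem~\ref{theorem-strong-character-ext} together with Fact~\ref{fact-ac0}. The na\"ive description ``there exist $k$ cover vertices such that every edge touches one of them'' is useless here, since the $k$ cover vertices would sit inside a universal scope over edges and so could not be made weak.

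Instead, the description will internalise the classical kernelization argument for vertex cover. Call a vertex $x$ \emph{high} if it has more than $k$ neighbours; every size-$\le k$ cover must contain all high vertices, so if $H$ denotes the set of high vertices we have $|H|\le k$ and the induced subgraph $G-H$ has maximum degree at most $k$. Any vertex cover of $G-H$ of size $\le k-|H|$ covers at most $k(k-|H|)$ edges, so the set $R$ of non-isolated vertices of $G-H$ has size at most $2k(k-|H|)$. Conversely, any vertex cover of $G[R]$ of size $\le k-|H|$, unioned with $H$, yields a size-$\le k$ cover of $G$. Hence $G$ has a size-$\le k$ vertex cover iff there is some $j\in\{0,\dots,k\}$ with exactly $j$ high vertices such that $G[R]$ has a cover of size $\le k-j$.

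Using the shorthands introduced earlier in this section, I would therefore take
\begin{align*}
  \phi_k \;=\; \bigvee_{j=0}^{k} \Bigl(\exists^{=j} x\, \op{high}_k(x) \;\land\; \op{induced}^{\mathrm{size}\le 2k(k-j)}\{x \mid \op{relevant}_k(x)\} \in Q_{k-j}\Bigr),
\end{align*}
where $\op{high}_k(x) := \exists^{\ge k+1} y\, E x y$, $\op{relevant}_k(x) := \neg\op{high}_k(x) \land \exists y(E x y \land \neg\op{high}_k(y))$, and $Q_j$ is the (isomorphism-closed) class of finite graphs admitting a vertex cover of size at most $j$. Reading off the strong-qr tallies attached to the shorthands shows that $\operatorname{strong-qr}(\phi_k)=O(1)$, independently of~$k$.

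Feeding $(\phi_k)_{k\in\mathbb N}$ into Theorem~\ref{theorem-strong-character-ext} produces an equivalent family of constant quantifier rank, and Fact~\ref{fact-ac0} then yields $\PLang{vertex-cover}\in\Para\Class{AC}^0$. The main conceptual obstacle is to find a description in which the cover vertices never appear inside a universal scope; our formula sidesteps this by not quantifying over the cover at all: it quantifies only over high-degree and relevant vertices, and delegates the actual cover search to the brute-force disjunction hidden inside the $\op{induced}$-shorthand, whose set argument is kept of size at most $2k^2$ by the degree bound on~$R$.
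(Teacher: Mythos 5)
Your proposal is correct and is essentially the same argument the paper gives: both express the Buss kernelization by branching on the number $j$ of high-degree vertices and then delegating the residual search to $\op{induced}^{\mathrm{size}\le \mathrm{poly}(k)}\{x\mid\op{interesting}(x)\}\in Q_{k-j}$. The only differences are cosmetic — you use the bound $2k(k-j)$ obtained by counting edges while the paper uses $(k-h)(k+1)\le k^2+k$ obtained by counting vertices in the closed neighbourhood of the residual cover, and you make explicit the appeal to Theorem~\ref{theorem-strong-character-ext} that the paper leaves implicit.
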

\begin{proof}
  We describe the problem using a family $(\phi_k)_{k \in \mathbb N}$
  of constant strong quantifier rank that expresses the well-known
  Buss kernelization ``using logic'': 
  Let $\op{high}(x) = \exists^{\ge k+1} y\penalty50(Exy)$ expresses
  that $x$ is a \emph{high-degree vertex.} Buss observed that all
  high-degree vertices must be part of a 
  vertex cover of size at most~$k$. Thus, $h \le k$ must hold for the
  unique~$h$ with $\exists^{=h} x(\op{high}(x))$. A remaining vertex is
  \emph{interesting} if it is connected to at least one
  non-high-degree vertex: $\op{interesting}(x) = \neg \op{high}(x) \land \exists y(E xy \land \neg
  \op{high}(y))$. If there are more than $(k-h)(k+1) \le k^2 + k$
  interesting vertices, there cannot be a vertex cover -- and if there
  are less, the graph induced on the interesting vertices must have a
  vertex cover of size $k-h$. In symbols: $
    \phi_k = \textstyle \bigvee_{h=0}^k \bigl(
    \exists^{=h} x(\op{high}(x)) \land \op{induced}^{\mathrm{size}\le k^2 +
    k}\{ x \mid \op{interesting}(x)\} \in Q_{k-h}\bigr)$ for $Q_s = \{
  \mathcal G \mid  \mathcal G$~has a vertex cover of size~$s\}$.
\end{proof}

\subsection{Bounded Strong-Rank Description of Hitting Set}

Hitting sets generalize the notion of vertex covers to \emph{hypergraphs,}
which are pairs $(V, E)$ where the members of~$E$ are
called \emph{hyperedges} and we have $e \subseteq V$ for all $e\in
E$. Hitting sets are still sets $X \subseteq V$
with $e \cap X \neq \emptyset$ for all $e \in E$. The problem
$\PLang[k,d]{hitting-set}$ asks whether a hypergraph with $\max_{e\in
  E}|e| \le d$ has a hitting set $X$ with $|X| \le k$. Note that
$\PLang{vertex-cover}$ is exactly this problem restricted to $d=2$.

\begin{theorem}[\cite{BannachT2018}]\label{thm-hitting}
  $\PLang[k,d]{hitting-set} \in \Para\Class{AC}^0$.
\end{theorem}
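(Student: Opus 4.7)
The plan is to mimic the proof of Theorem~\ref{thm-vc} almost verbatim, but using a kernelization tailored to $d$-hitting set in place of Buss's kernel for vertex cover. We assume the hypergraph is encoded using, say, a relation symbol $H^{d+1}$ where $H(e, v_1, \dots, v_d)$ means that $v_1, \dots, v_d$ are (with possible repetitions) the elements of the hyperedge~$e$; the details of the encoding only affect the proof through the constant $\operatorname{arity}(\tau)$.

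The classical fact we need is the following Buss-style kernelization: there is a computable threshold $T(k,d)$ and a polynomial bound $p(k,d)$ such that (i) every vertex that lies in more than $T(k,d)$ hyperedges must be contained in any hitting set of size at most~$k$, and (ii) after all such high-degree vertices have been placed into the hitting set, the remaining sub-hypergraph induced on vertices still involved in uncovered hyperedges contains at most $p(k,d)$ vertices. For $d=2$ this is precisely the Buss kernel used in the vertex cover proof; for larger $d$ it follows from the standard inductive argument on $d$ (or from the sunflower lemma) and is the heart of the textbook polynomial kernel for $d$-hitting set.

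Given these ingredients, we build the family $(\phi_{k,d})_{(k,d) \in \mathbb{N}^2}$. Define the shorthand
\[
\op{high}_{k,d}(x) \;=\; \exists^{\ge T(k,d)+1} e\,\exists y_1 \cdots \exists y_d \bigl(H e x y_1 \cdots y_{d-1} \lor H e y_1 x y_2 \cdots y_{d-1} \lor \cdots\bigr),
\]
expressing that $x$ lies in more than $T(k,d)$ hyperedges, and
\[
\op{interesting}(x) \;=\; \neg\op{high}_{k,d}(x) \,\land\, \exists e \bigl(x \text{ is in } e \,\land\, \forall y\,(y \text{ is in } e \to \neg\op{high}_{k,d}(y))\bigr).
\]
Let $Q_{s,d}$ be the (isomorphism-closed) class of $d$-hypergraphs possessing a hitting set of size~$s$. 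Then
\[
\phi_{k,d} \;=\; \bigvee_{h=0}^{k}\Bigl(\exists^{=h} x\,(\op{high}_{k,d}(x)) \;\land\; \op{induced}^{\mathrm{size}\le p(k,d)}\{x \mid \op{interesting}(x)\} \in Q_{k-h,d}\Bigr)
\]
describes $\PLang[k,d]{hitting-set}$: the disjunct for a given $h$ fires exactly when there are precisely $h$ forced vertices and the remaining (at most $p(k,d)$ many) interesting vertices admit a hitting set of the remaining size~$k-h$.

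By construction, every quantifier introduced by the shorthands $\exists^{\ge\cdot}$, $\exists^{=\cdot}$, and $\op{induced}^{\mathrm{size}\le\cdot}$ is weak (all the auxiliary variables appear in at most one non-inequality literal and are never inside a universal scope). Consequently the strong quantifier rank of $\phi_{k,d}$ is bounded by a constant depending only on $\operatorname{arity}(\tau)$ (in particular, independently of~$k$ and~$d$). Applying Theorem~\ref{theorem-strong-character} yields an equivalent family of $\Class{FO}[+,\times]$-formulas of constant (ordinary) quantifier rank, and Fact~\ref{fact-ac0} then gives membership in $\Para\Class{AC}^0$. The principal obstacle in carrying this out is purely combinatorial rather than logical: one must pin down the kernelization bounds $T(k,d)$ and $p(k,d)$ (for instance via the sunflower lemma or a recursive Buss-type argument), after which the syntactic machinery of strong and weak quantifiers delivers the complexity bound automatically.
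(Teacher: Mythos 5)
Your key combinatorial claim does not hold for $d>2$, and this is precisely where the paper has to do something substantially more involved than mimicking the vertex-cover proof. Claim (i) in your proposal --- that there is a threshold $T(k,d)$ such that any vertex in more than $T(k,d)$ hyperedges must lie in every size-$\le k$ hitting set --- is false already for $d=3$. Take universe $\{u\} \cup \{v_1,\dots,v_N\} \cup \{w_1,\dots,w_N\}$ with hyperedges $\{u,v_i,w_j\}$ for all $i,j$: each $v_i$ has degree~$N$, which can exceed any threshold, yet $\{u\}$ is a hitting set of size~$1$ that contains no $v_i$. For $d=2$ (vertex cover) a high-degree vertex is forced because a size-$k$ cover cannot separately handle $k+1$ edges sharing only that vertex, but for larger $d$ the hyperedges through $v$ may all be covered ``for free'' by a common second vertex. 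What \emph{is} forced is the \emph{core of a sunflower of size $k+1$}, which is what the paper's proof actually uses.

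The paper's proof therefore starts from the Erd\H{o}s--Rado sunflower lemma rather than a degree bound, and then has to overcome two further obstacles that do not appear in your sketch. First, replacing sunflowers by their cores produces smaller hyperedges that may not correspond to elements of the universe, so the paper first reduces to an auxiliary problem $\PLang[k,d]{hitting-set}'$ (via a bounded-rank reduction, Definition~\ref{def-red} and Lemma~\ref{lem-hit}) in which every subset of a hyperedge already has a representing element. Second, and crucially, iterating the ``core of a core'' construction $d$ times gives a formula $\op{core}^d$ of strong quantifier rank $d$, not $O(1)$: each level adds a new strong existential quantifier $\exists e'$. The paper circumvents this by switching to the \emph{pseudo-sunflowers} and \emph{pseudo-cores} of~\cite{BannachT2018} (Definition~\ref{def-pseudo-core}), which encode all $d$ levels at once through a single large block of weak variables indexed by the leaves of a tree $T_L^k$, keeping the strong quantifier rank constant. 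Your formula $\phi_{k,d}$, even granting the kernel it relies on, would not exhibit this structure --- so both the combinatorial lemma and the logical bookkeeping need repair.
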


Before we prove the theorem, let us fix how we model hypergraphs
$(V,E)$ as logical structures:  We use a $\tau_{\mathrm{hyper}}$-structure $\mathcal H$ 
for $\tau_{\mathrm{hyper}} =  (\op{vertex}^1, \op{hyperedge}^1, \op{in}^2)$.
Let $\left| \mathcal H \right| = V \cup E$, set $\op{vertex}^{\mathcal H} = V$, set
$\op{hyperedge}^{\mathcal H} = E$, and set $\op{in}^{\mathcal H} =
\{(v,e) \mid v \in e \in E\}$. The other way round, given a
$\tau_{\mathrm{hyper}}$-struc\-ture~$\mathcal H$, we consider it as the
following hypergraph $H(\mathcal H) = (V,E)$:
$V = \op{vertex}^{\mathcal H}$ and writing $\operatorname{set}(e)$ for
$\{v \mid (v,e) \in \op{in}^{\mathcal H}\}$ we set $E = \{
\operatorname{set}(e) \mid e \in \op{hyperedge}^{\mathcal H}\}$.

Note that we allow the universe of $\mathcal H$ to contain
elements~$e$ that are neither vertices nor hyperedge-representing
elements, but their $\operatorname{set}(e)$ do not contribute
to~$E$. We also allow that two different elements $e,e' \in |\mathcal
H|$ represent the same set $\operatorname{set}(e) =
\operatorname{set}(e')$. This can be problematic in a
kernelization: When we identify a kernel set~$E'$ of hyperedges, there
could still be a large (non-parameter-dependent) number of elements in
the universe that represent these hyperedges -- meaning that these
elements do not form a kernel themselves. Fortunately, this can be
fixed: We can easily check whether two elements represent the same set
using $\forall x (\op{in} xe \leftrightarrow \op{in} xe')$ and then
always consider only the first representing element with respect to the
ordering~$<$ of the universe. For this reason, we will assume in the
following that for any subset $s \subseteq V$ there is at most one $e
\in |\mathcal H|$ with $s = \operatorname{set}(e)$.

Let $d(H)$ be the maximum size of any hyperedge in~$H$ and let
$d(\mathcal H) = d(H(\mathcal H))$. 

A \emph{hitting set} for a hypergraph $(V,E)$ is a set $X \subseteq V$
with $e \cap X \neq \emptyset$ for all $e \in E$. The problem
$\PLang[k,d]{hitting-set}$ 
is the set of all pairs $(\mathcal H, \operatorname{num}(k,d))$ such
that $H(\mathcal H)$ is a hypergraph with $d(\mathcal H) \le d$ and for
which there is a hitting set of size at most~$k$.

\begin{proof}
  The idea behind the proof is a (very strong) generalization of the
  Buss kernel argument from the proof of Theorem~\ref{thm-vc}. As in
  that proof, we will present a family $(\phi_{k,d})_{k,d \in \mathbb
    N}$ of bounded strong quantifier rank that describes
  $\PLang[k,d]{hitting-set}$. First, there are two simple preliminaries:
  Testing whether $d(\mathcal H) \le d$
  holds is easy to achieve using $\forall e(\op{hyperedge} e\to
  \exists^{\le d}v(\op{in} v e))$, so let us
  assume that this is the case and let us write $H = (V,E)$ for
  $H(\mathcal H)$. Furthermore, let us write $\op{subset} e f$ for
  $\forall x(\op{in} xe \to \op{in} xf)$, which indicates that
  $\operatorname{set}(e) \subseteq \operatorname{set}(f)$.

  \subparagraph*{Representing Subsets of Hyperedges.}
  Recall that the core idea of the kernelization of the vertex cover
  problem is that a ``high-degree vertex'' must be
  part of a vertex cover. Rephrased in the language of 
  hypergraphs, a graph is a hypergraph $H$ with $d(H) = 2$, a vertex
  cover is a hitting set, and making a high-degree vertex~$v$ part of a
  hitting set is (in essence) the same as removing all edges
  containing~$v$ and then adding the singleton hyperedge $\{v\}$,
  which can clearly only be hit by making $v$ part of the hitting set.

  In the general case, we will also remove hyperedges from the
  hypergraph and replace them by smaller hyperedges (though, no
  longer, by singletons) and we will do so repeatedly. The problem is
  that adding hyperedges is difficult in our encoding since this means
  that we would have to add elements to the universe of the logical 
  structure that represent the new hyperedges. Although these problems
  can be circumvented by complex syntactic trickery, we feel it is
  cleaner to do the following at this point: We reduce the
  original hitting set problem to a new version, where the universe
  already contains all the necessary elements for representing the
  hyperedges we might wish to add later on.

  In detail, we define a subset $\PLang[k,d]{hitting-set}' \subseteq
  \PLang[k,d]{hitting-set}$ as follows: It contains only those
  $(\mathcal H,\operatorname{num}(k,d))$ such that for every $e \in
  \op{hyperedge}^{\mathcal H}$ and every subset $s \subseteq
  \operatorname{set}(e)$ there is an $e' \in \left|\mathcal H\right|$
  with $s = \operatorname{set}(e')$. In other words, for every
  subset~$s$ of any hyperedge there must already be an element $e$
  ``in store'' in the universe that represents it.

  We can reduce $\PLang[k,d]{hitting-set}$ to
  $\PLang[k,d]{hitting-set}'$ by adding for an input~$\mathcal H$, if
  necessary, elements to the universe that represent all these
  subsets. We are helped by the fact that we have an upper bound $d$
  on the size of the hyperedges, which means that the maximum blowup
  of the universe in this reduction is by the parameter-dependent
  value of~$2^d$. However, we have not yet defined which notion of
  \emph{reductions} between parameterized problems we wish to use and
  there are many definitions in the literature. Since
  $\Para\Class{AC}^0$ is severely restricted computation-wise, we must
  use a weak one.
  
  We postpone this question until after the proof, where we
  present a suitable definition for reductions
  (Definition~\ref{def-red}) such that all considered classes are
  closed under them and then show in Lemma~\ref{lem-hit} that 
  $\PLang[k,d]{hitting-set}$ reduces to $\PLang[k,d]{hitting-set}'$.
  Thus, in the following, we may assume that for every hyperedge in
  the input structure for all subsets of this hyperedge we already
  have an element in the universe representing this subset.

  \subparagraph*{Finding Sunflowers.}
  We first show a way of kernelizing the hitting set problem, due to
  Chen et~al.~\cite{ChenFH2017}, that ``almost works.'' 
  The core idea is to detect and collapse \emph{sunflowers} in the
  input hypergraph~\cite{ErdosR60}. A \emph{sunflower of size $k+1$ with
    core~$c$} is a set $\{p_1,\dots,p_{k+1}\} \subseteq E$ of distinct
  hyperedges, called \emph{petals}, such that for all $i \neq j$ we
  have $p_i \cap p_j = c$. In other words, all petals contain the core
  but are otherwise pairwise distinct. For convenience, we also assume
  that all petals are proper supersets of the core. The important
  observation is that if a sunflower of size $k+1$ has a hitting set of size~$k$,
  then the core must also be hit -- and when the core is hit, all
  petals are hit. This means that we can just replace a sunflower by
  its core when we are looking for size-$k$ hitting sets. 

  The following formula tests whether $\operatorname{set}(c)$ is the
  core of a sunflower of size $k+1$:
  \begin{align}
    \op{core} c =  \exists \dot p_1^1 \cdots \exists \dot p_{k+1}^d&\textstyle \bigwedge_{i\neq j} \bigwedge_{r,s \in \{1,\dots,d\}}
      \dot p_i^r \neq \dot p_j^s \land {} \label{eq-dist-sunflower} \\
    &\textstyle \bigwedge_{i=1}^{k+1} \exists e \bigl(\op{hyperedge} e
      \land 
      \op{subset} ce \land {}\notag \\
    & \phantom{\smash{\textstyle{\bigwedge_{i=1}^{k+1} \exists e
      \bigl(}}}  \{\dot p_i^1,\dots,\dot p_i^d\} = \{ v\mid \op{in} 
      v e \land \neg \op{in} vc\} \bigr). \label{eq-equal-sunflower}
  \end{align}
  Here, \eqref{eq-dist-sunflower} guarantees that the petals are
  pairwise disjoint outside the core and \eqref{eq-equal-sunflower}
  checks that the petals are supersets of $c$ and when we add $p_i^1$
  to $p_i^d$ (which are not necessarily disjoint) to $c$, we get a
  present hyperedge.

  The ``collapsing'' of sunflowers to their cores can now be done as
  follows: We define a formula with $e$ as a free variable that is
  true when $\operatorname{set}(e)$ is a core or when $\operatorname{set}(e)$ is not a superset of any core
  (otherwise, we need not include $\operatorname{set}(e)$ since we include the core of a
  sunflower that contains it, instead):
  \begin{align}
    \op{core} e \lor (\op{hyperedge} e \land \neg \exists c(\op{core}
    c \land \op{subset} ce)). \label{eq-kernel1}
  \end{align}
  The importance of the above formula lies in the following fact: The
  number of hyperedges for which the second part of the formula is
  true (that is, which are not supersets of a core of a sunflower of
  size $k+1$), is \emph{bounded by a function in $k$ and~$d$.} This is
  due to the famous Sunflower Lemma~\cite{ErdosR60} which states that
  if a hypergraph has more than $k^{d}d!$ hyperedges, it contains a
  sunflower of size~$k+1$ (which has a core).

  This means that if $\op{core} e$ were to hold for just a few
  hyperedges, \eqref{eq-kernel1} would describe a kernel
  for the hitting set problem and we would be done: Just as in the
  proof of Theorem~\ref{thm-vc}, we could use the $\op{induced}$
  notation to solve the hitting set problem on the vertices and
  hyperedges for which \eqref{eq-kernel1} holds.
  Unfortunately, it is possible to construct hypergraphs such that
  $\op{core} e$ still holds for a very large number of
  hyperedges.

  However, we know that a core always has \emph{a smaller size than
    any petal in its sunflower}. In particular, all cores have maximum
  size $d-1$. Thus, if we ``view $\op{core}$ as our new 
  $\op{hyperedge}$ predicate,'' we get ``cores of cores'':
  \begin{align*}
    \op{core}^2 c =  \exists \dot p_1^1 \cdots \exists \dot p_{k+1}^d& \textstyle \bigwedge_{i\neq j} \bigwedge_{r,s \in \{1,\dots,d\}}
      \dot p_i^r \neq \dot p_j^s \land {} \\
    &\textstyle \bigwedge_{i=1}^{k+1} \exists e' \bigl(\op{core} e'
      \land 
      \op{subset} ce' \land {}\notag \\
    & \phantom{\smash{\textstyle{\bigwedge_{i=1}^{k+1} \exists e \bigl(}}}  \{\dot p_i^1,\dots,\dot p_i^d\} = \{ v\mid \op{in}
      v e' \land \neg \op{in} vc\} \bigr).
  \end{align*}
  Note that $\operatorname{strong-qr}(\op{core}^2 c) =
  \operatorname{strong-qr}(\op{core} c) + 1 = 2$ since we had to add a
  new strong quantifier ($\exists e'$) whose scope contains $\op{core}
  e'$, which adds its own strong quantifier ($\exists e$).
  
  By the same argument as earlier, we get that the number of $e$ for
  which the following formula holds equals the number of cores of
  cores plus something that only depends on the parameters~$k$
  and~$d$: 
  \begin{align*}
    \op{core}^2 e &\lor (\op{core} e \land \neg \exists c(\op{core}^2
                    c \land \op{subset} ce)) \\
    &\lor (\op{hyperedge} e \land \neg \exists c(\op{core}
    c \land \op{subset} ce)).
  \end{align*}
  Still, the number of cores of cores can be large, but they all have
  size at most~$d-2$. Repeating the argument a further $d-2$
  times, we finally get the predicate $\op{kernel} e$:
  \begin{align}
    \op{core}^{d} e \lor \textstyle \bigvee_{i=1}^d
    (\op{core}^{i-1} e \land \neg \exists c(\op{core}^i c \land
    \op{subset} ce)), \label{eq-kernel}
  \end{align}
  where $\op{core}^0$ is of course $\op{hyperedge}$ and $\op{core}^d
  e$ can only be true for the (sole) $e$ representing the
  empty set (in which case, there is not hitting set). 

  Unfortunately, the strong quantifier rank of $\op{core}^d$ is $d$
  since the definition of $\op{core}^i$ in terms of $\op{core}^{i-1}$
  always adds one strong quantifier nesting (through a new $\exists
  e^{\prime \dots\prime}$). Thus, \eqref{eq-kernel} also has a strong
  quantifier rank of~$d$ while we need $O(1)$.

  \subparagraph*{Finding Pseudo-Sunflowers.}
  At this point, we need a way of describing cores of cores of cores
  and so on using a bounded strong quantifier rank. The idea how
  this can be done was presented in~\cite{BannachT2018}, where the
  notions of \emph{pseudo-cores} and \emph{pseudo-sunflowers} are
  introduced. The definitions are somewhat technical, see below, but
  the interesting fact about these definitions is that they can be
  expressed very nicely in a way similar to \eqref{eq-dist-sunflower}
  and \eqref{eq-equal-sunflower}.

  For a level $L$ and a number~$k$, let $T_L^k$ denote the rooted
  tree in which all leaves are at the same depth~$L$ and all inner nodes
  have exactly $k+1$ children. The root of $T_L^k$ will always be called
  $r$ in the following. Thus, $T_1^k$ is just a star consisting of $r$
  and its $k+1$ children, while in $T_2^k$ each of the $k+1$ children of
  $r$ has $k+1$ new children, leading to $(k+1)^2$ leaves in total. For
  each $l\in \operatorname{leaves}(T_L^k) = \{l \mid \text{$l$ is a leaf
    of $T_L^k$}\}$ there is a unique path $(l^0,l^1,\dots,l^L)$ from
  $l^0 = r$ to $l^L = l$.
  
  \begin{definition}[Pseudo-Sunflowers and Pseudo-Cores, \cite{BannachT2018}]\label{def-pseudo-core}
    Let $H = (V,E)$ be a hypergraph and let $L$ and $k$ be fixed. A set
    $c \subseteq V$ is called a \emph{$k$-pseudo-core of level~$L$ in
      $H$} if there exists a mapping $S \colon
    \operatorname{leaves}(T_L^k) \times \{0,1,\dots,L\} \to \{ e\mid
    e\subseteq V\}$, called a
    \emph{$T_L^k$-pseudo-sunflower for $H$ with pseudo-core~$c$}, such that for all $l, m \in
    \operatorname{leaves}(T_L^k)$ with $l\neq m$ we have:  
    \begin{enumerate}
    \item $S(l,0) = c$.
    \item $S(l,0) \cup S(l,1) \cup \dots \cup S(l,L) \in E$.
    \item $S(l,i) \cap S(l,j) = \emptyset$ for $0\le i < j \le L$, but $S(l,i)
      \neq \emptyset$ for $i \in \{1,\dots,L\}$.
    \item Let $z \in \{1,\dots, L\}$ be the smallest number such that
      $l^z \neq m^z$, that is, $z$ is the depth where the path from $r$
      to $l$ and the path from $r$ to $m$ diverge for the first
      time. Then $S(l,z) \cap S(m,z) =\emptyset$ must hold.
    \end{enumerate}
  \end{definition}

  This definition translates almost directly into a formula $\op{pseudocore}^L
  c$, which starts with 
  a block of weak existential quantifiers, one for each element of
  $\operatorname{leaves}(T_L^k) \times \{1,\dots,L\} \times \{1,\dots,d\}$: 
  \begin{align}
    & (\exists \dot x^j_{l,i})_{l \in
    \operatorname{leaves}(T_L^k),i \in
      \{1,\dots,L\},j\in\{1,\dots,d\}}\notag\\
    &\quad  \textstyle \bigwedge_{l,m\in
      \operatorname{leaves}(T_L^k),l\neq m,z\text{ as in the
      definition}} \smash{\Bigl(}\notag\\
    & \qquad 
    \exists e\bigl(\op{hyperedge} e\land \op{subset} ce \land
      \{\dot x^1_{l,1},\dots,\dot x_{l,L}^d\} = \{ v\mid \op{in}
      v e \land \neg \op{in} vc\} \bigr) \land
                                                              {} \label{eq-pseudo-one} \\ 
    & \qquad \textstyle \bigwedge_{i\neq j} \bigwedge_{p,q \in \{1,\dots,d\}}
      \dot x_{l,i}^p \neq \dot x_{l,j}^q \land {} \label{eq-pseudo-two}\\
    & \qquad \textstyle \bigwedge_{p,q \in
      \{1,\dots,d\}} \dot x_{l,z}^p \neq \dot x_{m,z}^q \smash{\Bigr).}\label{eq-pseudo-three}
  \end{align}
  Here, \eqref{eq-pseudo-one} ensures, similarly to
  \eqref{eq-equal-sunflower} for normal sunflowers, that $S(l,0) \cup S(l,1) \cup \dots \cup S(l,L)$ is a
  hyperedge, item~2 of the definition. The inequalities
  \eqref{eq-pseudo-two} ensure that item~3 of the definition holds,
  while \eqref{eq-pseudo-three} ensures item~4.

  The important observation is that $\op{pseudocore}^L$ has a strong
  quantifier rank that is \emph{independent} of~$L$. Since, as shown
  in~\cite{BannachT2018}, we can use $\op{pseudocore}^L$ as a
  replacement for $\op{core}^L$ in \eqref{eq-kernel}, we get that the
  hitting set problem can be described by a family of formulas of
  constant strong quantifier rank. 
\end{proof}

In the proof we used reductions (from
$\PLang[k,d]{hitting-set}$ to $\PLang[k,d]{hitting-set}'$) although we
have not yet given a definition of a notion of reductions 
that is appropriate for the context of the present paper. Clearly, we
need a notion of parameterized reductions that is very weak to ensure
that the smallest class we study, $\Para\Class{AC}^0$, is closed under
them. Such a reduction is used in the literature~\cite{BannachT2016},
boringly named $\Para\Class{AC}^0$-reduction, but both its definition
as well as the 
definition of other kinds of parameterized reductions found in the
literature do not fit well with our logical framework: The reductions
are defined in terms of machines or circuits that get as input a
string that explicitly or implicitly contains the parameter~$k$ and
output a new problem instance that once more explicitly or implicitly
contains a new parameter value~$k'$.

In contrast, in our setting the inputs and outputs must be logical
structures that we wish to define in terms of formulas. Furthermore,
``outputting a parameter value'' is difficult in our formal framework
since parameter values are not elements of the universe, but indices
of the formulas. All of these problems can be circumvented, see for
instance \cite[Definition~5.3]{ChenF2018}, but we believe it gives a
cleaner formalism to give a new ``purely 
logical'' definition of reductions between parameterized
problems. We will not prove this, but remark that the power of this
reduction is the same as that of $\Para\Class{AC}^0$-reductions.

\begin{definition}\label{def-red}
  Let $\tau$ and $\tau'$ be signatures and let $Q \subseteq
  \struc[\tau] \times \mathbb N$ and $Q' \subseteq \struc[\tau']
  \times \mathbb N$ be two problems. A \emph{bounded rank reduction}
  from $Q$ to~$Q'$, written $Q \le_{\mathrm{br}} Q'$,
  is a pair of computable families $(f_k)_{k \in\mathbb N}$ and
  $(\iota_{k,k'})_{k,k'\in\mathbb N}$ where 
  \begin{itemize}
  \item each $f_k$ is a first-order query from $\tau$-structures to
    $\tau'$-structures and 
  \item each $\iota_{k,k'}$ is a $\tau$-formula
  \end{itemize}
  such that
  \begin{enumerate}
  \item for each $(\mathcal A,k) \in \struc[\tau]\times \mathbb N$
    there is exactly one $k'\in\mathbb N$, denoted by $\iota_k(\mathcal
    A)$ in the following, such that $\mathcal A \models \iota_{k,k'}$,
  \item there is a computable mapping $\iota^* \colon \mathbb N
    \to\mathbb N$ such that for all $\mathcal A \in \struc[\tau]$ we
    have $\iota_k(\mathcal A)\le \iota^*(k)$,
  \item $(\mathcal A,k)\in Q$ if, and only if, $\bigl(f_k(\mathcal
    A),\iota_k(\mathcal A)\bigr)\in Q'$, and
  \item the quantifier rank of all $\iota_{k,k'}$ and of all formulas
    inside the~$f_k$ and of the widths of the~$f_k$ is bounded by a
    constant~$c$. 
  \end{enumerate}
\end{definition}

Let us briefly explain the ingredients of this definition: Each $f_k$
maps all $\tau$-structures $\mathcal A$ to
$\tau'$-structures~$\mathcal A'$. The fact that we have one function
for each parameter value allows us to make our 
mapping depend on the parameter. The job of the formulas $\iota_{k,k'}$ is
solely to ``compute'' the new parameter value~$k'$, based not only on
the original value~$k$, but also on~$\mathcal A$. If, as is the case
in many reductions, the new parameter value $k'$ just depends on $k$
(typically, it even \emph{is}~$k$), we can just set $\iota_{k,k'}$ to
a trivial tautology~$\top$ and all other $\iota_{k,k''}$ to the
contradiction~$\bot$.

In the definition, we referred to \emph{first-order queries,} which
are a standard way of defining a logical $\tau'$-structure in terms of
a $\tau$-structure. A detailed account can be found
in~\cite{Immerman1998}, but here is the basic idea: Suppose we wish to
map graphs ($(E^2)$-structures) to their underlying undirected graphs
($(U^2)$-structures, where $U$ represent the underlying symmetric edge
set). In this case, there is a simple formula $\phi_U(x,y)$ that 
tells us when $Uxy$ holds in the new structure: $Exy \lor Eyx$. 
More importantly, if we have a formula $\psi$ that internally uses
$Uxy$ to check whether there is an undirected edge in the mapped
graph, we can easily turn this into a formula $\psi[f]$, where we
replace all occurrences of $Uxy$ by $\phi_U(x,y)$, that 
gives the same answer as $\psi$ when fed the \emph{original} graph. In
other words, if a first-order query maps $\mathcal A$ to~$\mathcal A'$
and we wish to check whether $\mathcal A' \models \psi$ holds, we can just
as well check whether $\mathcal A \models \psi[f]$ holds.

The just-described example of a first-order query did not change the
universe, which is something we sometimes wish to do (indeed, the
whole point of the reduction between the two versions of the hitting
set problem was a change of the universe). This is achieved by
allowing the \emph{width}~$w$ of the query to be larger than~$1$. The
effect is that the universe $U$ gets replaced by $U^w$ and, now,
elements of this new universe can be described by tuples of variables
of length~$w$. We can also \emph{reduce} the size of the universe
using a formula $\phi_{\mathrm{universe}}(x_1,\dots,x_w)$ that is true
only for the tuples we wish to keep in the new structure's universe. 

\begin{lemma}\label{lem-closed}
  Let $Q \le_{\mathrm{br}} Q'$ via a bounded rank reduction given by
  $(f_k)_{k\in\mathbb N}$ and $(\iota_{k,k'})_{k,k'\in\mathbb N}$. Let
  $(\phi'_k)_{k\in \mathbb N}$ describe~$Q'$. Then there is a family
  $(\phi_k)_{k\in\mathbb N}$ that describes~$Q$ with
  \begin{enumerate}
  \item $\max_k \operatorname{qr}(\phi_k) = \max_k
    \operatorname{qr}(\phi'_k) + O(1)$ and
  \item $\max_k \left|\operatorname{bound}(\phi_k)\right| = \max_k
    \left|\operatorname{bound}(\phi'_k)\right| + O(1)$.    
  \end{enumerate}
\end{lemma}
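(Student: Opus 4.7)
The plan is to define $\phi_k$ as a finite disjunction over the possible target parameter values $k'\in\{0,\dots,\iota^*(k)\}$, each guarded by the selector formula $\iota_{k,k'}$ and conjoined with the pullback of $\phi'_{k'}$ along the query~$f_k$. Concretely, I will set
\[
  \phi_k \;=\; \bigvee_{k'=0}^{\iota^*(k)}\bigl(\iota_{k,k'}\wedge\phi'_{k'}[f_k]\bigr),
\]
where $\phi'_{k'}[f_k]$ is the standard first-order substitution induced by the width-$w$ query~$f_k$: every bound variable of $\phi'_{k'}$ becomes a tuple of $w$ fresh $\tau$-variables whose quantifiers are relativized to the universe-defining formula of $f_k$, and every atom $Rx^1\cdots x^r$ is replaced by the corresponding defining $\tau$-formula of $f_k$ applied to the matching tuples. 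The classical semantic lemma on first-order queries (see, e.g., \cite{Immerman1998}) then yields $\mathcal A\models\phi'_{k'}[f_k]$ iff $f_k(\mathcal A)\models\phi'_{k'}$ for every $\mathcal A\in\struc[\tau]$.

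For correctness I will unfold this equivalence. Given any $\mathcal A$, items~(1) and~(2) of Definition~\ref{def-red} guarantee that exactly one $k^*\in\{0,\dots,\iota^*(k)\}$ satisfies $\mathcal A\models\iota_{k,k^*}$, namely $k^*=\iota_k(\mathcal A)$. So the full disjunction collapses to its $k'=k^*$ branch, which is true iff $f_k(\mathcal A)\models\phi'_{k^*}$, iff $(f_k(\mathcal A),k^*)\in Q'$ (since $(\phi'_k)_{k\in\mathbb N}$ describes $Q'$), iff, by item~(3), $(\mathcal A,k)\in Q$.

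For the complexity bounds I will use that the reduction constant~$c$ uniformly bounds the width~$w$, the quantifier rank of each $\iota_{k,k'}$, and the quantifier rank of each atom-defining formula inside $f_k$. Consequently the substitution $\phi'_{k'}\mapsto\phi'_{k'}[f_k]$ adds only a constant amount (depending solely on~$c$) to both the quantifier rank and the number of bound variables; conjoining with the rank-$\le c$ guard $\iota_{k,k'}$ and then taking the outer propositional disjunction over the finitely many $k'$ adds no further quantifier nestings, and by choosing a single pool of fresh $\tau$-variables and reusing it across all disjuncts the total number of bound variables in $\phi_k$ matches that of any single $\phi'_{k'}[f_k]$ up to a constant. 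This yields both claimed bounds.

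The main obstacle will be the substitution step: a width-$w$ query literally replaces each $\exists x$ by a nested block $\exists x^1\cdots\exists x^w$, so a naive accounting multiplies the quantifier rank by~$w$ rather than adding a constant. The key observation is that $w$, together with the rank of every atom-defining formula inside $f_k$ and of every $\iota_{k,k'}$, is a single fixed global constant determined entirely by the reduction and independent of~$k$; this multiplicative overhead is therefore absorbed into the $O(1)$ that is added on top of $\max_k\operatorname{qr}(\phi'_k)$, and symmetrically on top of $\max_k|\operatorname{bound}(\phi'_k)|$. The remaining work — establishing correctness of the guarded disjunction and the bookkeeping for these constants — is a direct application of Definition~\ref{def-red} and the classical query-substitution lemma.
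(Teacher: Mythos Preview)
Your overall construction---a guarded disjunction over the finitely many candidate target parameters, with the pulled-back formula $\phi'_{k'}[f_k]$ inside---is exactly what the paper does (the paper phrases it dually as $\bigwedge_{k'}(\iota_{k,k'}\to\phi'_{k'}[f_k])$, which is equivalent since exactly one guard fires). Your correctness argument is fine.

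The gap is in the complexity accounting. You treat the substitution $\phi'_{k'}\mapsto\phi'_{k'}[f_k]$ as affecting only quantifiers and \emph{relational} atoms, but the arithmetic signature also contains the unary \emph{function} symbol $\op{succ}$. A formula such as $\exists x\exists y(\op{succ}^{1000}(x)=y)$ has quantifier rank~$2$, yet a naive first-order query substitution must unfold each of the $1000$ applications of $\op{succ}$ into a subformula that locates the next element in the target ordering, yielding rank on the order of~$1000$. Crucially, the depth of $\op{succ}$-nesting in $\phi'_{k'}$ is bounded only by $|\phi'_{k'}|$, which can grow with~$k'$ (and hence with~$k$) even while $\operatorname{qr}(\phi'_{k'})$ stays bounded. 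So your claim that the atom-defining formulas add only a reduction-dependent constant is false as stated.

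The paper's fix is to first normalize so that $\op{succ}$ occurs only in atoms of the form $x=\op{succ}^i(0)$, and then to translate such an atom not by iterating a successor formula $i$ times, but by expressing ``$x$ is the $i$th element of the target universe'' via $\phi_{\mathrm{universe}}(x)\land\exists^{=i-1}y\,(\phi_{\mathrm{universe}}(y)\land\phi_<(y,x))$. The point is that the counting quantifier $\exists^{=i-1}$ has an expansion of \emph{constant} strong quantifier rank independent of~$i$ (this is where the paper's color-coding machinery from Section~\ref{section-cc} is invoked). Without this step your rank bound does not hold.

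A secondary remark: your handling of the width-$w$ blowup is also not quite right. Replacing each quantifier by a block of $w$ quantifiers multiplies the rank by~$w$; saying this ``is absorbed into the $O(1)$'' conflates a multiplicative constant with an additive one. For the intended corollary (closure of $\Para\Class{AC}^0$ and $\Para\Class{AC}^{0\uparrow}$) this distinction is harmless, since one only needs finiteness to be preserved, but it is worth stating precisely.
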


In particular, $\Para\Class{AC}^0$ and
$\Para\Class{AC}^{0\uparrow}$ are closed under bounded rank
reductions.

\begin{proof}
  Set $\phi_k$ to $\bigwedge_{k'=1}^{\iota^*(k)} (\iota_{k,k'} \to
  \phi'_{k'}[f_k])$. By definition, we have $\mathcal A\models \phi_k$ 
  if, and only if, $f_k(\mathcal A) \models \phi'_{k'}$ for the
  unique $k'$ with $\mathcal A \models \iota_{k,k'}$. If we can
  argue that the substitutions do not increase the quantifier rank or
  number of variables by more than a constant, we get the claim.

  Unfortunately, simple substitutions fail to
  preserve the quantifier rank in a single case: When a formula $\phi'_{k'}$
  contains a large number of nested applications of the successor
  function. Suppose, for instance, $\phi'_{k'}$ is something like $\exists x
  \exists y(\op{succ}^{1000} x=y)$. While this formula has quantifier
  rank~$2$ and uses only two variables, a simple substitution of each
  occurrence of the one thousand $\op{succ}$ operators in~$\phi'_{k'}$ by
  any nontrivial formula in~$f_k$ that describes the successor function will
  yield a quantifier rank of at least~$1000$.

  The trick is to use color coding once more: We can
  easily modify any formula so that all occurrences of the successor
  function are of the form $x = \op{succ}^i 0$ for some number~$i$. This
  means that we ``only'' need a way of identifying the $i$th element
  of the new universe using a bounded quantifier rank. However,
  assuming for simplicity a width of~$1$ and assuming that
  $\phi_{\mathrm{universe}}(x)$ and $\phi_<(x,y)$ describe how $f_k$
  restricts the universe and possibly reorders it, respectively, the formula
  $\phi_{\mathrm{universe}}(x) \land \exists^{=i-1}y(\phi_{\mathrm{universe}}(y) \land
  \phi_<(y,x))$ is true exactly for the $i$th element of the universe
  -- and we saw already that we can express the $\exists^{=i-1} y$
  quantifier using a constant quantifier rank that is independent
  of~$i$. 
\end{proof}

\begin{example}
  We have $\PLang[k,\delta]{dominating-set} \le_{\mathrm{br}}
  \PLang[k,d]{hitting-set}$ where the first problem is parameterized
  by both the size~$k$ of the sought dominating set and a
  bound~$\delta$ on the maximum vertex 
  degree. For each parameter $(k,\delta)$, the first-order query $f_{k,\delta}$
  maps the input graph to the hypergraph where there is a hyperedge
  for the closed neighborhood of each vertex. This is achieved through
  $\phi_{\op{vertex}}(x) = \top$, $\phi_{\op{hyperedge}}(x) = \top$,
  and $\phi_{\op{in}}(x,y) = ((x=y) \lor Exy)$. The new parameter is
  $\delta+1$, which is achieved by
  $\iota_{\operatorname{num}(k,\delta),\operatorname{num}(k,\delta+1)}
  = \top$ and $\iota_{x,x'} = \bot$ otherwise. Observe that $\iota^*$
  is clearly computable. By Lemma~\ref{lem-closed} and since
  $\PLang[k,d]{hitting-set} \in \Para\Class{AC}^0$, we also have
  $\PLang[k,\delta]{dominating-set} \in \Para\Class{AC}^0$.
\end{example}

\begin{lemma}\label{lem-hit}
  $\PLang[k,d]{hitting-set} \le_{\mathrm{br}} \PLang[k,d]{hitting-set}'$.
\end{lemma}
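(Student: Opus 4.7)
The plan is to construct a bounded rank reduction that leaves the parameter untouched and uses a first-order query of constant width $d+1$ to expand the universe so that every subset of a hyperedge is represented by some element, while preserving the set of hyperedges (and hence the hitting set semantics). Concretely, I set $\iota_{k,k'}=\top$ when $k'=k$ and $\bot$ otherwise, so that $\iota^*(k)=k$. For the first-order query $f_{k,d}$ I take width $w=d+1$ and view elements of the new universe as tuples $(a,b_1,\dots,b_d)$ over the old universe.

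The intended encoding is as follows. A \emph{canonical} tuple has $a=b_1=\cdots=b_d$ and represents the original element~$a$; this gives a faithful copy of $\mathcal H$ inside $\mathcal H'$. A \emph{non-canonical} tuple has $a\in\op{hyperedge}^{\mathcal H}$ together with $\op{in}\,b_ia$ for each $i$, and represents the subset $\{b_1,\dots,b_d\}$ of $\operatorname{set}(a)$; since $|\operatorname{set}(a)|\le d$, we can realise every $s\subseteq\operatorname{set}(a)$ by choosing the $b_i$ to enumerate $s$ with repetitions. Accordingly I define the query formulas to be
\[
  \phi_{\op{vertex}}(a,b_1,\dots,b_d) \;=\; (a=b_1\land\dots\land a=b_d)\land \op{vertex}\,a,
\]
\[
  \phi_{\op{hyperedge}}(a,b_1,\dots,b_d) \;=\; (a=b_1\land\dots\land a=b_d)\land \op{hyperedge}\,a,
\]
and $\phi_{\op{in}}$ to say that its first $(d+1)$-tuple argument is a canonical vertex tuple $(v,v,\dots,v)$ and that its second argument $(a,b_1,\dots,b_d)$ is either canonical with $\op{in}\,va$, or non-canonical with $v\in\{b_1,\dots,b_d\}$. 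The universe-restriction $\phi_{\mathrm{universe}}$ is the disjunction of ``canonical'' and ``non-canonical with $a\in\op{hyperedge}$ and each $b_i$ in $\operatorname{set}(a)$'', which is quantifier-free.

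The verification has three ingredients. First, all the defining formulas are quantifier-free apart from a single use of the original atomic predicate $\op{in}$, so the quantifier rank of the query is a fixed constant, satisfying the bounded-rank condition of Definition~\ref{def-red}. Second, because only canonical tuples satisfy $\phi_{\op{hyperedge}}$, and because $\phi_{\op{in}}$ restricted to canonical tuples coincides with the original $\op{in}$ relation, the hypergraph $H(f_{k,d}(\mathcal H))$ is isomorphic to $H(\mathcal H)$; in particular $(\mathcal H,(k,d))\in \PLang[k,d]{hitting-set}$ iff $(f_{k,d}(\mathcal H),(k,d))\in\PLang[k,d]{hitting-set}$. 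Third, for each $e\in\op{hyperedge}^{\mathcal H}$ and each $s\subseteq\operatorname{set}(e)$, the non-canonical tuple obtained by listing the elements of $s$ (padded by repetition to length $d$) lies in the new universe and has $\operatorname{set}$ equal to $s$, so $f_{k,d}(\mathcal H)\in\PLang[k,d]{hitting-set}'$ whenever it is in $\PLang[k,d]{hitting-set}$.

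The only genuinely delicate point is making sure the maximum hyperedge size condition $d(f_{k,d}(\mathcal H))\le d$ survives the expansion; this is automatic here because $\phi_{\op{hyperedge}}$ singles out exactly the canonical images of the original hyperedges and their $\operatorname{set}$ under the new $\op{in}$ agrees with the original $\operatorname{set}$. All remaining bookkeeping --- checking Definition~\ref{def-red} items 1--4 and the bound $\iota^*(k)=k$ --- is routine.
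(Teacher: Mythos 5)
Your construction uses a first-order query of width $w = d+1$, but Definition~\ref{def-red} (item~4) requires the widths of the queries $f_k$ to be bounded by a constant that is \emph{independent} of the parameter, and $d$ is part of the parameter here. This is not cosmetic: the substitution $\phi'_{k'}[f_k]$ in Lemma~\ref{lem-closed} replaces each variable ranging over the new universe by a $w$-tuple of variables ranging over the old universe, so the quantifier rank and variable count of the substituted formula both get multiplied by~$w$. With $w = d+1$, starting from a family for $\PLang[k,d]{hitting-set}'$ of bounded quantifier rank you would obtain a family for $\PLang[k,d]{hitting-set}$ of quantifier rank $\Theta(d)$, which is unbounded and hence does not place the problem in $\Para\Class{AC}^0$ via Fact~\ref{fact-ac0}. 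The paper's proof instead takes a width-$2$ query (so the new universe has size $\|\mathcal H\|^2$, eventually exceeding $2^d \|\mathcal H\|$), encodes the choice of subset arithmetically via the bit predicate rather than in extra tuple coordinates, and invokes Lemma~\ref{lemma-eventually} to patch up the finitely many small instances where $\|\mathcal H\|^2 < 2^d \|\mathcal H\|$.

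There is also a second, more local, gap: because your non-canonical tuples $(a,b_1,\dots,b_d)$ require $\op{in}\,b_i a$ for every~$i$, the represented set $\{b_1,\dots,b_d\}$ is always nonempty. But $\emptyset \subseteq \operatorname{set}(e)$ holds for every hyperedge~$e$, so membership in $\PLang[k,d]{hitting-set}'$ demands an element with $\operatorname{set}$ equal to $\emptyset$ as soon as there is any hyperedge at all. As written, $f_{k,d}(\mathcal H)$ would then fail to lie in $\PLang[k,d]{hitting-set}'$ even on yes-instances, violating the equivalence required by Definition~\ref{def-red}, item~3. This is easy to patch (for instance, drop the requirement $\op{in}\,b_i a$ and let the represented set be $\{b_i \mid \op{in}\,b_i a\}$, which may then be empty), but it needs to be said; and the more serious width issue would still remain.
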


\begin{proof}
  In the reduction, we do not change the parameter, so the
  $\iota_{\operatorname{num}(k,d),\operatorname{num}(k,d)} = \top$ and
  $\iota_{x,x'} = \bot$ otherwise. For the first-order queries, we wish to map a hypergraph 
  $\mathcal H$ to a new version $\mathcal H'$ in which for every
  subset of a hyperedge there is already an element in the universe
  representing this subset. This means that the size of the universe
  can increase from $\left|\mathcal H\right|$ to at most $2^d
  \left|\mathcal H\right|$. (If there is a hyperedge of size larger
  than $d$ in the input, we can yield a trivial ``no'' instance as
  output.)
  We use a first-order query of width~$2$, meaning that the universe
  size gets enlarged from $\left|\mathcal H\right|$ to $\left|\mathcal
    H\right|^2$. This will be larger than $2^d \left|\mathcal
    H\right|$ for all sufficiently large universes. Since, with
  respect to $f_{\operatorname{num}(k,d)}$ the number $2^d$ is a constant, we can apply
  Lemma~\ref{lemma-eventually} to take care of those inputs whose
  universes are smaller than $2^d$ and directly map them to the
  correct instances. For the large instances, we now have a universe
  that is ``large enough'' to contain an element for each subset of a
  hyperedge and it is not difficult (but technical) to use the bit
  predicate to define the correct predicates $\op{hyperedge}$,
  $\op{vertex}$, and $\op{in}$ in terms of the original structure. 
\end{proof}

\subsection{Bounded Strong-Rank Description of \\Model Checking for First-Order Logic}

An important result by Flum and Grohe~\cite{FlumG2003} states that
the model checking problem for first-order logic lies in $\Class{FPT}$
on structures whose Gaifman graph has bounded degree. Once more, this
result can now be obtained ``syntactically.'' For simplicity, we only consider graphs and let 
$\PLang[\psi,\delta]{mc}(\Class{FO}) = \bigl\{ (\mathcal
G,\operatorname{num}(\psi,\delta)) \bigm|
\mathcal G \in \struc[(E^2)], \psi \in \Class{FO}, \mathcal G \models \psi, \operatorname{max-degree}(\mathcal G) \le
\delta\bigr\}$.

\begin{theorem}[\cite{BannachST2015,FlumG2003}]\label{thm-mc}
  $\PLang[\psi,\delta]{mc}(\Class{FO}) \in \Para\Class{AC}^{0\uparrow}$.
\end{theorem}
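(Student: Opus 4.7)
Aiming at the bounded-variable characterisation of $\Para\Class{AC}^{0\uparrow}$ provided by Theorem~\ref{thm-ac0up}, the plan is to describe $\PLang[\psi,\delta]{mc}(\Class{FO})$ by a computable family $(\phi_{\operatorname{num}(\psi,\delta)})$ whose total number of bound variables is bounded by a universal constant. First I would invoke Hanf's locality theorem: for every $\psi$ of quantifier rank $q$ and every $\delta$ there exist a radius $r=r(q)$, a threshold $m=m(q,\delta)$, and a Boolean function $B_\psi$ such that on every graph $\mathcal G$ of maximum degree at most $\delta$ the condition $\mathcal G\models\psi$ is equivalent to $B_\psi$ evaluated on the truncated multiplicities $n_\tau(\mathcal G)=\min\bigl(m,\,|\{v\mid v\text{ has }r\text{-type }\tau\}|\bigr)$, as $\tau$ ranges over the finitely many isomorphism types of rooted $r$-neighborhoods realisable in degree-$\delta$ graphs. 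Thus $\phi_{\operatorname{num}(\psi,\delta)}$ is the conjunction of the degree check $\forall x\,\exists^{\le\delta}y(Exy)$ with $B_\psi$ applied to counting predicates $\exists^{=c}x\,T^{(r)}_\tau(x)$ for $0\le c<m$ and $\exists^{\ge m}x\,T^{(r)}_\tau(x)$, one for each local type $\tau$.

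Second, I would construct $T^{(r)}_\tau(x)$ so that it uses only the two strong variable names $x$ and $y$, recycled across recursion levels by scope-shadowing. A Weisfeiler--Leman-style template suggests itself: $T^{(r+1)}_\sigma(x)=\bigwedge_\tau \exists^{=c_\sigma(\tau)}y\bigl(Exy\land T^{(r)}_\tau(y)\bigr)$ with $c_\sigma(\tau)\le\delta$, where under the $\exists y$ the inner recursive call renames its own neighbor variable back to $x$, harmlessly shadowing the outer root. By induction on $r$, this predicate has $\operatorname{strong-bound}=\{x,y\}$; the counting operators $\exists^{=c}$ and $\exists^{\ge m}$ expand (per Section~\ref{section-applications}) to blocks of weak quantifiers that contribute nothing to the strong bound, and $\exists^{\le\delta}$ expands to weak universal quantifiers handled by Theorem~\ref{theorem-strong-character-ext}. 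Applying that theorem then gives an equivalent family with $|\operatorname{bound}|\le 2+O(\operatorname{arity}(\tau))=O(1)$, uniformly in $(\psi,\delta)$, and Theorem~\ref{thm-ac0up} delivers membership in $\Para\Class{AC}^{0\uparrow}$.

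The main obstacle is that the recursion above tracks only iterated degree multisets and so coincides with $1$-dimensional Weisfeiler--Leman, which is known \emph{not} to separate all non-isomorphic rooted neighborhoods; hence it is too coarse to realise Hanf's iso-type count on its own. To refine it I would exploit the linear ordering $<$ of the universe to canonicalise the $r$-ball: at each level, replace the multiset counter $\exists^{=c}y$ by counters that pinpoint ``the $i$-th $<$-smallest $y$ with $Exy\land T^{(r-1)}_{\tau_i}(y)$'' (definable with a bounded number of variables via $\exists^{=i-1}z(z<y\land\cdots)$), and then verify each adjacency between non-root vertices of the ball using reusable pairs of strong variables linked to the canonical positions by the weak-equality bindings produced by the expansion of $\exists^{=c}$. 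The delicate part is the bookkeeping required to ensure that neither failure condition of Definition~\ref{def-weak} is triggered while the set of strong bound variables stays at $\{x,y\}$ through arbitrary nesting depth; once this is done, everything else is a mechanical appeal to Theorem~\ref{theorem-strong-character-ext} and Theorem~\ref{thm-ac0up}.
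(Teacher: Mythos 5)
Your target (the bounded-variable characterisation of Theorem~\ref{thm-ac0up}) and the use of a locality theorem for bounded-degree structures are both the right moves, and switching from Gaifman (which the paper uses) to Hanf is a legitimate variant in the bounded-degree setting. However, there is a genuine gap at the heart of the construction, and you have put your finger on it yourself: the Weisfeiler--Leman-style recursion $T^{(r+1)}_\sigma(x)=\bigwedge_\tau\exists^{=c_\sigma(\tau)}y(Exy\land T^{(r)}_\tau(y))$ does \emph{not} express ``$x$ has $r$-neighborhood iso-type $\tau$''; it only computes iterated degree refinements, which is strictly weaker. The proposed fix --- canonicalise the ball via the linear order and then verify cross-adjacencies between non-root vertices using ``reusable pairs of strong variables linked to the weak-equality bindings of $\exists^{=c}$'' --- is exactly where the work is, and it is left as a sketch. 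To carry it out you would, in effect, need to bind all (up to $\delta^r$) ball vertices to weak variables and then test pairwise adjacencies among them using a constant number of recycled strong variables; that is no longer the WL recursion you started from, and showing that no weak variable trips either clause of Definition~\ref{def-weak} through arbitrary nesting is precisely the ``delicate part'' you flag but do not resolve.

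For comparison, the paper avoids all of this by using Gaifman's theorem and then handling the local check much more directly. It writes the $r$-ball around each Gaifman witness $x_i$ as a block of $\delta^r$ weak variables $\dot y_i^1,\dots,\dot y_i^{\delta^r}$ bound via the set-binding notation $\{\dot y_i^1,\dots,\dot y_i^{\delta^r}\}=\{y\mid\gamma_{\bar d(x,y)\le r}\}$ (which costs only two strong variables and $r$ strong quantifier nestings); pairwise distance $>2r$ falls out as pairwise disjointness of balls, enforced by weak inequalities; and the local formula $\rho(x_i)$ is then checked in one stroke with the $\op{induced}^{\mathrm{size}\le\delta^r}$ notation, up to a position index $\iota_i(x)$ to make $Q_i$ independent of $a$. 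That route never needs to express individual iso-types $T^{(r)}_\tau$ in first-order logic at all, which is why it closes cleanly, whereas your route stalls exactly on that subproblem. If you want to keep the Hanf framing, you should replace the WL recursion by the same ball-binding-plus-$\op{induced}$ trick to define $T^{(r)}_\tau(x)$, at which point the two proofs essentially coincide.
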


\begin{proof}
  We present a family $(\phi_{\psi,\delta})_{\psi \in
    \Class{FO},\delta \in \mathbb N}$ with a bound on the number of 
  strong variables that describes $\PLang[\psi,\delta]{mc}(\Class{FO})$.
  Fix $\psi$ and $\delta$. Recall that we fixed the signature
  for~$\psi$ to just $\tau = (E^2)$ for simplicity and, thus,
  $\tau$-structures are just graphs~$\mathcal G$. In particular, there
  are \emph{no} arithmetic predicates available to~$\psi$ (one could,
  of course, also consider them, but then the Gaifman graph would
  always be a clique and the claim of the theorem would be boring). In
  contrast, the $\phi_{\psi,\delta}$ are normal $\Class{FO}[+,\times]$
  formulas and \emph{they} have access to arithmetics. 

  For a graph $\mathcal G$ let us write $\bar{\mathcal G}$ for the
  underlying undirected graph and let us write $\bar Exy$ as a
  shorthand for $Exy \lor Eyx$. The first thing we check is that the
  maximum degree of the input graph $\bar{\mathcal G}$ is, indeed,
  $\delta$. This is rather easy: $\forall x \exists^{\le \delta}
  y(\bar Exy)$. 

  For the hard part of determining whether $\mathcal G\models \psi$,
  let $\bar d(a,b)$ denote the distance of two vertices in~$\bar{\mathcal
  G}$ and let $N_r(a) = \bigl\{b \in |\mathcal G| \bigm| \bar d(a,b) \le r\bigr\}$
  be the ball around $a$ of radius~$r$ in~$\bar{\mathcal G}$. Let
  $\mathcal G[N_r(a)]$ denote the subgraph of $\mathcal G$ induced on 
  $N_r(a)$. By Gaifman's Theorem~\cite{Gaifman1982} we can rewrite
  $\psi$ as a Boolean combination of formulas of the following form:
  \begin{align} 
   \exists x_1 \cdots \exists x_k \Bigl(
    &\textstyle\bigwedge_{i\neq j}
      \gamma_{\bar d(x_i,x_j)>2r}  \land {} \label{eq-d}\\
    &\textstyle\bigwedge_i \rho(x_i)\Bigr) \label{eq-local}
  \end{align}
  where $\gamma_{\bar d(x_i,x_j)>2r}$ expresses, of course, that $\bar
  d(x_i,x_j) > 2r$ should hold and $\rho$ is \emph{$r$-local,} meaning
  that for all $a \in |\mathcal G|$ we have $\mathcal G \models \rho(a) \iff
  \mathcal G[N_r(a)] \models \rho(a)$ (the minimum number for which is
  the case is called the \emph{locality rank} of~$\rho$).

  We now wish to express the above formula using only a constant
  number of strong variables. The problem is, of course, that the
  $x_i$ are not (yet) weak since they are used many times. We fix this
  in two steps. First, let us tackle \eqref{eq-d}: Clearly, the
  $x_i$ will have a pairwise distance of at least $2r$, if the balls of
  radius~$r$ that surround them are pairwise disjoint. Now, because of
  the bounded degree of the graph, a ball of radius~$r$ can have
  maximum size $\delta^r$. This allows us to bind all members of each
  ball and testing disjointness is, of course, what weak variables are
  all about.

  In detail, let $\gamma_{\bar d(x,y) \le r}$ be the standard formula with
  two bound variables expressing that there is path from
  $x$ to~$y$ of length at most~$r$ in $\bar{\mathcal G}$. Then we can
  express \eqref{eq-d} as follows:
  \begin{align*}
    \exists \dot x_1 \cdots \exists \dot x_k 
    \,\exists \dot y_1^1 \cdots \exists \dot y_k^{\delta^r} \bigl(&\textstyle
      \bigwedge_{i\neq j} \bigwedge_{p,q \in \{1,\dots,\delta^r\}}
      \dot y_i^p \neq \dot y_j^q \land {}\\
    &\textstyle \bigwedge_{i=1}^k \exists x (x = \dot x_i
    \land \{\dot y_i^1,\dots, \dot y_i^{\delta^r}\} = \{ y \mid
    \gamma_{\bar d(x,y) \le r}\} )\bigr).
  \end{align*}
  In the formula, at the end we bind the variables $\dot
  y_i^1,\dots,\dot y_i^{\delta^r}$ exactly to the elements of the ball
  around $\dot x_i$ or radius~$r$; and in the first part we require
  that all these balls are pairwise disjoint. Note that we do
  \emph{not} require all $\dot y_i^p$ to be different: If the size of
  a ball is less than $\delta^r$, we must allow some $\dot y_i^p$ and
  $\dot y_i^q$ to be identical.

  In order to express \eqref{eq-local}, we just have to check for each
  $\dot x_i$ that the ball of radius $\delta^r$ around it is a model
  of $\rho(\dot x_i)$. Since the size of this ball is at most
  $\delta^r$, we can use the $\op{induced}$ notation. There is,
  however, a technical problem: We basically wish to check whether
  $\mathcal G[N_r(a)] \in \{ \mathcal H \mid \mathcal H \models
  \rho(a)\}$ holds for a given~$a$, but $\{ \mathcal H \mid \mathcal H \models
  \rho(a)\}$ obviously depends on~$a$ --
  which is not compatible with the $\op{induced}$
  notation. Fortunately, this problem can be fixed: For $i \in \mathbb
  N$ let $Q_i = \{ \mathcal H \mid i \le \|\mathcal H \|$, $a$~is the
  $i$th element of $|\mathcal H|$ with respect to~$<^{\mathcal H}$, 
  $\mathcal H \models \rho(a)\}$. If we know for some element $a \in
  |\mathcal G|$ that it is the $i$th element in $N_r(a)$, then our
  problematic test can be replaced by $\mathcal G[N_r(a)] \in
  Q_i$. Since testing whether $a$ is the $i$th element in $N_r(a)$ is
  possible using a formula like $\iota_i(a) = \exists^{=i-1} b(b<a
  \land \gamma_{\bar d(a,b) \le r})$, we get the following complete
  formula~$\phi_{\psi,\delta}$: 
  \begin{align*}
    \exists \dot x_1 \cdots \exists \dot x_k 
    \,\exists \dot y_1^1 \cdots \exists \dot y_k^{\delta^r}\smash{\Bigl(}& \textstyle
      \bigwedge_{i\neq j} \bigwedge_{p,q \in \{1,\dots,\delta^r\}}
      \dot y_i^p \neq \dot y_j^p \land {}\\
    &\textstyle \bigwedge_{i=1}^k \exists x \bigl(x = \dot x_i
    \land \{\dot y_i^1,\dots, \dot y_i^{\delta^r}\} = \{ y \mid
      \gamma_{\bar d(x,y) \le r}\} \land {} \\
    &\textstyle \phantom{\bigwedge_{i=1}^k \exists x \bigl(}
      \bigvee_{i=1}^{\delta^r} (\iota_i(x) \land
      \op{induced}^{\mathrm{size}\le \delta^r}\{ y \mid \gamma_{\bar
      d(x,y) \le r} \} \in Q_i)\bigr)\smash{\Bigr)}.
  \end{align*}
  This formula uses only a constant number of strong variables. Its
  strong quantifier rank would also be constant \emph{except} that the
  formula $\gamma_{\bar d(x,y) \le r}$ uses $r$ nested (strong)
  quantifiers (but only $2$ variables). This means that the strong
  quantifier rank of $\phi_{\psi,\delta}$ will be
  $O(\operatorname{locality-rank}(\psi))$. 
\end{proof}

\subsection{Bounded Strong-Rank Description of\\ Embedding Graphs of
  Constant Tree~Width~or~Constant Tree Depth}

For our final example, a graph $H = (V(H),E(H))$ \emph{embeds into} a graph $G =
(V(G),E(G))$ if there is an injective mapping $\iota \colon V(H) \to V(G)$
such that for all $(u,v) \in E(H)$ we have $(\iota(u),\iota(v))\in
E(G)$. We wish to show that the embedding problems for graphs of
bounded tree depth or bounded tree width lie in $\Para\Class{AC}^0$
and $\Para\Class{AC}^{0\uparrow}$, respectively:

\begin{theorem}[\cite{BannachST2015,ChenM2015}]\label{thm-emb}
  $\PLang{emb}_{\mathrm{td}{\le} c}
  \in \Para\Class{AC}^0$ and $\PLang{emb}_{\mathrm{tw}{\le} c}
  \in \Para\Class{AC}^{0\uparrow}$ for each~$c$.
\end{theorem}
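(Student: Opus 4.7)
The plan is to construct, for each graph $H$, a $\tau$-formula $\phi_H$ whose finite models are exactly the graphs into which $H$ embeds, such that the syntactic parameters of $\phi_H$ are controlled by the relevant width measure of~$H$. The uniform idea, already hinted at in the introduction for tree depth, is: introduce a weak existential variable $\dot x_v$ for every $v \in V(H)$ together with $\bigwedge_{u \neq v} \dot x_u \neq \dot x_v$ to force an injection from $V(H)$ into the host, and then use \emph{strong} existentials $n_1, n_2, \dots$ to ``walk through'' a tree (or tree decomposition) of~$H$; the strong variables are used locally to pick up images of the vertices currently ``in the bag,'' bound to the corresponding $\dot x_v$ via weak equalities $\dot x_v = n_i$, and edges of $H$ are tested by $E n_i n_j$ between strong variables whose guests currently coexist.

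\textbf{Tree depth.} Fix a rooted tree~$T$ of depth at most~$c$ whose transitive closure contains~$H$, and use the formula template already displayed in the introduction. Its strong quantifier rank equals the depth of~$T$, hence is at most~$c$, and the number of distinct strong variables $n_1,\dots,n_c$ is also at most~$c$. Theorem~\ref{theorem-strong-character} rewrites $\phi_H$ as an equivalent formula of quantifier rank $3c+O(1)$, which is bounded independently of~$H$ since $c$ is fixed. Fact~\ref{fact-ac0} then yields $\PLang{emb}_{\mathrm{td}\le c}\in\Para\Class{AC}^0$.

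\textbf{Tree width.} Given $H$ together with a tree decomposition $(T,(B_t)_{t\in V(T)})$ with $|B_t|\le c+1$, root~$T$ arbitrarily and mirror its shape in~$\phi_H$. At each node~$t$ we keep a reusable pool of $c+1$ strong variables $n_1,\dots,n_{c+1}$ standing for the images of the vertices in~$B_t$, with a fixed bijection between $B_t$ and the slots actually in use; moving into a child $t'$ we introduce $\exists n_i\bigl(\dot x_v = n_i \land \dots\bigr)$ for every vertex $v\in B_{t'}\setminus B_t$, where $i$ re-uses a slot freed by a vertex in $B_t\setminus B_{t'}$. For every edge $\{u,v\}\in E(H)$ we place a conjunct $E n_i n_j \land E n_j n_i$ at any node~$t$ with $\{u,v\}\subseteq B_t$ (which exists by the properties of tree decompositions), where $n_i$ and $n_j$ are the slots currently holding $u$ and~$v$. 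The weak block $\exists \dot x_{v_1}\cdots\exists \dot x_{v_{|H|}}$ together with the pairwise-inequality conjunction forces distinctness. The strong quantifier rank of this formula may grow with the depth of~$T$ (and thus with~$|H|$), but the number of strong variables used in it is only $c+1+O(1)$. Theorem~\ref{theorem-strong-character} then produces an equivalent formula whose \emph{total} number of bound variables is bounded by a constant depending only on~$c$, and Theorem~\ref{thm-ac0up} gives $\PLang{emb}_{\mathrm{tw}\le c}\in\Para\Class{AC}^{0\uparrow}$.

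\textbf{Main obstacle.} The delicate point is the bookkeeping at the tree-decomposition nodes: we must choose, once and for all in the syntax of~$\phi_H$, a consistent assignment of $H$-vertices currently in a bag to strong variable slots so that (i) each edge of~$H$ is verified at some bag that contains both endpoints, (ii) the weak equality $\dot x_v = n_i$ is asserted at every node at which $v$ enters a bag, so that subsequent uses of slot~$n_i$ deeper in the subtree still refer to the same host vertex via~$\dot x_v$, and (iii) slots are correctly recycled when vertices leave the bag along a branch. Because bags have size at most $c+1$ and because every $\dot x_v$ remains in scope throughout the whole formula, none of this inflates the strong variable count beyond~$O(c)$; once this bookkeeping is in place, the syntactic hypotheses of Theorems~\ref{theorem-strong-character} and~\ref{thm-ac0up} hold directly, and the two claimed complexity bounds follow.
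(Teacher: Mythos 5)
Your proposal is correct and follows essentially the same route as the paper: a single family $\phi_{H,T,B}$ built by introducing a block of weak existentials $\dot x_v$ with pairwise inequalities, then strong existentials that mirror the shape of a tree(-depth) decomposition, reusing strong-variable slots as vertices leave bags and testing $H$-edges at a bag containing both endpoints, and finally invoking Theorem~\ref{theorem-strong-character} together with Fact~\ref{fact-ac0} (for tree depth, strong quantifier rank $\le c$) or Theorem~\ref{thm-ac0up} (for tree width, $\le c+1$ strong variables). The ``delicate bookkeeping'' you flag is exactly what the paper resolves by constructing a \emph{consistent numbering} $p\colon V(H)\to\{1,\dots,\operatorname{width}(B)+1\}$ in a top-down pass over the decomposition, guaranteeing that $p$ is injective on every bag and that a slot is recycled only after its vertex has left all bags in the current subtree; spelling that out is the only substantive content your sketch leaves implicit.
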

Of course, we still need to review the underlying definitions: For a graph $H =
(V(H),E(H))$, a \emph{tree decomposition} of~$H$ is a tree $T =
(V(T), E(T))$ (a connected, acyclic, undirected graph) together with
a mapping $B$ that assigns a subset of $V(H)$ to each node in
$V(T)$. These subsets are called \emph{bags} and must have two
properties: First, for every edge $\{u,v\} \in E(H)$ there must be a 
node $n \in V(T)$ with $u,v \in B(n)$. Second, for each vertex $v
\in V(H)$ the set 
$\{n \in V(T) \mid v \in B(n)\}$ must be nonempty and connected in~$T$. Let
$\operatorname{width}(B) = \max_{n\in V(T)} |B(n)| - 1$. The
\emph{tree width $\operatorname{tw}(H)$ of~$H$} is the minimum width
of a tree decomposition for it.

We call $(T,B)$ a \emph{tree-depth decomposition} if $T$
can be rooted in such a way that if $u$ lies on the path from some
vertex $v$ to the root, then $B(u) \subsetneq B(v)$. The tree depth
$\operatorname{td}(H)$ is the minimum width of a tree-depth
decomposition $(T,B)$ of~$H$ plus~$1$. Note that this width is an
upper bound on $\operatorname{depth}(T)$, the depth of~$T$.

\begin{proof}
  We present a family $(\phi_{H,T,B})$ of $\tau$-formulas (where
  $\tau = (E^2, <^2, \op{succ}^1, \op{add}^3,\penalty 10 \op{mult}^3,\penalty 100 0^0)$ is the
  arithmetic signature of graphs) indexed by graphs $H$ together with
  any tree decomposition $(T,B)$ of~$H$ (without bounds on the depth or width)
  that describe the embedding problem. More precisely, we show the
  following: 

  \begin{claim*}
    There is a family $(\phi_{H,T,B})_{H \in
    \struc[\tau],\text{\normalfont$(T,B)$ is a tree decomposition of $H$}}$ such that:
    \begin{enumerate}
    \item $\mathcal G \models \phi_{H,T,B}$ if, and only if, $H$
      embeds into $\mathcal G$ (more precisely, into $(|\mathcal G|,
      E^{\mathcal G})$).
    \item $\operatorname{strong-qr}(\phi_{H,T,B}) = \operatorname{depth}(T)$.
    \item $\left|\operatorname{strong-bound}(\phi_{H,T,B})\right| =
      \operatorname{width}(B)+1$. 
    \end{enumerate}
  \end{claim*}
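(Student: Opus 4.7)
The plan is to build $\phi_{H,T,B}$ by recursion along the tree decomposition $(T,B)$. An outer ``scaffold'' will introduce a weak existential variable $\dot x_v$ for each vertex $v\in V(H)$ together with the block $\bigwedge_{u\neq v}\dot x_u\neq\dot x_v$ of weak inequalities, which encodes injectivity of the embedding. Below this scaffold I put a subformula $\psi_r$, where $r$ is the root of $T$, that walks down $T$ using strong variables drawn from a fixed pool $\{y_1,\dots,y_{w+1}\}$ of size $\operatorname{width}(B)+1$. At each node $n$, the formula $\psi_n$ introduces strong variables for the vertices entering the bag for the first time, binds them to the corresponding weak variables by a single equality literal, checks all edges of $H$ ``assigned'' to $n$, and recurses into each child.

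Formally, I maintain for each node $n$ an injective assignment $\sigma_n\colon B(n)\to\{y_1,\dots,y_{w+1}\}$ that agrees with $\sigma_{\operatorname{parent}(n)}$ on $B(n)\cap B(\operatorname{parent}(n))$; such an extension always exists since $|B(n)|\le w+1$ and any name used for a ``forgotten'' vertex in $B(\operatorname{parent}(n))\setminus B(n)$ becomes free for reuse. Writing $N(n)=B(n)\setminus B(\operatorname{parent}(n))$ (with $N(r)=B(r)$) and letting $E_n$ be the set of edges of $H$ assigned to $n$ under a fixed ``highest common bag'' convention (which, by the tree-decomposition properties, covers each edge exactly once), I set
\begin{align*}
  \psi_n = \exists \sigma_n(v_1)\cdots\exists\sigma_n(v_\ell)\Bigl(&\textstyle\bigwedge_{v\in N(n)} \sigma_n(v) = \dot x_v \;\land{}\\
  &\textstyle\bigwedge_{\{u,v\}\in E_n} E\,\sigma_n(u)\,\sigma_n(v) \;\land \bigwedge_{c\in\operatorname{children}(n)} \psi_c\Bigr),
\end{align*}
where $N(n)=\{v_1,\dots,v_\ell\}$, and the overall formula is $\phi_{H,T,B}=\exists\dot x_{v_1}\cdots\exists\dot x_{v_{|H|}}\bigl(\bigwedge_{i\neq j}\dot x_{v_i}\neq\dot x_{v_j}\land\psi_r\bigr)$.

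Correctness will follow by a routine induction along $T$: any embedding $\iota$ of $H$ into $\mathcal G$ witnesses each $\dot x_v$ and each $\sigma_n(v)$ by $\iota(v)$; conversely, reading the $\dot x_v$-values from any model yields an injective, edge-preserving map. For the syntactic bounds, I check that each $\dot x_v$ occurs in exactly one non-inequality literal, namely the weak equality $\sigma_{n_v}(v)=\dot x_v$ placed at the topmost node $n_v$ of $T$ whose bag contains $v$; this makes $\dot x_v$ genuinely weak by Definition~\ref{def-weak}. The strong variables are drawn from the pool $\{y_1,\dots,y_{w+1}\}$, giving item~3 of the claim, and the strong quantifiers are arranged following the shape of $T$, giving a strong quantifier rank that scales with $\operatorname{depth}(T)$ (up to a factor of at most $\operatorname{width}(B)+1$ coming from the number of fresh variables introduced per node, which is harmless for both applications of Theorem~\ref{thm-emb}).

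The main obstacle is to make the reuse of the strong-variable pool semantically sound while preserving the weakness of every $\dot x_v$. Since a name $y_i$ may be re-bound by a nested $\exists y_i$ inside a child's subformula, I must ensure that along every root-to-leaf branch the assignment $\sigma_n$ is computed consistently so that at the point where $\psi_n$ is evaluated each symbol $y_i$ refers to exactly one vertex of $B(n)$; this is what the compatibility condition on $\sigma_n$ with $\sigma_{\operatorname{parent}(n)}$ achieves. Equally delicate is the ``highest containing node'' convention for the weak equalities $\sigma_n(v)=\dot x_v$: it relies on the fact that $\{n\mid v\in B(n)\}$ is connected in $T$, so that below $n_v$ the strong variable for $v$ can be reused without ever mentioning $\dot x_v$ again, keeping $\dot x_v$ weak in the sense of Definition~\ref{def-weak}.
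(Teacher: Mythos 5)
Your construction is essentially the paper's: your per-node assignment $\sigma_n$ is the paper's ``consistent numbering'' $p$ (the compatibility condition along parent--child edges makes $\sigma_n$ a global function on $V(H)$, since the bags containing a vertex form a connected subtree), and your recursive $\psi_n$ mirrors theirs (the paper re-checks every intra-bag edge at every node rather than once at a ``highest common bag,'' but this is a cosmetic difference). One small remark in your favour: your bookkeeping for item~2, giving $\operatorname{strong-qr}(\phi_{H,T,B}) \le \operatorname{depth}(T)\cdot(\operatorname{width}(B)+1)$, is actually the bound the construction delivers -- the paper's stated equality $\operatorname{strong-qr}(\phi_{H,T,B}) = \operatorname{depth}(T)$ is not met even by its own running example (where the strong quantifier rank is $4$ but the tree depth is $2$) -- and you correctly observe that the weaker bound is all that either application of Theorem~\ref{thm-emb} requires.
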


  \begin{proof}
    Before we present the formula, we define what we will call a
    \emph{consistent numbering} of the vertices of $H$. It is a 
    mapping $p \colon V(H) \to \{1,\dots,m\}$, where $m$ is the
    maximum bag size of the decomposition (so $m =
    \operatorname{width}(B)+1$). The number $p(v)$ for $v \in V(H)$
    can be thought as the ``position'' or ``index'' of~$v$ in all bags
    that contain it, that is, we require that for any bag $B(n) =
    \{b_1,\dots, b_{|B(n)|}\}$ the values $p(b_1),\dots,p(b_{|B(n)|})$
    are all different. (Phrased differently, $p$ restricted to any bag
    is injective.) Such a consistent numbering can be obtained as
    follows: First, assign the numbers $1$ to $|B(r)|$ to the elements
    of the root bag~$B(r)$. Now, consider a child~$c$ of the root~$r$
    in~$T$. The bag $B(c)$ may miss some of the elements of $B(r)$ and
    there may be some new elements. For each new element~$e$, let
    $p(e)$ be a different number from the set $\{1,\dots,m\}
    \setminus \{p(v) \mid v \in B(r) \cap B(c)\}$ and note that we will not run out of
    numbers. We assign numbers to all elements in the bags of the
    children of the root in this way and, then, we recursively use the
    same method for the children's children and so on. Note that, not
    only, we do not run out of numbers, but the consistency condition
    is also met: Once an element drops out of a bag, we will never see
    it again in a later bag and, hence, we cannot inadvertently assign
    a different number to it later on. 

    As a running example, we will use the graph~$H$ and the tree
    decomposition $(T,B)$ of it from Figure~\ref{fig:ex:decomp}. 

    \begin{figure}[htpb]
      \centering
    \begin{tikzpicture}[graph]
      \node at (0,1) {$H\colon$};
      \graph [no placement, nodes=node, math nodes] {
        1 [x=0, y=0];
        2 [x=1, y=.5];
        3 [x=2, y=0];
        4 [x=1, y=-.5];
        5 [x=3, y=0];
        6 [x=4, y=.5];
        7 [x=4, y=-.5];
        1 -- {2,4} -- 3 -- 5 -- 6 -- 7 -- 5;
      };

      \scoped[xshift=7cm] {
        \node at (-1,1) {$(T,B)\colon$};

        \graph [no placement, nodes=node, math nodes] {
          r   [x=1, y=1];
          a [x=.5, y=0];
          b [x=1.5, y=0];
          c [x=0, y=-1];
          d [x=1, y=-1];
          e [x=2, y=-1];
          r -> {
            a -> {
              c,
              d
            },
            b -> e
          }
        };
      }

      \def\p#1{\tikz[overlay]{\node[inner sep=1pt,red,below] (x) at
          (-.55ex,-7pt) {\scriptsize#1};\draw [red,{|[sep,scale=.6]}-{To[scale=.6]}] (-.55ex,-1pt) -- (x);}}
      
      \foreach \n/\bag/\dir in {r/{3\p1}/180, a/{1\p2,3\p1}/0, c/{1\p2,2\p3,3\p1}/0, d/{1\p2,4\p3,3\p1}/90, b/{3\p1,5\p2}/180, e/{5\p2,6\p1,7\p3}/180} {
        \node (x) at ([shift={(\dir-180:7mm)}]\n.center)[anchor=\dir,inner sep=0pt] {$\{\bag\}$};
        \path ([yshift=-11pt]x.south);
        \draw [{|[sep]}-To,gray] (\n) -- (x.\dir);
      }
    \end{tikzpicture}
      \caption{An example graph~$H$ together with a tree decomposition
      for it, consisting of the tree~$T$ and the bag function~$B$
      indicated using the small gray mapping arrows. A consistent
      numbering $p$ is indicated in red.}
      \label{fig:ex:decomp}
    \end{figure}

    The consistent numbering indicated in Figure~\ref{fig:ex:decomp}
    is obtained by mapping the vertices in the root node's bag (just
    $3$ in the example) to the index~$1$, so $p(3) = 1$. For the child node~$a$, the bag
    $\{1,3\}$ contains a new vertex, namely~$1$, which gets the next
    free index, in this case $p(1) = 2$. In the same way, for the
    other child~$b$ of the root, the new vertex~$5$ also gets the
    index~$2$. For the leaves, in the bags of $c$ and~$d$ we just have
    an additional vertex, which gets the last free index and, thus,
    $p(2) = 3$ and $p(4)=3$. For $e$ with $B(e) = \{5,6,7\}$, we must
    \emph{reuse} a number for the first time: from $B(b)$ to $B(e)$,
    the number $3$ drops out of the bag and, thus, we can (even must)
    reuse its index (which was $1$) for one of the nodes $6$
    or~$7$. Let us set $p(6) = 1$ and $p(7) =3$.     
    
    Let us now define $\phi_{H,T,B}$. We may assume $V(H) =
    \{1,\dots,|V(H)|\}$. 
    The first step is to bind all vertices of~$H$ to weak variables
    using $\exists \dot x_1\cdots \exists \dot x_{|V(H)|}
    \bigwedge_{i\neq j} \dot x_i \neq \dot x_j \land \psi$, where
    $\psi$ must now express that the bound elements form an
    embedding. To achieve this, we build $\psi$ recursively, starting
    at the root $r$ of~$T$ and $\psi = \psi_r$.

    For our example, we  would have: 
    \begin{align*}
      \textstyle \phi_{H,T,B} = \exists \dot x_1\cdots \exists \dot x_7
      \bigwedge_{i\neq j} \dot x_i \neq \dot x_j \land \psi_r. 
    \end{align*}

    For any node $n \in
    V(T)$, let the elements of the set $B(n)$ be named $b_1$ to $b_s$
    (these are just temporary names that have nothing to do with the
    consisting numbering~$p$) and let the first~$t$ of them be
    new, that is, not present in the parent bag (for the root, $s=t$
    and all elements are new; if there are no new elements,
    $t=0$). The formula $\psi_n$ will now express 
    the following: First, it binds the new elements using strong
    variables that are made equal to the weak variables representing the
    elements in the input structure. This makes the new strong
    variables disjoint from one another and also from from all other
    (images of) vertices of~$H$. Second, we check that for all edges
    $\{x,y\} \in E(H)$ between elements $x$ and $y$ of $B(n)$, that their
    images (which we have been bound to strong variables) are also connected in
    the input structure. Third, we require that these properties also
    hold for all children of~$n$. In symbols, we set:
    \begin{align*}
      \psi_n = \exists v_{p(b_1)} \cdots \exists v_{p(b_t)}\bigl(
      &v_{p(b_1)} = \dot x_{b_1} \land \cdots 
      \land v_{p(b_t)} = \dot x_{b_t} \land {} \\
      &\textstyle \bigwedge_{x,y \in B(n), \{x,y\} \in E(H)} E v_{p(x)}
      v_{p(y)} \land {} \\
      &\textstyle \bigwedge_{c \in
      \operatorname{children}(n)} \psi_c \bigr).
    \end{align*}

    For our example, let us start with the root~$r$. Here, we have
    $B(r) = \{3\}$ and $p(3) =1$ and there are no edges between the
    vertices in the bag (there is just one vertex, after all). This
    yields: $\psi_n= \exists v_1 (v_1 = \dot x_3 \land \psi_a \land
    \psi_b)$.

    For the node~$a$, a new node ($1$) enters the bag $B(a)$ with
    index $2 = p(1)$, but there are no intra-bag edges, so $\psi_a =
    \exists v_2 (v_2 = \dot x_1 \land \psi_c \land \psi_d)$.

    For the node~$b$ the situation is very similar, but there is now
    an edge $\{3,5\}$ in~$H$. This means that we must check that the
    nodes~$v_1$, representing~$3$, and $v_2$, representing~$5$, are
    connected in the input structure. This yields $\psi_b=
    \exists v_2 (v_2 = \dot x_5 \land E v_1 v_2\land \psi_e)$.

    For the node~$c$, we only have one new node ($2$) with a new index
    ($3 = p(2)$), but now there are two intra-bag edges in $H$, namely
    $\{1,2\} \in E(H)$ and $\{2,3\} \in E(H)$. This yields: $\psi_c =
    \exists v_3 (v_3 = \dot x_2 \land E v_3 v_1 \land E v_3 v_2)$,
    where $E v_1 v_3$ checks whether for $\{2,3\} \in E(H)$ there is a
    corresponding edge in the input structure (recall that $p(2) = 3$
    and $p(3) = 1$) and $E v_3 v_2$ checks the same for $\{1,2\}$.

    In a similar way, we get $\psi_d = \exists v_3 (v_3 = \dot x_4
    \land E v_3 v_1 \land E v_3 v_2)$ and observe that the only
    difference is that $v_3$ is made equal to $\dot x_4$ instead of
    $\dot x_2$, the rest is the same.

    Finally, for the node~$e$, we bind two strong variables since
    there are two new vertices ($6$ and~$7$), but we reuse variable
    $v_1$ for $6$ since the vertex~$3$ that used to have index~$1$ has
    dropped out of the bag. We get $\psi_e = \exists v_1 \exists
    v_3(v_1 = \dot x_6 \land v_3 = \dot x_7 \land E v_1 v_2 \land E
    v_1 v_3 \land E v_2 v_3)$.

    Putting it all together, we get the following $\phi_{H,T,B}$,
    whose structure closely mirrors~$T$'s:
    \begin{align*}
      \exists \dot x_1\cdots \exists \dot x_7
      &\textstyle \bigwedge_{i\neq j} \dot x_i \neq \dot x_j \land {} \\
      & \exists v_1 (v_1 = \dot x_3 \land {} \\
      &\quad  \exists v_2 (v_2 = \dot x_1 \land {} \\
      &\qquad \exists v_3 (v_3 = \dot x_2 \land E v_3 v_1 \land E v_3 v_2)\land {} \\
      &\qquad \exists v_3 (v_3 = \dot x_4 \land E v_3 v_1 \land E v_3
        v_2) )\land {} \\
      &
      \exists v_2 (v_2 = \dot x_5 \land E v_1 v_2\land {} \\
      &\qquad
          \exists v_1 \exists v_3(v_1 = \dot x_6 \land v_3 = \dot x_7 \land E v_1 v_2 \land E v_1 v_3 \land E v_2 v_3)
        )
      ).
    \end{align*}
    It remains to argue that $\phi_{H,T,B}$ has the claimed
    properties. Clearly, by construction, the strong quantifier
    rank and number of strong bound variables are as claimed. The
    semantic correctness also follows easily from the construction: If
    the input structure is a model of the formula then, clearly, the
    assignments of the $\dot x_i$ to elements of the universe form an
    embedding since for every edge $\{u,v\} \in E(H)$ somewhere in the
    formula we test whether $E v_{p(u)} v_{p(v)}$ holds where
    $v_{p(u)}$ is equal to $\dot x_u$ and $v_{p(v)}$ to $\dot
    x_v$. The other way round, given a model of the formula, any
    assignment to the $\dot x_i$ that makes it true is an embedding
    since, first, we require that all $\dot x_i$ are different and we
    require $E v_{p(u)} v_{p(v)}$ for all $\{u,v\} \in E(H)$. This
    concludes the proof of the claim.
  \end{proof}

  With the claim established, we can now easily derive the statement
  of the theorem. To show $\PLang{emb}_{\operatorname{td}\le c}
  \in \Para\Class{AC}^0$, we must present a family
  $(\phi_H)_{H\in\struc[\tau],\operatorname{td}(H)\le c}$ that
  describes $\PLang{emb}_{\operatorname{td}\le c}$ and that has
  bounded quantifier rank. Clearly, we can just set $\phi_H$ to
  $\phi_{H,T,B}$ where $(T,B)$ is a tree-depth decomposition of $H$ of
  depth~$c$ (which must exist by the assumption that
  $\operatorname{td}(H) \le c$). The second item of the claim
  immediately tells us that all $\phi_H$ will have a strong quantifier
  rank of at most~$c$; and we can use the characterization of
  $\Para\Class{AC}^{0}$ from Fact~\ref{fact-ac0}.
  For the second statement,  $\PLang{emb}_{\operatorname{tw}\le c}
  \in \Para\Class{AC}^{0\uparrow}$, we use a different family
  $(\psi_H)_{H\in\struc[\tau],\operatorname{tw}(H)\le c}$, this time
  setting $\psi_H$ to 
  $\phi_{H,T,B}$ where $(T,B)$ is a tree decomposition of $H$ of
  width~$c$. Now the third item of the claim gives us the bound on the
  number of strong variables; and we can use the characterization of
  $\Para\Class{AC}^{0\uparrow}$ from Theorem~\ref{thm-ac0up}.
\end{proof}

\section{Conclusion}

In the present paper, we showed how the color coding technique can be 
turned into a powerful tool for parameterized descriptive complexity
theory. This tool allows us to show that important results from
parameterized complexity theory -- like the fact that the embedding
problem for graphs of bounded tree width lies in $\Class{FPT}$ --
follow just from the syntactic structure of the formulas that describe
the problem. 

In all our syntactic characterizations it was important that variables
or color predicates were not allowed to be within a universal
scope. The reason was that literals, disjunctions, conjunctions, and
existential quantifiers all have what we called the \emph{small witness
  property,} which universal quantifiers do \emph{not} have. However,
there are other quantifiers, from more powerful logics that we did not
explore, that also have the small witness property. An example are
operators that test whether there is a path of length at most~$k$ from
one vertex to another for some fixed~$k$: if such a path exists, its
vertices form a ``small witness.'' Weak variables may be used inside
these operators, leading to broader classes of problems that can be 
described by families of bounded strong quantifier rank. On
the other hand, we cannot add the full transitive closure operator
$\op{tc}$ (for which it is well-known that $\Class{FO}[\op{tc}] =
\Class{NL}$) and hope that Theorems 
\ref{thm-cc-intro} and~\ref{theorem-strong-character} still hold: If
this were the case, we should be able to turn a formula that uses two
colors $C_1$ and~$C_2$ to express that there are two vertex-disjoint
paths between two vertices into a $\Class{FO}[\op{tc}]$ formula --
thus proving the unlikely result that the $\Class{NP}$-hard disjoint
path problem is in~$\Class{NL}$.

Another line of inquiry into the descriptive complexity of
parameterized problems was already started in the repeatedly cited
paper by Chen et al.~\cite{ChenFH2017}: They give first syntactic
properties for families of formulas describing weighted model checking
problems
that imply membership in $\Para\Class{AC}^0$. We believe that it might
be possible to base an alternative notion of weak quantifiers on these 
syntactic properties. Ideally, we would like to prove a theorem
similar to Theorem~\ref{theorem-strong-character} in which there are
just more quantifiers that count as weak and, hence, even more
families have bounded strong quantifier rank. This would allow us to
prove for even more problems that they lie in $\Class{FPT}$ just
because of the syntactic structure of the natural formula families
that describe them.

\bibliography{2018-color-coding}

\begin{thebibliography}{10}

\bibitem{AlonYZ95}
Noga Alon, Raphael Yuster, and Uri Zwick.
\newblock Color-coding.
\newblock {\em Journal of the ACM}, 42(4):844--856, 1995.
\newblock \href {http://dx.doi.org/10.1145/210332.210337}
  {\path{doi:10.1145/210332.210337}}.

\bibitem{BannachST2015}
Max Bannach, Christoph Stockhusen, and Till Tantau.
\newblock Fast parallel fixed-parameter algorithms via color coding.
\newblock In {\em Proceedings of the Tenth International Symposium on
  Parameterized and Exact Computation ({IPEC} 2015)}, pages 224--235, 2015.
\newblock \href {http://dx.doi.org/10.4230/LIPIcs.IPEC.2015.224}
  {\path{doi:10.4230/LIPIcs.IPEC.2015.224}}.

\bibitem{BannachT2016}
Max Bannach and Till Tantau.
\newblock Parallel multivariate meta-theorems.
\newblock In {\em Proceedings of the Eleventh International Symposium on
  Parameterized and Exact Computation ({IPEC} 2016)}, pages 4:1--4:17, 2016.
\newblock \href {http://dx.doi.org/10.4230/LIPIcs.IPEC.2016.4}
  {\path{doi:10.4230/LIPIcs.IPEC.2016.4}}.

\bibitem{BannachT2018}
Max Bannach and Till Tantau.
\newblock Computing hitting set kernels by {AC$^0$}-circuits.
\newblock In {\em Proceedings of the 35th Symposium on Theoretical Aspects of
  Computer Science (STACS 2018)}, pages 9:1--9:14, 2018.
\newblock \href {http://dx.doi.org/10.4230/LIPIcs.STACS.2018.9}
  {\path{doi:10.4230/LIPIcs.STACS.2018.9}}.

\bibitem{ChenM2015}
Hubie Chen and Moritz M{\"u}ller.
\newblock The fine classification of conjunctive queries and parameterized
  logarithmic space.
\newblock {\em ACM Transactions on Computation Theory}, 7(2):7:1--7:27, 2015.
\newblock \href {http://dx.doi.org/10.1145/2751316}
  {\path{doi:10.1145/2751316}}.

\bibitem{ChenFH2017}
Yijia Chen, J{\"{o}}rg Flum, and Xuangui Huang.
\newblock Slicewise definability in first-order logic with bounded quantifier
  rank.
\newblock In {\em Proceedings of the 26th {EACSL} Annual Conference on Computer
  Science Logic ({CSL} 2017)}, pages 19:1--19:16, 2017.
\newblock \href {http://dx.doi.org/10.4230/LIPIcs.CSL.2017.19}
  {\path{doi:10.4230/LIPIcs.CSL.2017.19}}.

\bibitem{ChenF2018}
Yijia Chen and Jörg Flum.
\newblock Tree-depth, quantifier elimination, and quantifier rank.
\newblock In {\em Proceedings of the 33rd Annual ACM/IEEE Symposium on Logic in
  Computer Science (LICS 2018)}, pages 225--234. ACM, 2018.
\newblock \href {http://dx.doi.org/10.1145/3209108.3209160}
  {\path{doi:10.1145/3209108.3209160}}.

\bibitem{DasER2018}
Bireswar Das, Murali~Krishna Enduri, and I.~Vinod Reddy.
\newblock On the parallel parameterized complexity of the graph isomorphism
  problem.
\newblock In {\em Proceedings of the Twelfth International Conference and
  Workshop on Algorithms and Computation (WALCOM 2018)}, pages 252--264.
  Springer, 2018.
\newblock \href {http://dx.doi.org/10.1007/978-3-319-75172-6_22}
  {\path{doi:10.1007/978-3-319-75172-6_22}}.

\bibitem{ErdosR60}
Paul Erd{\H o}s and Richard Rado.
\newblock Intersection theorems for systems of sets.
\newblock {\em Journal of the London Mathematical Society}, 1(1):85--90, 1960.
\newblock \href {http://dx.doi.org/10.1112/jlms/s1-35.1.85}
  {\path{doi:10.1112/jlms/s1-35.1.85}}.

\bibitem{Fagin1974}
Ronald Fagin.
\newblock Generalized first-order spectra and polynomial-time recognizable
  sets.
\newblock {\em Complexity of Computation}, 7:43--74, 1974.

\bibitem{FlumG2003}
Jörg Flum and Martin Grohe.
\newblock Describing parameterized complexity classes.
\newblock {\em Information and Computation}, 187(2):291--319, December 2003.
\newblock \href {http://dx.doi.org/10.1016/S0890-5401(03)00161-5}
  {\path{doi:10.1016/S0890-5401(03)00161-5}}.

\bibitem{FlumG2006}
Jörg Flum and Martin Grohe.
\newblock {\em Parameterized Complexity Theory}.
\newblock Texts in Theoretical Computer Science. Springer, 2006.
\newblock \href {http://dx.doi.org/10.1007/3-540-29953-X}
  {\path{doi:10.1007/3-540-29953-X}}.

\bibitem{Gaifman1982}
Haim Gaifman.
\newblock {On Local and Non-Local Properties}.
\newblock In {\em Proceedings of the Herbrand Symposium, Logic Colloquium
  1981}, pages 105--135. North Holland, 1982.

\bibitem{HuffnerWZ08}
Falk Hüffner, Sebastian Wernicke, and Thomas Zichner.
\newblock Algorithm engineering for color-coding with applications to signaling
  pathway detection.
\newblock {\em Algorithmica}, 52(2):114--132, 2008.
\newblock \href {http://dx.doi.org/10.1007/s00453-007-9008-7}
  {\path{doi:10.1007/s00453-007-9008-7}}.

\bibitem{Immerman1991}
Neil Immerman.
\newblock {$\Class{DSPACE}[n^k] = \Class{VAR}[k+1]$}.
\newblock In {\em Proceedings of the Sixth Annual Structure in Complexity
  Theory Conference}, pages 334--340, 1991.
\newblock \href {http://dx.doi.org/10.1109/SCT.1991.160278}
  {\path{doi:10.1109/SCT.1991.160278}}.

\bibitem{Immerman1998}
Neil Immerman.
\newblock {\em Descriptive Complexity}.
\newblock Springer, 1998.

\bibitem{PilipczukST18}
Michał Pilipczuk, Sebastian Siebertz, and Szymon Toru{\'{n}}czyk.
\newblock Parameterized circuit complexity of model-checking on sparse
  structures.
\newblock In {\em Proceedings of the 33rd Annual ACM/IEEE Symposium on Logic in
  Computer Science (LICS 2018)}, pages 789--798, 2018.
\newblock \href {http://dx.doi.org/10.1145/3209108.3209136}
  {\path{doi:10.1145/3209108.3209136}}.

\bibitem{Vollmer1999}
Heribert Vollmer.
\newblock {\em Introduction to Circuit Complexity -- {A} Uniform Approach}.
\newblock Texts in Theoretical Computer Science. An {EATCS} Series. Springer,
  1999.
\newblock \href {http://dx.doi.org/10.1007/978-3-662-03927-4}
  {\path{doi:10.1007/978-3-662-03927-4}}.

\end{thebibliography}

\end{document}